\newtheorem{theorem}{Theorem}
\newtheorem{lemma}{Lemma}
\newtheorem{definition}{Definition}
\newtheorem{corollary}{Corollary}
\newcommand{\garb}{\mathrm{Garbage}}
\newcommand{\block}[1]{\mathcal{B}\left[#1\right]}
\newcommand{\blockApprox}[2]{\mathcal{B}_{#2}\left[#1\right]}
\newcommand{\state}[3]{\mathrm{STATE}_{#1,#2,#3}}
\newcommand{\data}[3]{\mathrm{D}_{#1,#2,#3}}
\newcommand{\datadirty}[3]{\mathrm{D}_{#1,#2,#3\;\text{dirty}}}
\newcommand{\dataU}[3]{\mathrm{DU}_{#1,#2,#3}}
\newcommand{\dataUdirty}[3]{\mathrm{DU}_{#1,#2,#3\;\text{dirty}}}
\newcommand{\normEW}[1]{\|#1\|_{\text{EW}}}
\newcommand{\normSH}[1]{\|#1\|_{\text{SC}}}
\newcommand{\nnz}[1]{\text{nnz}\left[#1\right]}
\newcommand{\qubits}[1]{n_{#1}}
\global\long\def\onebody{\mathrm{One}}%
\global\long\def\majorana{\gamma}%
\global\long\def\twobody{\mathrm{Two}}%
\global\long\def\image#1{\vcenter{\hbox{\includegraphics{Tikz/#1}}}}
\global\long\def\ii{\mathcal{I}}%
\global\long\def\t{\mathrm{T}}%
\global\long\def\Ustate{\mathrm{STATE}}%
\global\long\def\Uselect{\mathrm{SELECT}}%
\global\long\def\ket#1{|#1\rangle }%
\global\long\def\bra#1{\langle #1|}%
\global\long\def\braket#1#2{\left\langle #1|#2\right\rangle }%
\global\long\def\ketbra#1#2{\ket{#1}\bra{#2}}%
\newcommand{\Microsoft}[0]{Microsoft Quantum, Redmond, Washington 98052, USA}
\date{\today}
\begin{document}
	\title{Supplementary Material for \textit{Quantum Computing Enhanced Computational Catalysis}}
	\author{Vera von Burg}
	\affiliation{Laboratorium f\"ur Physikalische Chemie, ETH Z\"urich, Vladimir-Prelog-Weg 2, 8093
		Zürich, Switzerland}
	\author{Guang Hao Low}
	\affiliation{\Microsoft}
	\author{Thomas H\"aner}
	\author{Damian S. Steiger}
	\affiliation{Microsoft Quantum, 8038 Z\"urich, Switzerland}
	\author{\\Markus Reiher}
	\affiliation{Laboratorium f\"ur Physikalische Chemie, ETH Z\"urich, Vladimir-Prelog-Weg 2, 8093 Zürich, Switzerland}
	\author{Martin Roetteler}
	\author{Matthias Troyer}
	\affiliation{\Microsoft}
	\maketitle
	\tableofcontents
	
	\newpage
	\section{General notes on the [Ru] catalyst}
	We chose seven key intermediates and transition states from the supporting information of Wesselbaum et al.~\cite{wesselbaum15}, therein labeled as complex structures II, II-III, V, VIII, VIII-IX, IX, and XVIII. These structures had been optimized with the M06-L/def2-SVP combination of exchange-correlation density functional and basis set, and single-point M06-L/def2-TZVP energies were obtained~\cite{wesselbaum15} which we used for the diagram in the main text. In this work, we continue to use the roman numerals as labels. Hyphenated labels (i.e., II-III and VIII-IX) correspond to transition states whereas the others are stable intermediates. The Cartesian coordinates of intermediate I were erroneous in the supporting information of the original paper and we obtained the correct ones through a private communication~\cite{holscher19} (see~\cref{sec:coords-I}). All of the complexes are monocations and were considered in the lowest-energy singlet state.
	
	\section{DFT calculations}
	
	To be able to obtain relative reaction energies (see~\cref{tab:dftEFormula}), we optimized the small molecules CO\textsubscript{2}, H\textsubscript{2}, H\textsubscript{2}O, THF, and methanol with \textsc{Gaussian09}, revision D.01 including density fitting and the \textit{UltraFine} keyword for the integration grid in accordance with the supporting information of Ref.~\cite{wesselbaum15}. For all other density functional theory (DFT) calculations reported herein we employed \textsc{Turbomole}~\cite{ahlrichs89}, version 7.0.2. We then carried out single-point, unrestricted PBE~\cite{perdew96} and PBE0~\cite{perdew96b} calculations in the def2-TZVP~\cite{weigend05} basis for each complex and the small molecules on the M06-L/def2-SVP structures reported in the supporting information of Ref.~\cite{wesselbaum15} or the newly obtained ones in the case of the small molecules, respectively. To ensure that we obtained the correct global minima in the molecular orbital coefficient parameter space, we perturbed the $\alpha$ and $\beta$ orbitals of the converged calculation with the orbital steering protocol of Ref.~\cite{vaucher17}. We repeated the respective calculation with the perturbed orbitals as starting orbitals to probe whether a solution of lower energy could be obtained.

	We additionally optimized the structures with the
	PBE/def2-TZVP combination of density functional and basis set and carried out frequency calculations to ensure we had obtained the correct type of stationary points (no imaginary frequencies for intermediates, one for transition states).
	\subsection{Single-point M06-L energies for small molecules}
	For comparison, we report here the M06-L/def2-SVP electronic energies for the small molecules which we re-optimized.
	\begin{table}[htb]
		\caption{M06-L/def2-SVP electronic energies for the optimized structures given in Hartree atomic units.}
		\begin{tabular}{l @{\qquad} r}
			\hline \hline
			Compound & $E^{\mathrm{el}}_{\mathrm{M06-L}}$\\\colrule
			H$_2$ & -1.16721 \\
			H$_2$O & -76.35053 \\
			CO$_2$ &  -188.43534 \\
			CH$_3$OH &  -115.61382 \\
			THF & -232.24308 \\
			\botrule
		\end{tabular}
	\end{table}
	
	\newpage
	\subsection{Single-point PBE and PBE0 electronic energies and PBE electronic energies for fully optimized structures}
	The electronic energies of the unrestricted, single-point PBE/def2-TZVP and PBE0/def2-TZVP DFT calculations for the complexes and remaining compounds as well as the relative reaction energies of the complexes are collected in~\cref{tab:dftEnergies-sp} and the ones of the re-optimized structures in~\cref{tab:dftEnergies-opt}. The relative reaction energies reported in these Tables were obtained in accordance with the supporting information of Ref.~\cite{wesselbaum15} and their formulaic calculation is given in~\cref{tab:dftEFormula}. We employ the double slash notation where the labels before the double slash denote the combination of exchange-correlation functional and basis set for the single-point calculation whereas the exchange-density functional and basis set employed for the structure optimization is given after the double slash.
	
	\begin{table}[htb]
		\caption{\label{tab:dftEFormula} Formulae for the calculation of relative reaction energies $\Delta E_{\mathrm{rel}}^{\mathrm{el}}$ for the complexes, where $E^{\mathrm{el}}(i)$ is the electronic energy of compound $i$ computed with a certain combination of exchange-correlation density functional and basis set.}
		\begin{ruledtabular}
			\begin{tabular}{ll}
				Complex & Calculation of $\Delta E_{\mathrm{rel}}^{\mathrm{el}}$ \\\hline
				I & 0.0 \\
				II & $E^{\mathrm{el}}(\mathrm{II}) + E^{\mathrm{el}}(\mathrm{THF}) - E^{\mathrm{el}}(\mathrm{I})  - E^{\mathrm{el}}(\mathrm{CO_2})$ \\
				II-III & $E^{\mathrm{el}}$(II-III) + $E^{\mathrm{el}}(\mathrm{THF}) - E^{\mathrm{el}}(\mathrm{I})  - E^{\mathrm{el}}(\mathrm{CO_2})$ \\
				V & $E^{\mathrm{el}}(\mathrm{V}) + E^{\mathrm{el}}(\mathrm{H}_2) - E^{\mathrm{el}}(\mathrm{I})  - E^{\mathrm{el}}(\mathrm{CO_2})  $ \\
				VIII & $E^{\mathrm{el}}(\mathrm{VIII}) + E^{\mathrm{el}}(\mathrm{THF}) - E^{\mathrm{el}}(\mathrm{H}_2) - E^{\mathrm{el}}(\mathrm{I})  - E^{\mathrm{el}}(\mathrm{CO_2})  $ \\
				VIII-IX & $E^{\mathrm{el}}$(VIII-IX) + $E^{\mathrm{el}}(\mathrm{THF}) - E^{\mathrm{el}}(\mathrm{I})  - E^{\mathrm{el}}(\mathrm{CO_2})  $ \\
				IX & $E^{\mathrm{el}}(\mathrm{IX}) + E^{\mathrm{el}}(\mathrm{THF}) - E^{\mathrm{el}}(\mathrm{I})  - E^{\mathrm{el}}(\mathrm{CO_2})$ \\
				XVIII & $E^{\mathrm{el}}(\mathrm{XVIII}) + E^{\mathrm{el}}(\mathrm{H_2O}) - 2 \cdot E^{\mathrm{el}}(\mathrm{H_2})   - E^{\mathrm{el}}(\mathrm{I})  - E^{\mathrm{el}}(\mathrm{CO_2})$ \\
			\end{tabular}
		\end{ruledtabular}
	\end{table}

	\begin{table}[htb]
		\centering
		\caption{PBE/def2-TZVP//M06-L/def2-SVP and PBE0/def2-TZVP//M06-L/def2-SVP electronic single-point energies $E^{\mathrm{el}}$ and relative reaction energies $\Delta E_{\mathrm{rel}}^{\mathrm{el}}$ given in Hartree atomic units.  \label{tab:dftEnergies-sp}}
		\begin{ruledtabular}
			\begin{tabular}{ddddd}
				\text{Compound} & \text{E}^{\mathrm{el}}_{\mathrm{PBE}} & \Delta \text{E}_{\mathrm{rel}}^{\mathrm{el,PBE}}  &  \text{E}^{\mathrm{el}}_{\mathrm{PBE0}} & \Delta \text{E}_{\mathrm{rel}}^{\mathrm{el,PBE0}}  \\\hline
				\text{I} & -2936.85035 & \multicolumn{1}{c}{0.0} & -2937.09522& \multicolumn{1}{c}{0.0} \\
				\text{II} & -2893.07375 & 0.01824933 & -2893.27797& 0.01845670 \\
				\text{II-III} & -2893.06593 & 0.02630769 & -2893.26673 & 0.02969426 \\
				\text{V} &  -3124.16485 & -0.00142320  &-3124.39830 & -0.00479493 \\
				\text{VIII} &  -2894.26852 & -0.01061211&-2894.47992 &  -0.01533001 \\
				\text{VIII-IX} &   -2893.06125&  0.03075055 & -2893.26757 & 0.02886085 \\
				\text{IX} & -2893.08016 & 0.01184458 &-2893.28802 & 0.00840579  \\
				\text{XVIII} &   -3051.28933 & -0.00472200 &-3051.53350 & -0.01268354 \\
				\text{H}_2 &  -1.16590&  \multicolumn{1}{c}{ --- } &-1.16817 & \multicolumn{1}{c}{ --- } \\
				\text{H}_2\text{O} & -76.37653 & \multicolumn{1}{c}{ --- }  &-76.37717 & \multicolumn{1}{c}{ --- } \\
				\text{CO}_2 & -188.47898 & \multicolumn{1}{c}{ --- } &-188.46644 & \multicolumn{1}{c}{ --- } \\
				\text{CH}_3\text{OH} &  -115.62907 & \multicolumn{1}{c}{ --- }  &-115.63697 & \multicolumn{1}{c}{ --- } \\
				\text{THF} & -232.23733& \multicolumn{1}{c}{ --- } &-232.26523 & \multicolumn{1}{c}{ --- } \\
			\end{tabular}
		\end{ruledtabular}
	\end{table}

	\begin{table}[htb]
		\caption{PBE/def2-TZVP//PBE/def2-TZVP electronic energies $E^{\mathrm{el,opt}}$ after structure optimization and corresponding relative reaction energies $\Delta E_{\mathrm{rel}}^{\mathrm{el,opt}}$ given in Hartree atomic units. \label{tab:dftEnergies-opt}}
		\begin{ruledtabular}
			\begin{tabular}{ddd}
				\text{Compound} & \text{E}^{\mathrm{el,opt}}_{\mathrm{PBE}} & \Delta \text{E}_{\mathrm{rel}}^{\mathrm{el,PBE,opt}}  \\\hline
				\text{I} &  -2936.85482 & \multicolumn{1}{c}{0.0} \\
				\text{II} & -2893.07854 &  0.01743994 \\
				\text{II-III} & -2893.06962 & 0.02635768 \\
				\text{V} & -3124.17151 & -0.00331186  \\
				\text{VIII} & -2894.27315  & -0.01126172 \\
				\text{VIII-IX} & -2893.06751  & 0.02846465 \\
				\text{IX} & -2893.08609 & 0.00988447 \\
				\text{XVIII} &  -3051.29506  &  -0.00590696 \\
				\text{H}_2 & -1.16591 &  \multicolumn{1}{c}{ --- }  \\
				\text{H}_2\text{O} & -76.37676 & \multicolumn{1}{c}{ --- }  \\
				\text{CO}_2 & -188.47928 & \multicolumn{1}{c}{ --- } \\
				\text{CH}_3\text{OH} & -115.62962  & \multicolumn{1}{c}{ --- }   \\
				\text{THF} & -232.23812 & \multicolumn{1}{c}{ --- } \\
			\end{tabular}
		\end{ruledtabular}
	\end{table}

	\clearpage
	\section{HF and post-HF calculations}\label{sec:postHFCompDet}
	We obtained Hartree-Fock (HF) and Complete Active Space Self-Consistent Field (CAS-SCF)~\cite{roos80,werner85,knowles85,ruedenberg82} molecular orbitals (MOs) in an ANO-RCC-VTZP~\cite{widmark90,roos04} atomic orbital (AO) basis for the light elements and ANO-RCC-VQZP~\cite{roos05} for Ruthenium with \textsc{OpenMolcas}~\cite{fdez.galvan19}.
	We employed a Cholesky Decomposition (CD) of the two-electron repulsion integrals and generated two sets of integrals with decomposition thresholds of $10^{-4}$ and $10^{-8}$, respectively (note that this decomposition which occurs during the quantum chemical calculations is separate from a later truncation of the two-electron integrals in the context of the quantum computing algorithms). To distinguish the two sets of integrals, those that were obtained with a threshold of $10^{-8}$ are labeled as ``highCD'' (e.g., I-highCD-cas5-fb-48e52o). We observed that the choice of this threshold has a negligible effect on the resource estimates, as seen in~\cref{tab:RE-1,tab:RE-2}. The choices for the molecular and atomic orbital bases in this study are summarized in~\cref{tab:bases}.
	
	We performed exploratory calculations in a so-called minimal basis (mb) (ANO-RCC-MB~\cite{widmark90,roos04,roos05} for this study). For quantitative results, it is necessary to employ a much larger AO basis, in this study, the ANO-RCC-VTZP for light elements and ANO-RCC-VQZP for Ruthenium, which we will term the full atomic orbital basis (fb). \cref{tab:NumFunctions} lists the number of one-electron basis functions resulting from the choice of these bases for each complex. Note that the number of MOs equals the number of AOs and the numbers in~\cref{tab:NumFunctions} therefore apply to both types of one-electron basis sets.
	
	The MOs are generally labeled by the type of method with which they have been obtained, i.e. HF and CASSCF. When necessary, the active space employed in the CASSCF calculation is given explicitly through the notation ($N$,$L$) (e.g., CAS(6,6)SCF), where $N$ is the number of electrons and $L$ the number of orbitals in the active space.
	\begin{table}[htb]
		\caption{Overview of different types of one-electron basis sets employed in this study.\label{tab:bases}}
		\begin{ruledtabular}
			\begin{tabular}{lll}
				Basis & Basis set  & Comment\\\hline
				Atomic orbitals & ANO-RCC-MB &  Abbreviation: mb  \\ &$\left.
				\begin{tabular}{@{}l l} ANO-RCC-VTZP for light elements\\ ANO-RCC-VQZP for
				Ruthenium \end{tabular}\right\}$ & \multirow{1}{*}{Abbreviation: fb} \\
				Molecular orbitals & HF  &   \\  & CAS($N$,$L$)SCF & ($N$,$L$): Active space
				of \\ & & $N$ electrons in $L$ orbitals \\
			\end{tabular}
		\end{ruledtabular}
	\end{table}

	\begin{table}[ht]
		\caption{Number of one-electron basis functions for each complex for the two choices of atomic orbital bases. `mb' and `fb' denote the minimal and full atomic orbital basis, respectively, as described in the text and in~\cref{tab:bases}.\label{tab:NumFunctions}}
		\begin{ruledtabular}
			\begin{tabular}{lcc}
				& \multicolumn{2}{c}{Number of basis functions }  \\
				Complex & mb & fb\\\hline
				I & 334 & 2286\\
				II & 316 & 2114\\
				II-III & 316 & 2114\\
				V & 347 & 2348\\
				VIII & 318 & 2142\\
				VIII-IX & 316 & 2114\\
				IX & 316 & 2114\\
				XVIII & 346 & 2374 \\
			\end{tabular}
		\end{ruledtabular}
	\end{table}
	
	\newpage
	\subsection{HF, CASSCF, and DMRG-CI electronic energies}
	The CASSCF molecular orbitals from calculations with a Cholesky Decomposition threshold of the two-electron integrals of $10^{-4}$ were obtained in the following manner: HF orbitals in the ANO-RCC-MB atomic orbital basis were split-localized~\cite{olivares-amaya15} with the Pipek-Mezey method~\cite{pipek89}. From these localized orbitals, we selected the orbitals corresponding to the $4d$ orbitals of Ruthenium, the $2p$ and $2s$ orbitals of Oxygen and Carbon atoms (i.e. the bonding and antibonding $\pi$ orbitals of carbon dioxide and derivatives), the bonding and antibonding $\sigma$ orbitals of H\textsubscript{2} as well as the $s$-orbitals of Hydrogen atoms to evaluate their orbital entanglement and pair-orbital mutual information in an approximate Density Matrix Renormalization Group-Configuration Interaction (DMRG-CI) calculation with maximum bond order $m=800$ and $n=5$ with the \textsc{QCMaquis}~\cite{keller15a} program. We selected the orbitals corresponding to a threshold in \textsc{autoCAS}~\cite{stein16,stein19} as the active space for a subsequent CASSCF calculation. We repeated the approximate DMRG calculation on these new CASSCF orbitals and chose those orbitals selected by the \textsc{autoCAS} program as the final active orbital space. We then expanded these orbitals to the full atomic orbital basis with the \textsc{EXPBAS} module of \textsc{OpenMolcas} and re-optimized them in a final CASSCF calculation.
	For the calculations involving a Cholesky Decomposition threshold of $10^{-8}$, we started the CASSCF calculations directly from the corresponding CASSCF orbital file of the calculations with a threshold of $10^{-4}$ to save computational resources. 
	\newline
	The one- and two-electron integrals of a subset of these orbitals then served as parameters for the Coulomb Hamiltonian, for which the quantum-algorithm resource estimates were obtained. The general procedure for the selection of these active spaces is detailed in~\cref{subsec:ASSelection} whereas the molecular orbitals resulting from this selection for each complex are reproduced in~\cref{sec:ElStruc}.\newline

	\begin{table}[htb]
		\caption{HF, CAS($N_c$,$L_c$)SCF, and DMRG($N_d,L_d$)CI electronic energies in Hartree for the complexes in the full atomic orbital basis obtained with different values of the Cholesky Decomposition threshold. ($N_c$,$L_c$) denotes the active space of the CASSCF calculations and ($N_d,L_d$) the one of the DMRG-CI calculations. For these, $N_c$ and $N_d$ refer to the number of electrons and $L_c$ and $L_d$ to the number of orbitals, respectively.\label{tab:postHFEnergies}}
		\begin{ruledtabular}
			\begin{tabular}{lllllll}
				Complex  & CD & HF & CAS($N_c$,$L_c$)SCF & ($N_c$,$L_c$) & DMRG($N_d,L_d$)CI & $(N_d,L_d)$ \\
				& threshold & Energy & Energy &  & Energy & \\\hline
				I 	& $10^{-4}$ &-7361.315677 & -7361.357885 & (4,5)&  -7361.461381 & (48,52)\\
				I     & $10^{-8}$ &  \multicolumn{1}{c}{ --- }   & -7361.360402 & (4,5)&  \multicolumn{1}{c}{ --- }  & \multicolumn{1}{c}{ --- }\\
				II & $10^{-4}$ & -7317.956458 & -7318.035225 & (8,6) & -7318.123548   & (70,62)\\
				II & $10^{-8}$ &  \multicolumn{1}{c}{ --- }  & -7318.037547 & (8,6) &   \multicolumn{1}{c}{ --- }    & \multicolumn{1}{c}{ --- }\\
				II-III 	 & $10^{-4}$ & -7317.931739 & -7317.999907 & (8,6) & -7318.099062 & (74,65)\\
				II-III         & $10^{-8}$ &  \multicolumn{1}{c}{ --- }  & -7318.002150 & (8,6) &   \multicolumn{1}{c}{ --- }   & \multicolumn{1}{c}{ --- }\\
				V & $10^{-4}$ & -7548.043678 & -7548.214683 & (12,11) &  -7548.296446 & (68,60)	\\
				V & $10^{-8}$ &  \multicolumn{1}{c}{ --- }  & -7548.217104  & (12,11) &   \multicolumn{1}{c}{ --- }   & \multicolumn{1}{c}{ --- }    \\
				VIII 	& $10^{-4}$ & -7319.116649 & -7319.140267 & (2,2) & -7319.234732 & (76,65) \\
				VIII  & $10^{-8}$ &  \multicolumn{1}{c}{ --- }  & -7319.142596 & (2,2) &   \multicolumn{1}{c}{ --- }   & \multicolumn{1}{c}{ --- } \\
				VIII-IX & $10^{-4}$ & -7317.937509 & -7317.971210 & (4,4) &  -7318.066197 & (72,59) \\
				VIII-IX & $10^{-8}$ &  \multicolumn{1}{c}{ --- }  & -7317.973426 & (4,4) &   \multicolumn{1}{c}{ --- }   & \multicolumn{1}{c}{ --- } \\
				IX 	& $10^{-4}$ & -7317.970467 &  -7318.209829	& (16,16) & -7318.303045  & (68,62) \\
				IX    & $10^{-8}$ &  \multicolumn{1}{c}{ --- }  &    -7318.212256     & (16,16) &   \multicolumn{1}{c}{ --- }    & \multicolumn{1}{c}{ --- } \\
				XVIII 	& $10^{-4}$ &  -7475.314796 &  -7475.367378	& (4,4) &-7475.439228 &  (64,65)  \\
				XVIII         & $10^{-8}$ &  \multicolumn{1}{c}{ --- }  &    -7475.369956    & (4,4) &   \multicolumn{1}{c}{ --- }   &  \multicolumn{1}{c}{ --- } \\
			\end{tabular}
		\end{ruledtabular}
	\end{table}
	For each structure, we additionally carried out DMRG-Configuration-Interaction (CI) calculations of the orbitals corresponding to the integral file with the largest active space in order to gain qualitative information about the electronic structure. These calculations were performed with a number of sweeps $n=10$ and a maximum bond dimension $m=1000$ and Fiedler ordering and on the integral files with a CD truncation theshold of $10^{-4}$. From the matrix product states generated in this way, we obtained the largest CI-coefficient through a reconstruction of an approximate CI wave function expansion through the sampling-reconstruction algorithm of Ref.~\cite{boguslawski11}. For the latter algorithm, we employed a CI-threshold of $10^{-6}$ and a CI completeness measure of $10^{-6}$. We confirmed that the choice of the CD theshold does not have a substantial influence on these overlap values as shown in \cref{tab:overlap} for complex IX (which features the smallest overlap of all structures).
	
	\begin{table}[!htb]
		\caption{ Comparison of the overlap $|\braket{\tilde{\psi_0}}{\psi_{\text{trial}}}|^2$ of the dominant single-determinant state $\ket{\psi_{\text{trial}} }$  with the approximate ground state $\tilde{\ket{\psi_0}}$ obtained with DMRG-CI for complex IX. The active space is given by the number of electrons $N$ and orbitals $L$. The maximum bond order dimension $m$ and the number of sweeps $n$ of the parent DMRG-CI calculation is given as well. \label{tab:overlap}}
		\centering
		\begin{ruledtabular}
			\begin{tabular}{llllll}
				Catalyst & Active space & CD             & $m$ & $n$ & Overlap \\  
				structure & ($N,L$)     &  threshold     &     &     &  $|\braket{\tilde{\psi_0}}{\psi_{\text{trial}}}|^2$ \\\hline
				IX & (68,62) & $10^{-4}$ & 500 & 8 & 0.8108  \\
				IX & (68,62) & $10^{-8}$ & 500 & 8 & 0.8107 \\ 
				IX & (68,62) & $10^{-4}$ & 1000 & 10  & 0.8069 \\
			\end{tabular}
		\end{ruledtabular}
	\end{table}
	
	The energies of the HF, CASSCF, and DMRG-CI calculations are reported in~\cref{tab:postHFEnergies}. Note that no HF energies are tabulated for the calculations involving a tight Cholesky Decomposition threshold since these were started directly from the corresponding CASSCF orbital file of the calculations with the default threshold of $10^{-4}$.\newline

	We studied the effect of the accuracy of the wave function on the overlap calculated after 
	applying the reconstruction algorithm by comparing with DMRG results obtained for smaller bond dimesions and
	energetical orderings of the orbitals on the lattics (we refrain from reporting the explicit total energies of these
	calculations in order not to confuse the reader with less converged energy data). 
	As expected, Fiedler ordering of the orbitals turned out to improve the energy by about
	0.6 mHartree to 0.45 Hartree when comparing calculations with the same bond order but different
	orbital ordering. For our catalyst structures, changing $m$ from a value of 500 to 1000 (with Fiedler ordering) 
	lowers the energy by 1-2 mHartree.
	With respect to the question whether a bond order of 1000 is actually sufficient, we note that
	increasing the value of $m$ from 1000 to 2048 leads to a change in energy of -1.7 mHartree for catalyst structure IX,
	whereas the overlap changes by 0.004. 
	We therefore conclude that the convergence of the DMRG calculations beyond values of 
	the bond dimension $m$ chosen for this work has a negligible effect on the qualitative structure of
	the wave function and hence on the overlap.
	
	\newpage
	\subsection{DMRG-CI electronic energies of linear chains of Hydrogen atoms, II-III, and IX for two-electron integrals at different truncation levels}
	To obtain reasonable thresholds for the truncation parameters $\epsilon_{\text{in}}$ and $\epsilon_{\text{co}}$, we carried out highly accurate DMRG calculations with \textsc{QCMaquis} on integral files where the two-electron integrals had been truncated at varying thresholds as well as the full integrals and report the resulting absolute error in the energy. We performed these calculations on linear chains of Hydrogen atoms of length 2, 4, 6, and 8 with an internuclear separation of 1.4 times the equilibrium H$_2$ bond distance (1.037 $\mathrm{\mathring{A}}$) as well as on the complexes II-III with an active space of (8,6) and IX with an active space of (16,16), respectively. For the catalyst structures, this means we employed the integral files II-III-highCD-cas6-fb-8e6o and IX-highCD-cas16-fb-16e16o of~\cref{tab:integralFiles1,tab:integralFiles2}. The HF orbitals for the integrals of the linear chains of Hydrogen atoms were obtained with \textsc{OpenMolcas} in an ANO-RCC-VDZ basis. To determine the optimal parameters for these calculations, we carried out several ones at varying values of the maximum bond dimension $m$ and number of sweeps $n$, the results of which are summarized in~\cref{tab:conv-II,tab:conv-IX}. The settings of the final production calculations, derived from these Tables, are given in~\cref{tab:trunc-settings}.

	\begin{table}[!htb]
		\caption{DMRG($8$,$6$)-CI energies and their convergence with respect to the bond dimension $m$ and the number of sweeps $n$ for catalyst structure II-III with integrals obtained with a Cholesky Decomposition threshold of $10^{-8}$. The column ``Energy change" collects, for each row, the difference in energy of the given calculation with respect to the energy of the calculation listed in the previous row. The final settings are a maximum bond dimension of 500 and 5 sweeps (second to last column). \label{tab:conv-II}}
		\begin{ruledtabular}
			\begin{tabular}{lcccc}\toprule \toprule
				Integral file &  $m$ & $n$ & DMRG($8$,$6$)-CI & Energy change\\
				&     &     & energy / Hartree &  / mHartree        \\ \hline
				II-III-highCD-cas6-8e6o & 10 & 2 & -7318.000681  & n.a. \\
				II-III-highCD-cas6-8e6o & 250 & 5 & -7318.002150 & -1.5 \\
				II-III-highCD-cas6-8e6o & 500 & 5 & -7318.002150  & 0.0 \\
				II-III-highCD-cas6-8e6o & 500 & 10 & -7318.002150  & 0.0 \\
				\bottomrule \bottomrule
			\end{tabular}
		\end{ruledtabular}
	\end{table}

	\begin{table}[!htb]
		\caption{DMRG($16$,$16$)-CI energies and their convergence with respect to the bond dimension $m$ and the number of sweeps $n$ for catalyst structure IX with integrals obtained with a Cholesky Decomposition threshold of $10^{-8}$.  The column ``Energy change" collects, for each row, the difference in energy of the given calculation with respect to the energy of the calculation listed in the previous row. The final settings are a maximum bond dimension of 2048 and 16 sweeps (second to last column).\label{tab:conv-IX}}
		\begin{ruledtabular}
			\begin{tabular}{lcccc}\toprule \toprule
				Integral file &  $m$ & $n$ & DMRG($16$,$16$)-CI & Energy change \\
				&     &     & energy / Hartree &  / mHartree     \\ \hline
				IX-highCD-cas16-fb-16e16o & 512 & 8 & -7318.21108886 & n.a. \\
				IX-highCD-cas16-fb-16e16o & 1024 & 8 & -7318.21198760  & $-0.90$\\
				IX-highCD-cas16-fb-16e16o & 2048 & 8 & -7318.21221455 &  $-0.23$ \\
				IX-highCD-cas16-fb-16e16o & 2048 & 16 & -7318.21221456 & $-10^{-6}$ \\
				IX-highCD-cas16-fb-16e16o & 4096 & 16 & -7318.21225438 & $-0.04$ \\
				\bottomrule\bottomrule
			\end{tabular}
		\end{ruledtabular}
	\end{table}

	\begin{table}[htb]
		\caption{Summary of the parameters of the DMRG calculations performed to evaluate the error in the ground-state electronic energy due to a truncation of the two-electron integrals of the catalyst structures II-III and IX and the linear chains of Hydrogen atoms H$_2$, H$_4$, H$_6$, and H$_8$. The active space for the DMRG calculations for II-III and IX corresponds to the small and intermediate active space of the respective catalyst structure (see~\cref{tab:integralFiles1,tab:integralFiles2}, the `highCD'  integrals) whereas for the linear chains of Hydrogen atoms, the full orbital space was employed. The maximum bond dimension and number of sweeps of the DMRG calculations are given by $m$ and $n$, respectively. \label{tab:trunc-settings}}
		\begin{ruledtabular}
			\begin{tabular}{lllll}
				System & Active electrons & Active orbitals & $m$ & $n$ \\\hline
				H$_2$ & 2 & 4 & 1000 & 10 \\
				H$_4$ & 4 & 8 & 1000 & 10 \\
				H$_6$ & 6 & 12 & 1000 & 10 \\
				H$_8$ & 8 & 16 & 1000 & 10 \\
				II-III & 8 & 6 & 500 & 5 \\
				IX & 16 & 16 & 2048 & 16 \\
			\end{tabular}
		\end{ruledtabular}
	\end{table}
	
	The resulting DMRG-CI electronic ground-state energies are given in~\cref{tab:truncation-II-III,tab:truncation-IX,tab:truncation-Hydrogen-chains}.
	
	\newpage
	\setlength{\LTcapwidth}{\linewidth}
	\begin{longtable}{clllclll}
		\caption{DMRG-CI electronic energies in Hartree atomic units for the linear chains of Hydrogen atoms H$_2$, H$_4$, H$_6$, and H$_8$ evaluated from low-rank approximations to the double-factorized Hamiltonian at two different truncation schemes. $\epsilon_{\text{in}/\text{co}}$ denotes the value of the truncation parameter $\epsilon$ of either the incoherent truncation ($\epsilon_{\text{in}}$) of the two-electron integrals or the coherent one ($\epsilon_{\text{co}})$. The resulting ground-state DMRG-CI energy of the integral file at a given truncation level is denoted by $E_{\mathrm{el}}^{\mathrm{DMRG-CI}} (\epsilon_{\text{in}})$ for the incoherent truncation scheme and $E_{\mathrm{el}}^{\mathrm{DMRG-CI}} (\epsilon_{\text{co}})$ for the coherent truncation scheme. The energy associated with the non-truncated integrals is given as the energy at the truncation value of $0.0$ Hartree.\label{tab:truncation-Hydrogen-chains}}\\
		\hline\hline
		\multirow{2}{*}{System} & $\epsilon_{\text{in}/\text{co}}$ & $E_{\mathrm{el}}^{\mathrm{DMRG-CI}} (\epsilon_{\text{in}})$ &  $E_{\mathrm{el}}^{\mathrm{DMRG-CI}} (\epsilon_{\text{co}})$ & \multirow{2}{*}{System} & $\epsilon_{\text{in}/\text{co}}$ & $E_{\mathrm{el}}^{\mathrm{DMRG-CI}} (\epsilon_{\text{in}})$ &  $E_{\mathrm{el}}^{\mathrm{DMRG-CI}} (\epsilon_{\text{co}})$ \\
		& / Hartree  & / Hartree & / Hartree  & & / Hartree & / Hartree & / Hartree \\\hline
		H$_2$ & 0.000000 & -1.12158913 & -1.12158913 & H$_4$ & 0.000000 & -2.22021372 & -2.22021372 \\
		H$_2$ & 0.000001 & -1.12158913 & -1.12158913 & H$_4$ & 0.000001 & -2.22021393 & -2.22021374 \\
		H$_2$ & 0.000002 & -1.12158913 & -1.12158913 & H$_4$ & 0.000002 & -2.22021378 & -2.22021374 \\
		H$_2$ & 0.000003 & -1.12158913 & -1.12158913 & H$_4$ & 0.000003 & -2.22021414 & -2.22021386 \\
		H$_2$ & 0.000004 & -1.12158913 & -1.12158913 & H$_4$ & 0.000004 & -2.22021422 & -2.22021397 \\
		H$_2$ & 0.000005 & -1.12158913 & -1.12158913 & H$_4$ & 0.000005 & -2.22021428 & -2.22021397 \\
		H$_2$ & 0.000006 & -1.12158913 & -1.12158913 & H$_4$ & 0.000006 & -2.22021468 & -2.22021396 \\
		H$_2$ & 0.000008 & -1.12158913 & -1.12158913 & H$_4$ & 0.000008 & -2.22021653 & -2.22021397 \\
		H$_2$ & 0.000010 & -1.12158913 & -1.12158913 & H$_4$ & 0.000010 & -2.22021770 & -2.22021393 \\
		H$_2$ & 0.000013 & -1.12158913 & -1.12158913 & H$_4$ & 0.000013 & -2.22021642 & -2.22021394 \\
		H$_2$ & 0.000016 & -1.12158913 & -1.12158913 & H$_4$ & 0.000016 & -2.22022156 & -2.22021378 \\
		H$_2$ & 0.000020 & -1.12158913 & -1.12158913 & H$_4$ & 0.000020 & -2.22022139 & -2.22021378 \\
		H$_2$ & 0.000025 & -1.12158913 & -1.12158913 & H$_4$ & 0.000025 & -2.22022737 & -2.22021417 \\
		H$_2$ & 0.000032 & -1.12158913 & -1.12158913 & H$_4$ & 0.000032 & -2.22022689 & -2.22021422 \\
		H$_2$ & 0.000040 & -1.12158913 & -1.12158913 & H$_4$ & 0.000040 & -2.22022945 & -2.22021428 \\
		H$_2$ & 0.000050 & -1.12158913 & -1.12158913 & H$_4$ & 0.000050 & -2.22023611 & -2.22021468 \\
		H$_2$ & 0.000063 & -1.12158913 & -1.12158913 & H$_4$ & 0.000063 & -2.22023841 & -2.22021633 \\
		H$_2$ & 0.000079 & -1.12158913 & -1.12158913 & H$_4$ & 0.000079 & -2.22023851 & -2.22021711 \\
		H$_2$ & 0.000100 & -1.12158913 & -1.12158913 & H$_4$ & 0.000100 & -2.22023926 & -2.22021770 \\
		H$_2$ & 0.000126 & -1.12158913 & -1.12158913 & H$_4$ & 0.000126 & -2.22025864 & -2.22021642 \\
		H$_2$ & 0.000158 & -1.12160508 & -1.12158913 & H$_4$ & 0.000158 & -2.22026109 & -2.22021956 \\
		H$_2$ & 0.000200 & -1.12160508 & -1.12158913 & H$_4$ & 0.000200 & -2.22025525 & -2.22022226 \\
		H$_2$ & 0.000251 & -1.12164293 & -1.12158913 & H$_4$ & 0.000251 & -2.22025040 & -2.22022578 \\
		H$_2$ & 0.000316 & -1.12164896 & -1.12160508 & H$_4$ & 0.000316 & -2.22025288 & -2.22022734 \\
		H$_2$ & 0.000398 & -1.12165993 & -1.12160508 & H$_4$ & 0.000398 & -2.22025826 & -2.22022919 \\
		H$_2$ & 0.000501 & -1.12165993 & -1.12160508 & H$_4$ & 0.000501 & -2.22026298 & -2.22022954 \\
		H$_2$ & 0.000631 & -1.12165993 & -1.12164293 & H$_4$ & 0.000631 & -2.22028388 & -2.22023802 \\
		H$_2$ & 0.000794 & -1.12166545 & -1.12164293 & H$_4$ & 0.000794 & -2.22023929 & -2.22023851 \\
		H$_2$ & 0.001000 & -1.12166545 & -1.12164896 & H$_4$ & 0.001000 & -2.22041577 & -2.22023742 \\
		H$_2$ & 0.001259 & -1.12167101 & -1.12164896 & H$_4$ & 0.001259 & -2.22045034 & -2.22025864 \\
		H$_2$ & 0.001585 & -1.12195357 & -1.12165993 & H$_4$ & 0.001585 & -2.22075130 & -2.22026109 \\
		H$_2$ & 0.001995 & -1.12223966 & -1.12165993 & H$_4$ & 0.001995 & -2.22085576 & -2.22025509 \\
		H$_2$ & 0.002512 & -1.12223266 & -1.12165993 & H$_4$ & 0.002512 & -2.22097745 & -2.22025040 \\
		H$_2$ & 0.003162 & -1.12223266 & -1.12166545 & H$_4$ & 0.003162 & -2.22179419 & -2.22025566 \\
		H$_2$ & 0.003981 & -1.12226773 & -1.12166545 & H$_4$ & 0.003981 & -2.22188997 & -2.22025826 \\
		H$_2$ & 0.005012 & -1.12293938 & -1.12167101 & H$_4$ & 0.005012 & -2.22213743 & -2.22029173 \\
		H$_2$ & 0.006310 & -1.12456443 & -1.12167101 & H$_4$ & 0.006310 & -2.22269565 & -2.22028388 \\
		H$_2$ & 0.007943 & -1.12457007 & -1.12195357 & H$_4$ & 0.007943 & -2.22264060 & -2.22030357 \\
		H$_2$ & 0.010000 & -1.12521364 & -1.12194658 & H$_4$ & 0.010000 & -2.22413315 & -2.22024464 \\
		H$_2$ & 0.012589 & -1.12478454 & -1.12223266 & H$_4$ & 0.012589 & -2.22570303 & -2.22041314 \\
		H$_2$ & 0.015849 & -1.12599035 & -1.12226773 & H$_4$ & 0.015849 & -2.22713724 & -2.22055674 \\
		H$_2$ & 0.019953 & -1.12599176 & -1.12293938 & H$_4$ & 0.019953 & -2.22977827 & -2.22075130 \\
		H$_2$ & 0.025119 & -1.12789093 & -1.12293938 & H$_4$ & 0.025119 & -2.23257326 & -2.22085576 \\
		H$_2$ & 0.031623 & -1.13295508 & -1.12456443 & H$_4$ & 0.031623 & -2.23862926 & -2.22097745 \\
		H$_2$ & 0.039811 & -1.13508558 & -1.12457007 & H$_4$ & 0.039811 & -2.25219293 & -2.22172800 \\
		H$_2$ & 0.050119 & -1.15524671 & -1.12521364 & H$_4$ & 0.050119 & -2.26286296 & -2.22184603 \\
		H$_2$ & 0.063096 & -1.15524671 & -1.12478454 & H$_4$ & 0.063096 & -2.26411098 & -2.22280707 \\
		H$_2$ & 0.079433 & -1.16115222 & -1.12478867 & H$_4$ & 0.079433 & -2.27453848 & -2.22285026 \\
		H$_2$ & 0.100000 & -1.16973021 & -1.12599176 & H$_4$ & 0.100000 & -2.28592300 & -2.22277314 \\\hline
		H$_6$ & 0.000000 & -3.32158066 & -3.32158066 & H$_8$ & 0.000000 & -4.34617684 & -4.34617684 \\
		H$_6$ & 0.000001 & -3.32158106 & -3.32158069 & H$_8$ & 0.000001 & -4.34617730 & -4.34617686 \\
		H$_6$ & 0.000002 & -3.32158141 & -3.32158070 & H$_8$ & 0.000002 & -4.34617766 & -4.34617687 \\
		H$_6$ & 0.000003 & -3.32158160 & -3.32158073 & H$_8$ & 0.000003 & -4.34617815 & -4.34617688 \\
		H$_6$ & 0.000004 & -3.32158154 & -3.32158076 & H$_8$ & 0.000004 & -4.34617880 & -4.34617690 \\
		H$_6$ & 0.000005 & -3.32158200 & -3.32158075 & H$_8$ & 0.000005 & -4.34617900 & -4.34617692 \\
		H$_6$ & 0.000006 & -3.32158238 & -3.32158079 & H$_8$ & 0.000006 & -4.34617932 & -4.34617694 \\
		H$_6$ & 0.000008 & -3.32158325 & -3.32158084 & H$_8$ & 0.000008 & -4.34617947 & -4.34617696 \\
		H$_6$ & 0.000010 & -3.32158386 & -3.32158086 & H$_8$ & 0.000010 & -4.34618052 & -4.34617700 \\
		H$_6$ & 0.000013 & -3.32158445 & -3.32158093 & H$_8$ & 0.000013 & -4.34618092 & -4.34617706 \\
		H$_6$ & 0.000016 & -3.32158222 & -3.32158095 & H$_8$ & 0.000016 & -4.34618121 & -4.34617709 \\
		H$_6$ & 0.000020 & -3.32158388 & -3.32158102 & H$_8$ & 0.000020 & -4.34618138 & -4.34617716 \\
		H$_6$ & 0.000025 & -3.32158395 & -3.32158112 & H$_8$ & 0.000025 & -4.34618025 & -4.34617723 \\
		H$_6$ & 0.000032 & -3.32158764 & -3.32158109 & H$_8$ & 0.000032 & -4.34618418 & -4.34617720 \\
		H$_6$ & 0.000040 & -3.32159475 & -3.32158111 & H$_8$ & 0.000040 & -4.34619035 & -4.34617730 \\
		H$_6$ & 0.000050 & -3.32160131 & -3.32158169 & H$_8$ & 0.000050 & -4.34619473 & -4.34617740 \\
		H$_6$ & 0.000063 & -3.32161010 & -3.32158159 & H$_8$ & 0.000063 & -4.34619620 & -4.34617764 \\
		H$_6$ & 0.000079 & -3.32161480 & -3.32158184 & H$_8$ & 0.000079 & -4.34620736 & -4.34617808 \\
		H$_6$ & 0.000100 & -3.32161479 & -3.32158215 & H$_8$ & 0.000100 & -4.34621712 & -4.34617875 \\
		H$_6$ & 0.000126 & -3.32162599 & -3.32158266 & H$_8$ & 0.000126 & -4.34622564 & -4.34617877 \\
		H$_6$ & 0.000158 & -3.32165812 & -3.32158330 & H$_8$ & 0.000158 & -4.34623466 & -4.34617920 \\
		H$_6$ & 0.000200 & -3.32167318 & -3.32158434 & H$_8$ & 0.000200 & -4.34624784 & -4.34617919 \\
		H$_6$ & 0.000251 & -3.32167606 & -3.32158422 & H$_8$ & 0.000251 & -4.34625387 & -4.34617953 \\
		H$_6$ & 0.000316 & -3.32176286 & -3.32158239 & H$_8$ & 0.000316 & -4.34628688 & -4.34618063 \\
		H$_6$ & 0.000398 & -3.32171852 & -3.32158342 & H$_8$ & 0.000398 & -4.34621573 & -4.34618099 \\
		H$_6$ & 0.000501 & -3.32171439 & -3.32158481 & H$_8$ & 0.000501 & -4.34627616 & -4.34618128 \\
		H$_6$ & 0.000631 & -3.32171118 & -3.32158902 & H$_8$ & 0.000631 & -4.34629912 & -4.34618081 \\
		H$_6$ & 0.000794 & -3.32173600 & -3.32159451 & H$_8$ & 0.000794 & -4.34629439 & -4.34618028 \\
		H$_6$ & 0.001000 & -3.32189021 & -3.32160091 & H$_8$ & 0.001000 & -4.34628547 & -4.34618546 \\
		H$_6$ & 0.001259 & -3.32193984 & -3.32160671 & H$_8$ & 0.001259 & -4.34640554 & -4.34619096 \\
		H$_6$ & 0.001585 & -3.32202636 & -3.32161480 & H$_8$ & 0.001585 & -4.34649770 & -4.34619721 \\
		H$_6$ & 0.001995 & -3.32251898 & -3.32161759 & H$_8$ & 0.001995 & -4.34663895 & -4.34620128 \\
		H$_6$ & 0.002512 & -3.32283885 & -3.32162047 & H$_8$ & 0.002512 & -4.34653313 & -4.34621092 \\
		H$_6$ & 0.003162 & -3.32301884 & -3.32165725 & H$_8$ & 0.003162 & -4.34747627 & -4.34621651 \\
		H$_6$ & 0.003981 & -3.32324022 & -3.32167319 & H$_8$ & 0.003981 & -4.34793027 & -4.34622811 \\
		H$_6$ & 0.005012 & -3.32388399 & -3.32167315 & H$_8$ & 0.005012 & -4.34849946 & -4.34623794 \\
		H$_6$ & 0.006310 & -3.32508229 & -3.32171495 & H$_8$ & 0.006310 & -4.34944662 & -4.34625163 \\
		H$_6$ & 0.007943 & -3.32543283 & -3.32172159 & H$_8$ & 0.007943 & -4.34993224 & -4.34625462 \\
		H$_6$ & 0.010000 & -3.32541755 & -3.32171487 & H$_8$ & 0.010000 & -4.34921153 & -4.34629028 \\
		H$_6$ & 0.012589 & -3.32710745 & -3.32173575 & H$_8$ & 0.012589 & -4.35177484 & -4.34621821 \\
		H$_6$ & 0.015849 & -3.33011387 & -3.32174741 & H$_8$ & 0.015849 & -4.35336244 & -4.34625426 \\
		H$_6$ & 0.019953 & -3.33237130 & -3.32180721 & H$_8$ & 0.019953 & -4.35525855 & -4.34629940 \\
		H$_6$ & 0.025119 & -3.33587433 & -3.32193798 & H$_8$ & 0.025119 & -4.35984793 & -4.34629900 \\
		H$_6$ & 0.031623 & -3.33689545 & -3.32205172 & H$_8$ & 0.031623 & -4.36466842 & -4.34629188 \\
		H$_6$ & 0.039811 & -3.34292326 & -3.32242166 & H$_8$ & 0.039811 & -4.36972018 & -4.34638023 \\
		H$_6$ & 0.050119 & -3.34354251 & -3.32278773 & H$_8$ & 0.050119 & -4.37083690 & -4.34652382 \\
		H$_6$ & 0.063096 & -3.37708344 & -3.32295246 & H$_8$ & 0.063096 & -4.37667930 & -4.34650941 \\
		H$_6$ & 0.079433 & -3.39476016 & -3.32307469 & H$_8$ & 0.079433 & -4.39344403 & -4.34648766 \\
		H$_6$ & 0.100000 & -3.40138205 & -3.32346661 & H$_8$ & 0.100000 & -4.43190074 & -4.34675691 \\\hline \hline
	\end{longtable}
	
	\begin{table}[htb]
		\caption{DMRG
			-CI electronic energies in Hartree atomic units for the catalyst structure II-III with a CAS(8,6), evaluated from low-rank approximations to the double-factorized Hamiltonian at two different truncation schemes. The base integral file with the non-truncated integrals is II-III-highCD-cas6-fb-8e6o. $\epsilon_{\text{in}/\text{co}}$ denotes the value of the truncation parameter $\epsilon$ of either the incoherent truncation ($\epsilon_{\text{in}}$) of the two-electron integrals or the coherent one ($\epsilon_{\text{co}})$. The resulting ground-state DMRG(8,6)-CI energy of the integral file at a given truncation level is denoted by $E_{\mathrm{el}}^{\mathrm{DMRG-CI}} (\epsilon_{\text{in}})$ for the incoherent truncation scheme and $E_{\mathrm{el}}^{\mathrm{DMRG-CI}} (\epsilon_{\text{co}})$ for the coherent one. The energy associated with the non-truncated integrals is given in the first row at a truncation value of $0.0$ Hartree.\label{tab:truncation-II-III}}
		\begin{ruledtabular}
			\begin{tabular}{lll}
				$\epsilon_{\text{in}/\text{co}}$ & $E_{\mathrm{el}}^{\mathrm{DMRG-CI}} (\epsilon_{\text{in}})$ &  $E_{\mathrm{el}}^{\mathrm{DMRG-CI}} (\epsilon_{\text{co}})$ \\
				/ Hartree  & / Hartree & / Hartree   \\\hline\hline
				0.000000 & -7318.00215013 & -7318.00215013 \\
				0.000100 & -7318.00215864 & -7318.00215187 \\
				0.000126 & -7318.00215503 & -7318.00215187 \\
				0.000158 & -7318.00213043 & -7318.00215274 \\
				0.000200 & -7318.00216079 & -7318.00215274 \\
				0.000251 & -7318.00215679 & -7318.00215864 \\
				0.000316 & -7318.00213123 & -7318.00215864 \\
				0.000398 & -7318.00216705 & -7318.00216409 \\
				0.000501 & -7318.00221220 & -7318.00216409 \\
				0.000631 & -7318.00220470 & -7318.00213948 \\
				0.000794 & -7318.00219296 & -7318.00216986 \\
				0.001000 & -7318.00229250 & -7318.00216625 \\
				0.001259 & -7318.00250109 & -7318.00214070 \\
				0.001585 & -7318.00280990 & -7318.00217604 \\
				0.001995 & -7318.00249959 & -7318.00216705 \\
				0.002512 & -7318.00254140 & -7318.00218788 \\
				0.003162 & -7318.00262988 & -7318.00221220 \\
				0.003981 & -7318.00333674 & -7318.00220470 \\
				0.005012 & -7318.00680238 & -7318.00219296 \\
				0.006310 & -7318.00737955 & -7318.00220866 \\
				0.007943 & -7318.00814157 & -7318.00241234 \\
				0.010000 & -7318.01010856 & -7318.00271304 \\
				0.012589 & -7318.00987156 & -7318.00263355 \\
				0.015849 & -7318.01217928 & -7318.00249959 \\
				0.019953 & -7318.01212307 & -7318.00247483 \\
				0.025119 & -7318.01908891 & -7318.00262988 \\
				0.031623 & -7318.02160123 & -7318.00301686 \\
				0.039811 & -7318.03507773 & -7318.00526335 \\
				0.050119 & -7318.04053844 & -7318.00683148 \\
				0.063096 & -7318.04732502 & -7318.00769067 \\
				0.079433 & -7318.08057598 & -7318.00814157 \\
				0.100000 & -7318.10412106 & -7318.00829295 \\
			\end{tabular}
		\end{ruledtabular}
	\end{table}

	\begin{table}[htb]
		\caption{DMRG-CI electronic energies in Hartree atomic units for the catalyst structure IX with a CAS(16,16), evaluated from low-rank approximations to the double-factorized Hamiltonian at two different truncation schemes.  The base integral file with the non-truncated integrals is IX-highCD-cas16-fb-16e16o. $\epsilon_{\text{in}/\text{co}}$ denotes the value of the truncation parameter $\epsilon$ of either the incoherent truncation ($\epsilon_{\text{in}}$) of the two-electron integrals or the coherent one ($\epsilon_{\text{co}})$. The resulting ground-state DMRG(16,16)-CI energy of the integral file at a given truncation level is denoted by $E_{\mathrm{el}}^{\mathrm{DMRG-CI}} (\epsilon_{\text{in}})$ for the incoherent truncation scheme and $E_{\mathrm{el}}^{\mathrm{DMRG-CI}} (\epsilon_{\text{co}})$ for the coherent one. The energy associated with the non-truncated integrals is given in the first row at a truncation value of $0.0$ Hartree.\label{tab:truncation-IX}}
		\begin{ruledtabular}
			\begin{tabular}{lll}
				$\epsilon_{\text{in}/\text{co}}$ & $E_{\mathrm{el}}^{\mathrm{DMRG-CI}} (\epsilon_{\text{in}})$ &  $E_{\mathrm{el}}^{\mathrm{DMRG-CI}} (\epsilon_{\text{co}})$ \\
				/ Hartree  & / Hartree & / Hartree   \\\hline
				0.000000 & -7318.21221456 & -7318.21221456 \\
				0.000100 & -7318.21230964 & -7318.21221600 \\
				0.000126 & -7318.21234289 & -7318.21221644 \\
				0.000158 & -7318.21237029 & -7318.21221704 \\
				0.000200 & -7318.21240692 & -7318.21221761 \\
				0.000251 & -7318.21246425 & -7318.21221840 \\
				0.000316 & -7318.21251053 & -7318.21221965 \\
				0.000398 & -7318.21259187 & -7318.21222093 \\
				0.000501 & -7318.21268462 & -7318.21222356 \\
				0.000631 & -7318.21281515 & -7318.21222591 \\
				0.000794 & -7318.21297009 & -7318.21222853 \\
				0.001000 & -7318.21312759 & -7318.21223208 \\
				0.001259 & -7318.21330936 & -7318.21223602 \\
				0.001585 & -7318.21365065 & -7318.21224246 \\
				0.001995 & -7318.21397040 & -7318.21224753 \\
				0.002512 & -7318.21431504 & -7318.21225796 \\
				0.003162 & -7318.21484663 & -7318.21227019 \\
				0.003981 & -7318.21581205 & -7318.21228159 \\
				0.005012 & -7318.21688967 & -7318.21230107 \\
				0.006310 & -7318.21781895 & -7318.21233012 \\
				0.007943 & -7318.21914577 & -7318.21235753 \\
				0.010000 & -7318.22001812 & -7318.21239445 \\
				0.012589 & -7318.22199964 & -7318.21243973 \\
				0.015849 & -7318.22428406 & -7318.21248711 \\
				0.019953 & -7318.22385766 & -7318.21254718 \\
				0.025119 & -7318.22690256 & -7318.21263369 \\
				0.031623 & -7318.22795625 & -7318.21273693 \\
				0.039811 & -7318.24170762 & -7318.21286327 \\
				0.050119 & -7318.27394958 & -7318.21307670 \\
				0.063096 & -7318.30682217 & -7318.21319792 \\
				0.079433 & -7318.35577244 & -7318.21351128 \\
				0.100000 & -7318.43110749 & -7318.21376772 \\
			\end{tabular}
		\end{ruledtabular}
	\end{table}

	\newpage
	\subsection{General procedure for the selection of molecular orbitals for the integral files}\label{subsec:ASSelection}
	
	During the selection of the small active spaces for the CASSCF calculations, it became evident that the intermediates and transition states in this study are mainly dynamically correlated, as indicated by e.g. low values of our multi-configurational measure $Z_{s(1)}$~\cite{stein17a}. While the orbitals are sufficiently correlated to allow for the determination of a small active space for the CASSCF calculations, the selection becomes ambiguous for larger active space sizes because of the dominant dynamic correlation that renders the active space concept arbitrary. For the selection of the active spaces for the integral files, we therefore chose a scheme that is as reproducible as possible while still maintaining chemical relevance. For each complex, we selected at least three different active spaces from the valence space of the CAS($N$,$L$)SCF orbitals, termed the \textit{small}, \textit{intermediate}, and \textit{large} active spaces. The small active space consists of the $L$ orbitals of the underlying CAS($N$,$L$)SCF calculation. The intermediate one includes the small active space plus the orbitals which are primarily localized on CO\textsubscript{2}, HCO\textsubscript{2}\textsuperscript{-}, HCOOH, CH\textsubscript{2}OOH\textsuperscript{-}, CH\textsubscript{3}O\textsuperscript{-}, H\textsubscript{2}, H, Ruthenium or in the bonding region between the metal and the ligands. The large one comprises the intermediate active space plus the 18 $\pi$ and 18 $\pi^*$ orbitals of the 6 phenyl groups that are part of the triphos ligand.
	
	To investigate the performance of the quantum algorithms in the limit of very large active spaces, we additionally generated integral files with 100, 150, and 250 active orbitals for the complex XVIII. As a selection guided by manual inspection becomes unfeasible for these active space sizes, we chose to include the $n/2$ orbitals below (and including) the HOMO and the $n/2$ orbitals above it (with $n$=100, 150, 250).

	For complex II, we also generated integral files from HF orbitals to understand the effect of the orbital type on the resource estimates. We selected the active space analogously to the CASSCF active spaces but without the small active space (as it is not defined for the HF orbitals).

	\section{Overview of the integral files}
	
	The files containing the one- and two-electron integrals for a selection of molecular orbitals as defined in~\cref{subsec:ASSelection} are summarized in~\cref{tab:integralFiles1,tab:integralFiles1}, which also lists the atomic and molecular orbital bases used as well as the active spaces for the integrals. The notation for the files is \textit{catalyst-integralAccuracy-MObasis-AObasis-$n_e$e$n_o$o}, where \textit{catalyst} is one of the catalyst structure identifiers I, II, II-III, V, VIII, VIII-IX, IX, or XVIII, \textit{integralAccuracy} denotes the tightness of the CD threshold, with \textit{highCD} standing for a threshold of $10^{-8}$ and no string for a threshold of $10^{-4}$; \textit{MObasis} denotes the molecular orbitals basis, \textit{AObasis} the atomic orbital basis, and $n_e$ and $n_o$ are the number of electrons and orbitals, respectively.
	
	\begin{table}[htb]
		\caption{List of the integral files resulting from the active space selection outlined in~\cref{subsec:ASSelection} for complexes I to VIII with the notation explained in the text. The MO active space is the one of the underlying CASSCF calculation through which the orbitals have been obtained, whereas the integral active space denotes the size of the active space chosen for the integral file under consideration. Calculations were performed with a Cholesky Decomposition of the two-electron integrals with two thresholds of $10^{-4}$ and $10^{-8}$. Files that were obtained with the high-accuracy threshold $10^{-8}$ are labeled ``highCD'', whereas the absence of that label indicates the default threshold of $10^{-4}$. \label{tab:integralFiles1}}
		\begin{ruledtabular}
			\begin{tabular}{lllll}
				Name of file & AO    & MO    & MO           & Integral \\
				& basis & basis & active space & active space \\\hline
				I-cas5-fb-4e5o & fb & CASSCF & (4,5)  & (4,5)  \\
				I-cas5-fb-14e16o & fb & CASSCF & (4,5)  & (14,16)  \\
				I-cas5-fb-48e52o & fb & CASSCF & (4,5)  & (48,52)    \\
				I-highCD-cas5-fb-4e5o & fb & CASSCF & (4,5) & (4,5) \\
				I-highCD-cas5-fb-14e16o & fb & CASSCF & (4,5) & (14,16) \\
				I-highCD-cas5-fb-48e52o & fb & CASSCF & (4,5) & (48,52) \\
				\hline
				II-cas6-fb-8e6o & fb & CASSCF & (8,6) & (8,6) \\
				II-cas6-fb-34e26o & fb & CASSCF & (8,6) & (34,26) \\
				II-cas6-fb-70e62o & fb & CASSCF & (8,6) & (70,62) \\
				II-highCD-cas6-fb-8e6o & fb & CASSCF & (8,6) & (8,6) \\
				II-highCD-cas6-fb-34e26o & fb & CASSCF & (8,6) & (34,26) \\
				II-highCD-cas6-fb-70e62o & fb & CASSCF & (8,6) & (70,62) \\
				II-hf-fb-38e33o  & fb & HF & n.a. & (38,33) \\
				II-hf-fb-74e71o & fb & HF & n.a. & (74,71) \\
				\hline
				II-III-cas6-fb-8e6o & fb & CASSCF & (8,6) & (8,6) \\
				II-III-cas6-fb-38e29o & fb & CASSCF & (8,6) & (38,29)  \\
				II-III-cas6-fb-74e65o & fb &  CASSCF & (8,6) & (74,65) \\
				II-III-highCD-cas6-fb-8e6o & fb & CASSCF & (8,6) & (8,6) \\
				II-III-highCD-cas6-fb-38e29o & fb & CASSCF & (8,6) & (38,29)  \\
				II-III-highCD-cas6-fb-74e65o & fb &  CASSCF & (8,6) & (74,65)\\
				\hline
				V-cas11-fb-12e11o & fb & CASSCF & (12,11) & (12,11)  \\
				V-cas11-fb-32e24o & fb & CASSCF & (12,11) & (32,24)  \\
				V-cas11-fb-68e60o & fb & CASSCF & (12,11) & (68,60)  \\
				V-highCD-cas11-fb-12e11o & fb & CASSCF & (12,11) & (12,11)  \\
				V-highCD-cas11-fb-32e24o & fb & CASSCF & (12,11) & (32,24)  \\
				V-highCD-cas11-fb-68e60o & fb & CASSCF & (12,11) & (68,60)  \\
				\hline
				VIII-cas2-fb-2e2o & fb & CASSCF & (2,2) & (2,2) \\
				VIII-cas2-fb-40e29o & fb & CASSCF & (2,2) & (40,29) \\
				VIII-cas2-fb-76e65o & fb & CASSCF & (2,2) & (76,65) \\
				VIII-highCD-cas2-fb-2e2o & fb & CASSCF & (2,2) & (2,2) \\
				VIII-highCD-cas2-fb-40e29o & fb & CASSCF & (2,2) & (40,29) \\
				VIII-highCD-cas2-fb-76e65o & fb & CASSCF & (2,2) & (76,65) \\
			\end{tabular}
		\end{ruledtabular}
	\end{table}

	\begin{table}[htb]
		\caption{List of the integral files resulting from the active space selection outlined in~\cref{subsec:ASSelection} for complexes VIII-IX to XVIII with the notation explained in the text. The MO active space is the one of the underlying CASSCF calculation through which the orbitals have been obtained, whereas the integral active space denotes the size of the active space chosen for the integral file under consideration. Calculations were performed with a Cholesky decomposition of the two-electron integrals with two thresholds of $10^{-4}$ and $10^{-8}$. Files that were obtained with the high-accuracy threshold $10^{-8}$ are labeled ``highCD'', whereas the absence of that label indicates the default threshold of $10^{-4}$. \label{tab:integralFiles2}}
		\begin{ruledtabular}
			\begin{tabular}{lllll}
				VIII-IX-cas4-fb-4e4o & fb & CASSCF & (4,4) & (4,4) \\
				VIII-IX-cas4-fb-36e23o & fb & CASSCF & (4,4) & (36,23) \\
				VIII-IX-cas4-fb-72e59o & fb & CASSCF & (4,4) & (72,59) \\
				VIII-IX-highCD-cas4-fb-4e4o & fb & CASSCF & (4,4) & (4,4) \\
				VIII-IX-highCD-cas4-fb-36e23o & fb & CASSCF & (4,4) & (36,23) \\
				VIII-IX-highCD-cas4-fb-72e59o & fb & CASSCF & (4,4) & (72,59) \\
				\hline
				IX-cas16-fb-16e16o & fb & CASSCF & (16,16) & (16,16) \\
				IX-cas16-fb-32e26o & fb & CASSCF & (16,16) & (32,26) \\
				IX-cas16-fb-68e62o & fb & CASSCF & (16,16) & (68,62) \\
				IX-highCD-cas4-cas16-fb-16e16o & fb & CASSCF & (16,16) & (16,16) \\
				IX-highCD-cas4-cas16-fb-32e26o & fb & CASSCF & (16,16) & (32,26) \\
				IX-highCD-cas4-cas16-fb-68e62o & fb & CASSCF & (16,16) & (68,62) \\
				\hline
				XVIII-cas4-fb-4e4o & fb & CASSCF & (4,4) & (4,4) \\
				XVIII-cas4-fb-28e20o & fb & CASSCF & (4,4) & (28,20) \\
				XVIII-cas4-fb-64e56o & fb & CASSCF & (4,4) & (64,56) \\
				XVIII-cas4-fb-100e100o & fb &  CASSCF & (4,4) & (100,100) \\
				XVIII-cas4-fb-150e150o & fb &  CASSCF & (4,4) & (150,150) \\
				XVIII-cas4-fb-250e250o & fb &  CASSCF & (4,4) & (250,250) \\
				XVIII-highCD-cas4-fb-4e4o & fb & CASSCF & (4,4) & (4,4) \\
				XVIII-highCD-cas4-fb-28e20o & fb & CASSCF & (4,4) & (28,20) \\
				XVIII-highCD-cas4-fb-64e56o & fb & CASSCF & (4,4) & (64,56) \\
				XVIII-highCD-cas4-fb-100e100o & fb &  CASSCF & (4,4) & (100,100) \\
				XVIII-highCD-cas4-fb-150e150o & fb &  CASSCF & (4,4) & (150,150) \\
				XVIII-highCD-cas4-fb-250e250o & fb &  CASSCF & (4,4) & (250,250) \\
			\end{tabular}
		\end{ruledtabular}
	\end{table}

	\clearpage
	\section{Graphical representation of active orbitals chosen for integral files} \label{sec:ElStruc}
	We provide in the following figures of the active orbitals chosen for each catalyst structure.
	
	\subsection{CAS(4,5)SCF active molecular orbitals chosen for integral files of stable intermediate I}

	\begin{figure*}[htb]
		\centering
		\includegraphics[width=0.9\textwidth]{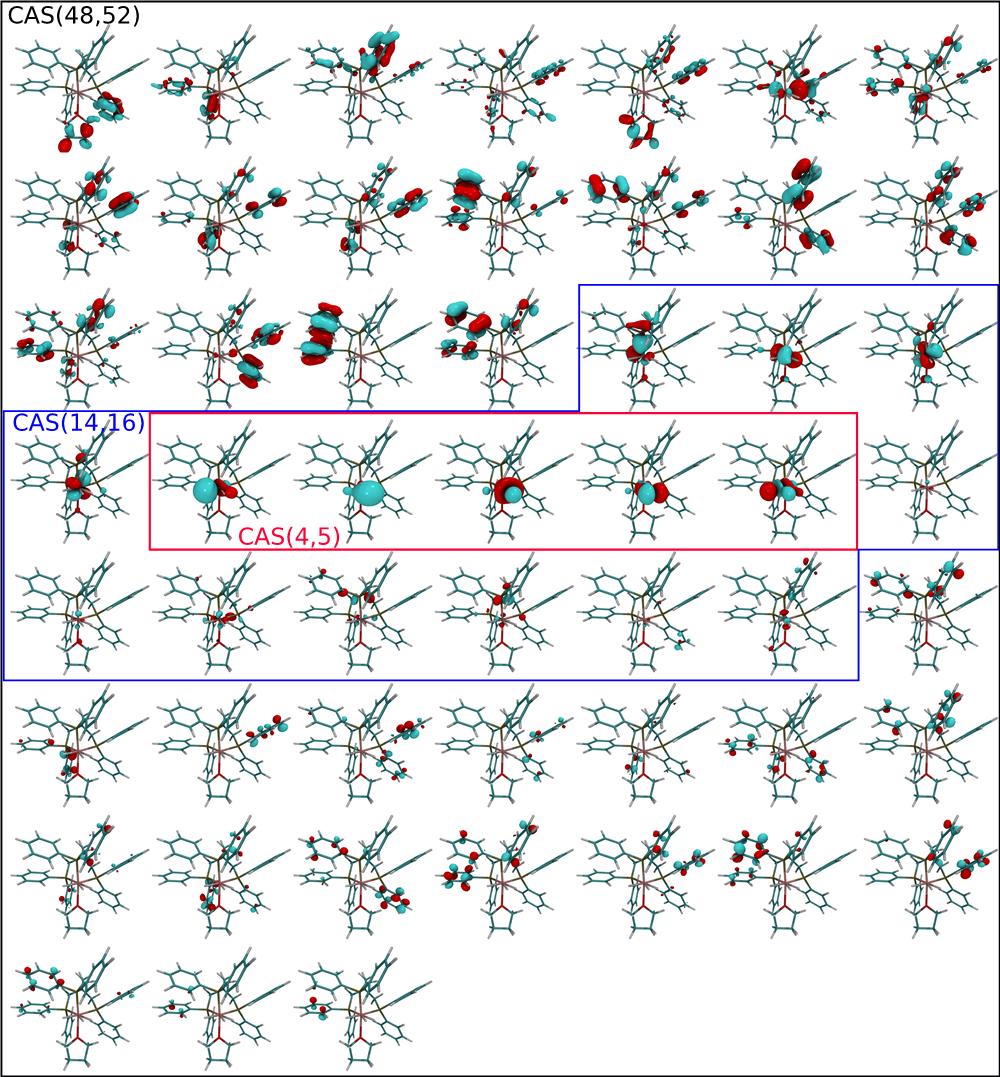}
		\caption{Active spaces yielding integral files for complex structure  I produced from CAS(4,5)SCF molecular orbitals in the full atomic orbital basis.}
	\end{figure*}

	\clearpage
	\subsection{CAS(8,6)SCF active molecular orbitals chosen for integral files of stable intermediate II}
	\begin{figure*}[htb]
		\centering
		\includegraphics[width=0.9\textwidth]{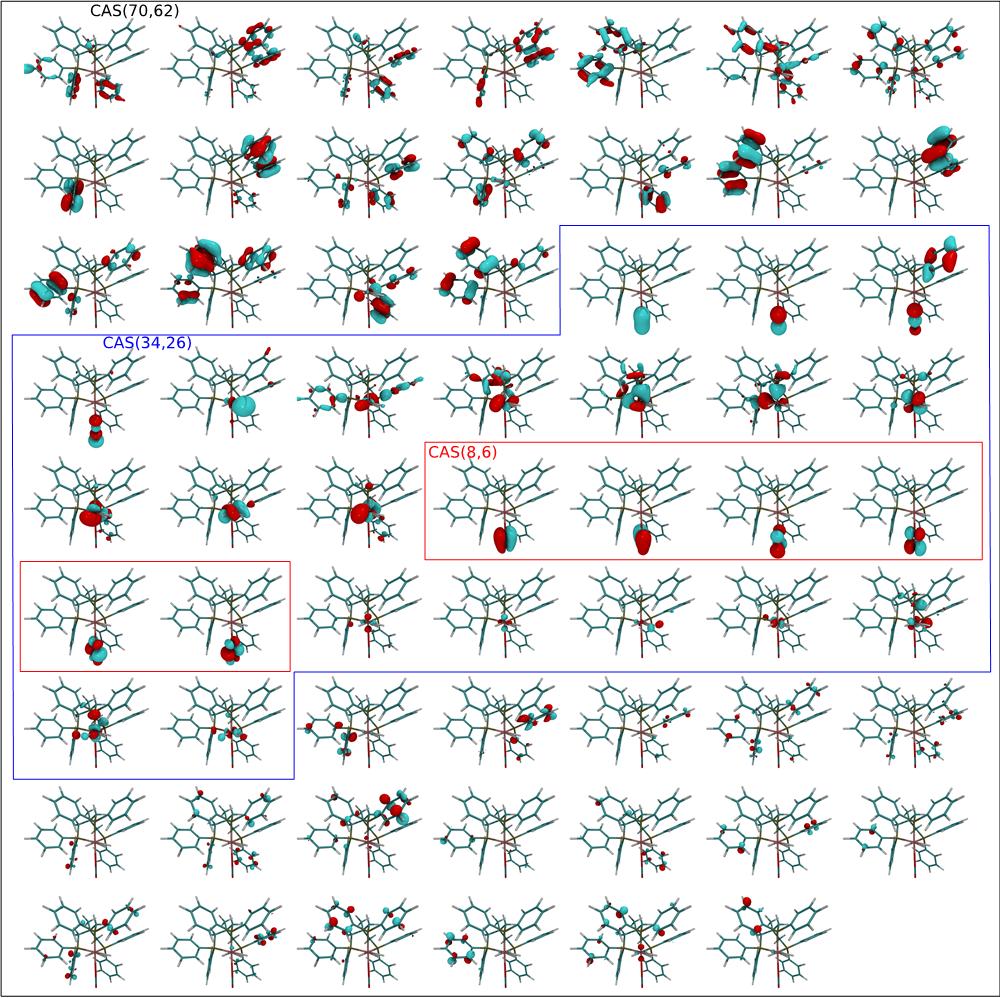}
		\caption{Active spaces yielding integral files for complex structure II produced from CAS(8,6)SCF molecular orbitals in the full atomic orbital basis.}
	\end{figure*}
	
	\clearpage

	\subsection{CAS(8,6)SCF active molecular orbitals chosen for integral files of transition state II-III}
	\begin{figure*}[htb]
		\centering
		\includegraphics[width=0.9\textwidth]{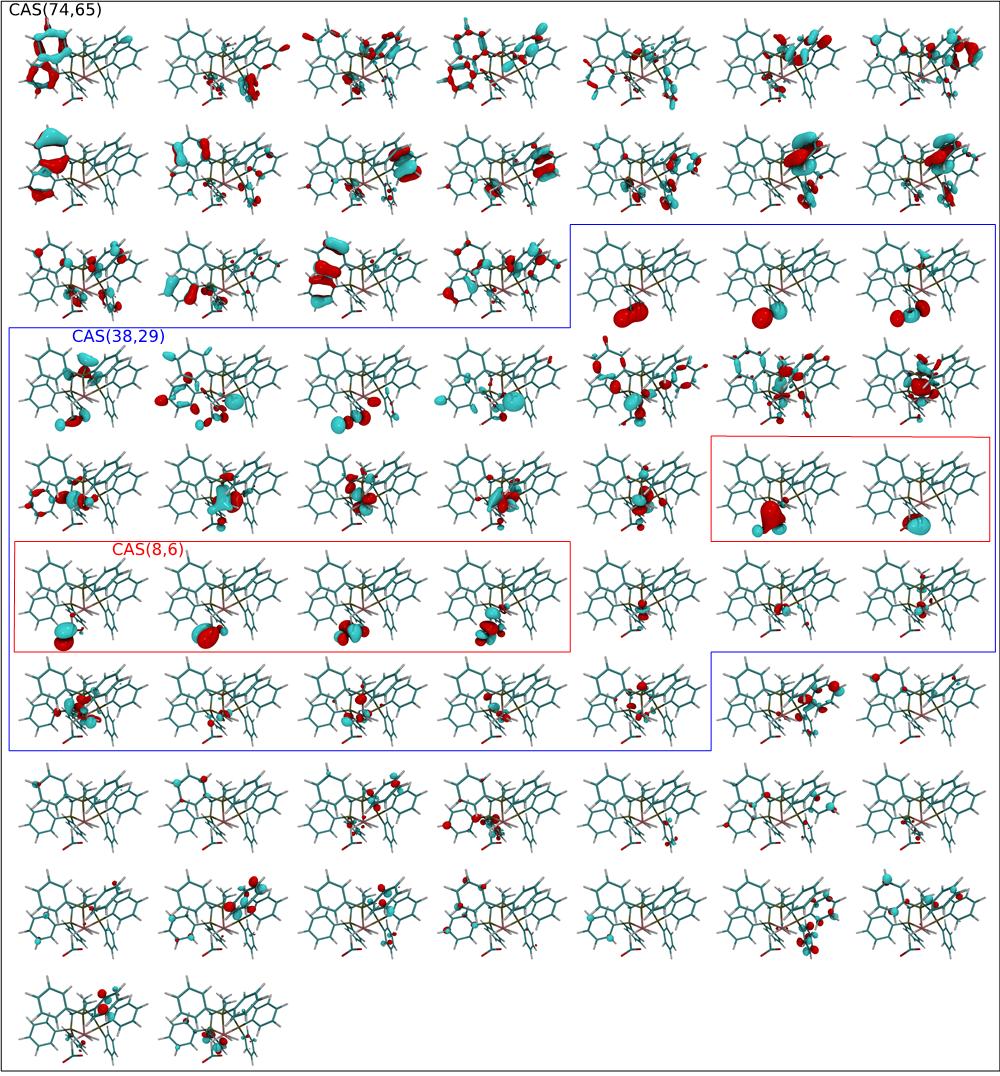}
		\caption{Active spaces yielding integral files for complex structure II-III produced from CAS(8,6)SCF molecular orbitals in the full atomic orbital basis.}
	\end{figure*}
	
	\newpage

	\subsection{CAS(12,11)SCF active molecular orbitals chosen for integral files of stable intermediate V}
	\begin{figure*}[htb]
		\centering
		\includegraphics[width=0.9\textwidth]{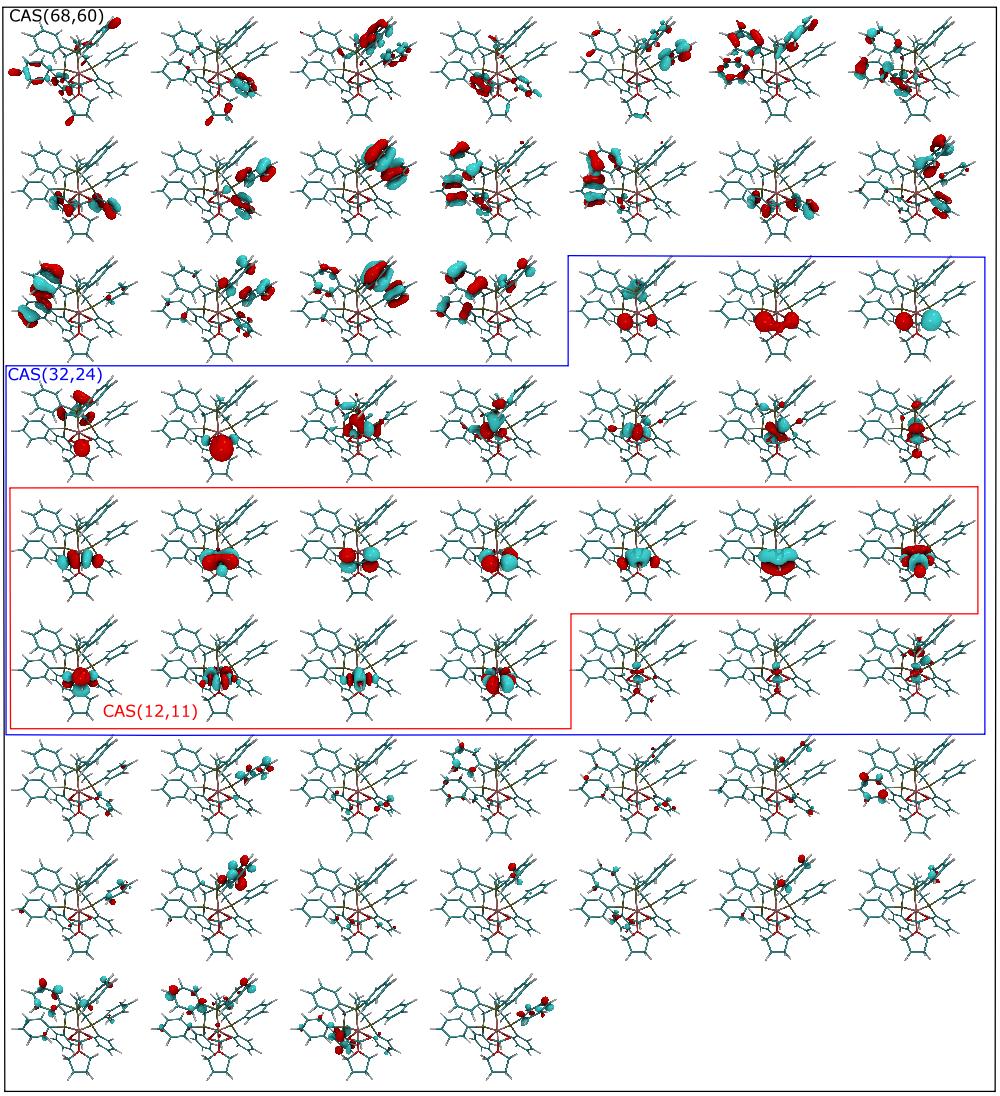}
		\caption{Active spaces yielding integral files for complex structure V produced from CAS(12,11)SCF molecular orbitals in the full atomic orbital basis.}
	\end{figure*}
	
	\newpage

	\subsection{CAS(2,2)SCF active molecular orbitals chosen for integral files of stable intermediate VIII}
	\begin{figure*}[htb]
		\centering
		\includegraphics[width=0.9\textwidth]{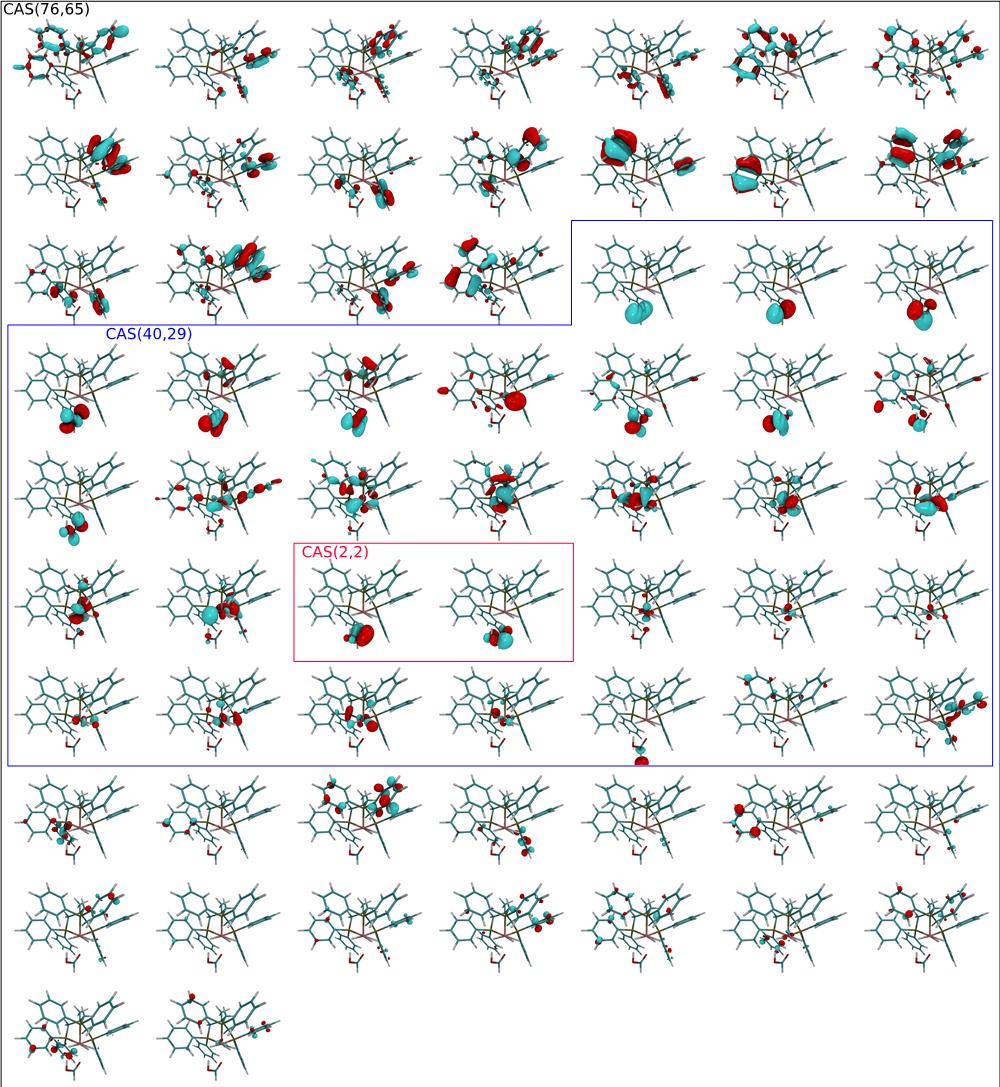}
		\caption{Active spaces yielding integral files for complex structure VIII produced from CAS(2,2)SCF molecular orbitals in the full atomic orbital basis.}
	\end{figure*}
	
	\newpage

	\subsection{CAS(4,4)SCF active molecular orbitals chosen for integral files of transition state VIII-IX}
	\begin{figure*}[htb]
		\centering
		\includegraphics[width=0.9\textwidth]{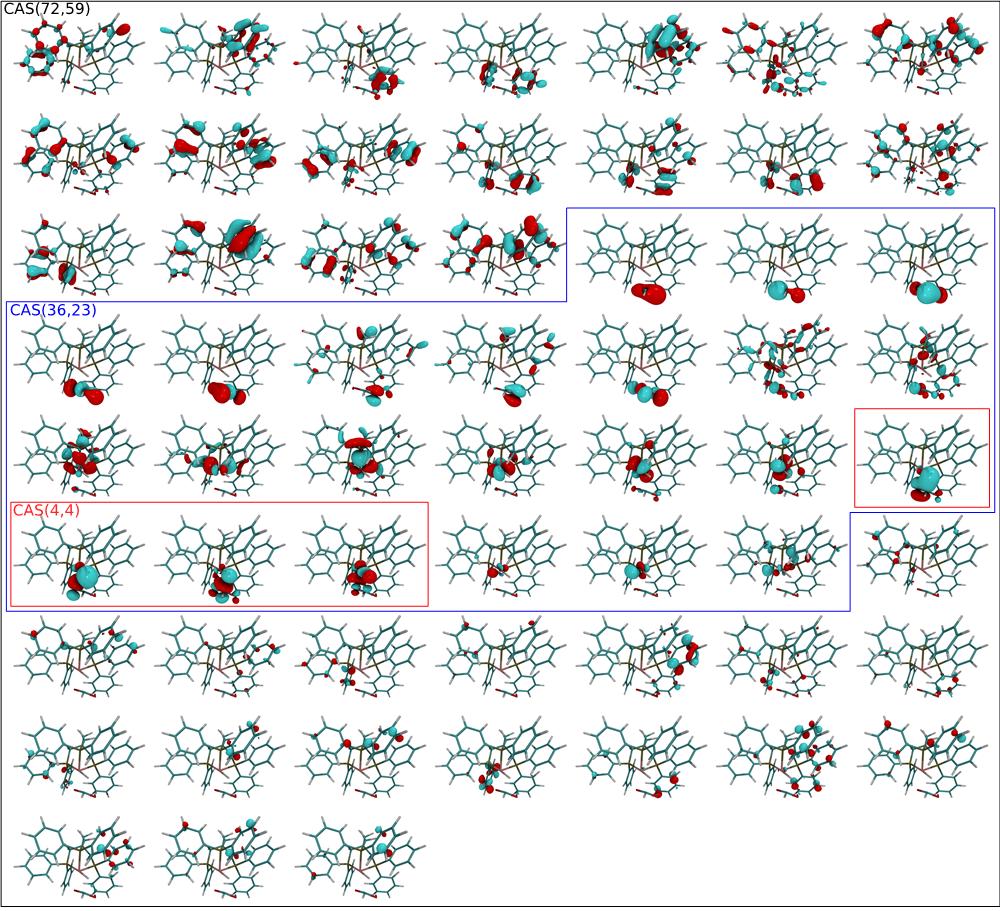}
		\caption{Active spaces yielding integral files for complex structure VIII-IX produced from
			CAS(4,4)SCF molecular orbitals in the full atomic orbital basis.}
	\end{figure*}

	\newpage

	\subsection{CAS(16,16)SCF active molecular orbitals chosen for integral files of stable intermediate IX}
	\begin{figure*}[htb]
		\centering
		\includegraphics[width=0.85\textwidth]{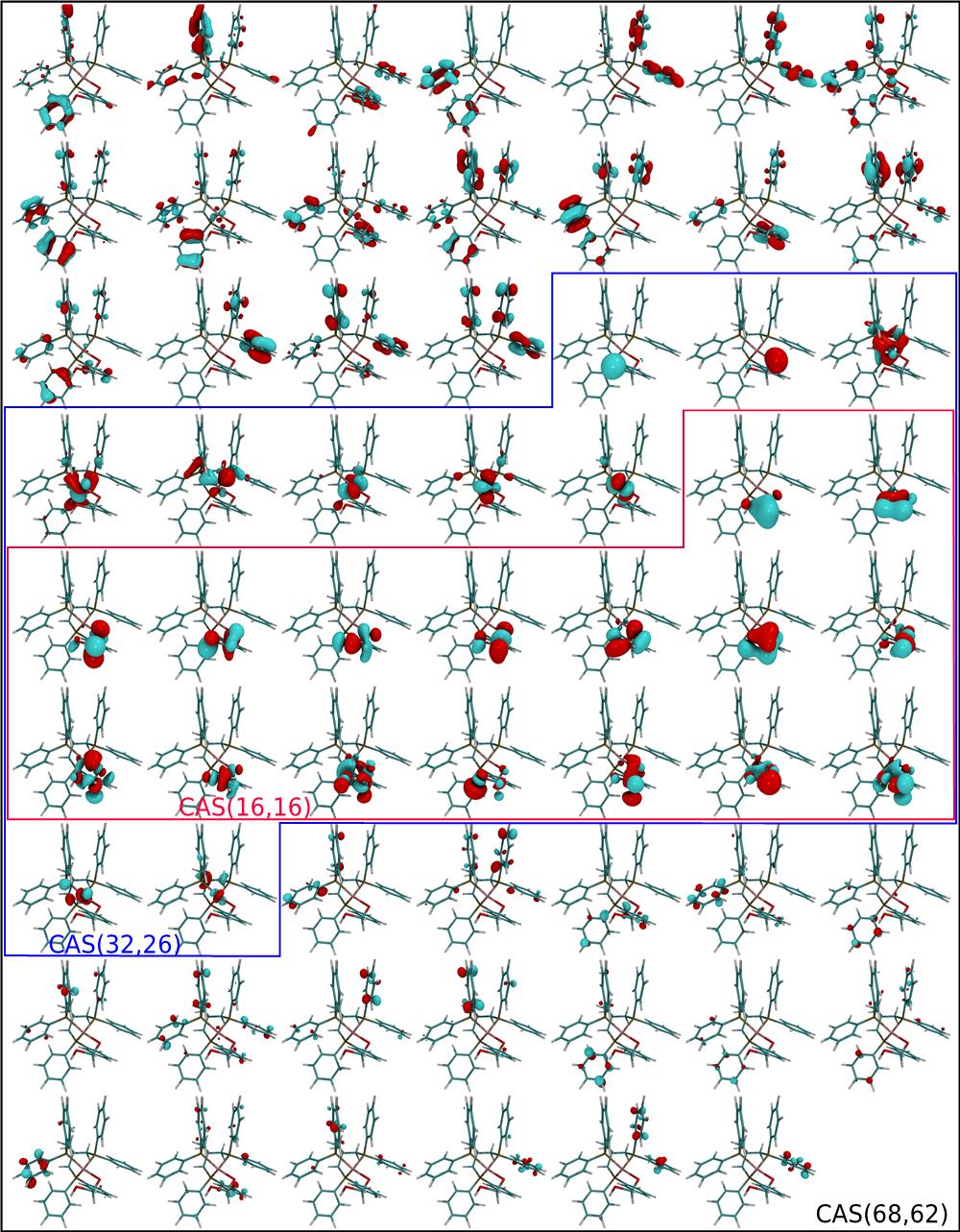}
		\caption{Active spaces yielding integral files for complex structure IX produced from
			CAS(16,16)SCF molecular orbitals in the full atomic orbital basis.}
	\end{figure*}

	\newpage

	\subsection{CAS(4,4)SCF active molecular orbitals chosen for integral files of stable intermediate XVIII}
	\begin{figure*}[htb]
		\centering
		\includegraphics[width=0.9\textwidth]{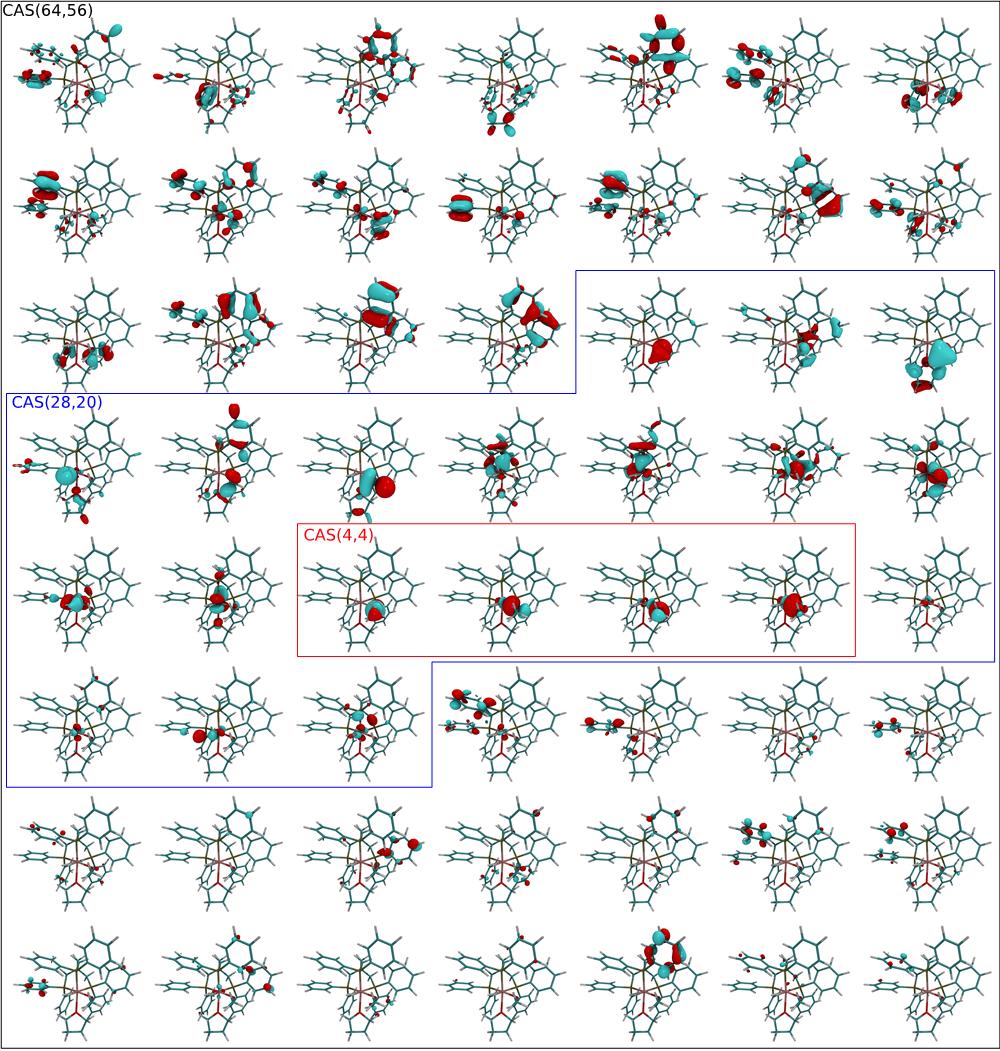}
		\caption{Active spaces yielding integral files for complex structure XVIII produced from CAS(4,4)SCF molecular orbitals in the full atomic orbital basis.}
	\end{figure*}

	\newpage

	\section{Resource analysis}
	
	There exists a plethora of quantum algorithms that allow one to obtain energies of a given Hamiltonian. Our previous work~\cite{reiher17} focused on a product-formula based implementation of the time evolution operator, where the Hamiltonian was given in a standard second-quantized representation. In this work, we evaluated a variety of different approaches and focused on the ones that yield the lowest resource costs.
	
	An initial screening of methods left us with the following methods to consider:
	\begin{itemize}
		\item Trotter based implementation of the low-rank factorized Hamiltonian~\cite{Motta2018LowRankTrotter}
		\item Qubitization of the standard (unfactorized) second-quantized representation~\cite{low2019qsharp}
		\item Qubitization of the single-factorized Hamiltonian~\cite{Berry2019CholeskyQubitization}
		\item Qubitization of the double-factorized Hamiltonian (this work)
	\end{itemize}
	We found that a qubitization based implementation of a doubly-factorized Hamiltonian resulted in the lowest costs.
	This method is discussed in the main text with a comparison to the single-factorized Hamiltonian, and also thoroughly detailed in~\cref{sec:qubitization_electronic_structure} of this supporting document.
	We also ruled out qubitization in the standard and single-factorized second-quantized representation based on highly-optimized algorithms targeting these representations~\cite{Berry2019CholeskyQubitization}.
	In the remainder of this section, we summarize our evaluation of remaining Trotter based implementation, and compare resource estimates in~\cref{tab:resest} with the best qubitization implementations.
	Finally, we tabulate resource estimates of all carbon dioxide fixation complexes mentioned in the main text.
	%
	
	\subsection{Product formula and low-rank factorization of the Hamiltonian}
	\label{sec:productformulas}
	This method was introduced in~\cite{Motta2018LowRankTrotter}. The idea is to approximate the Hamiltonian using a low-rank factorization, i.e., using $R < N^2$ components, and to then simulate time-evolution using a product formula. Each step of the product formula based implementation consists of a sequence of rotations that diagonalize the current component of the low-rank factorization, followed by a sequence of controlled phase gates that correspond to the eigenvalues of the component. This is repeated for all $R$ components of the low-rank factorization. As the simulation basis size $N$ is increased, it has been shown that $R\sim\mathcal O(N)$~\cite{Peng2017Cholesky}. Additionally, the eigenvalues of each component in the factorization may be truncated, allowing one to further reduce the resource requirements.
	For a more detailed description of this method, we refer the reader to~\cite{Motta2018LowRankTrotter}.
	
	We provide gate counts for estimating the electronic energies of some of our chosen systems using this method in~\cref{tab:resest}. The resource estimation uses the same truncation method as discussed in Section 5.1 of the main text. We note that this product formula based implementation naturally offers more parallelism than, e.g., a qubitization based implementation at comparable qubit numbers. We leave the detailed investigation of such space/time tradeoffs for future work.
	
	The time step for the second order Trotter method was chosen using the following upper bound~\cite{kivlichan2019improved}
	\begin{align} \|{e^{-iHt}-e^{-iH_{\text{eff}}t}}\|\le\frac{t^{3}}{12}\sum_{p}\left(\sum_{c>b}\sum_{a>b}\|{\left[H_{a},\left[H_{b},H_{c}\right]\right]}\|+\frac{1}{2}\sum_{c>b}\|{\left[H_{b},\left[H_{b},H_{c}\right]\right]}\|\right),
	\end{align}
	which applies to any Hamiltonian $H=\sum_{j=0}^{M-1}H_{j}$ where time-evolution by each term can be applied without any error.
	In the context of electronic structure, time-evolution by the one-electron term and each rank component of the two-electron terms can be applied exactly by diagonalizing them with fermionic basis transformations that are implemented by Givens rotations~\cite{Wecker2015Strongly}.
	
	Since such bounds have been shown to be loose by many orders of magnitude, we carried out an empirical study of the required Trotter step size for linear chains of Hydrogen atoms. We depict the bound and empirically determined step sizes in Fig.~\ref{fig:hchainsbound}. The results for linear chains of Hydrogen atoms indicate that the upper bound yields Trotter numbers that may be up to $10 \sqrt{n}$ larger than what empirical data suggests. Rescaling the $T$-count estimates for our catalyst systems by this factor results in an optimistic estimate that can also be found in Table~\ref{tab:resest}.
	
	\begin{figure}[ht]
		\centering
		\resizebox{0.7\linewidth}{!}{\input{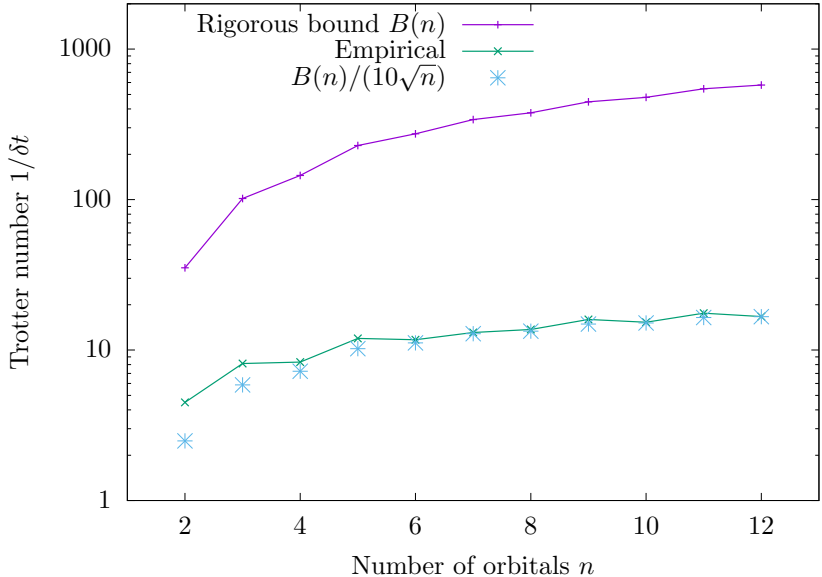}}
		\caption{Comparing Trotter numbers $1/\delta t$, where $\delta t$ is the Trotter step size, derived from simulations (labeled ``empirical'') to rigorous upper bounds. The gap between the empirical data and the bound is approximately $10\sqrt{n}$, as indicated by the rescaled points.}
		\label{fig:hchainsbound}
	\end{figure}

	\subsection{Results}
	In this section, we compare the $T$-gate and qubit counts of Trotter- and qubitization- based each method in Table~\ref{tab:resest} -- 
	note that any Toffoli gate can be synthesized from 4 $T$ gates. Qubitization using a doubly-factorized representation of the Hamiltonian offers the lowest $T$-counts at reasonable qubit numbers.
	The main text therefore focuses on this approach, and thus we tabulate, for completeness, the cost of phase estimation to $1$mHartree and problem parameters of all the carbon dioxide fixation complex structures mentioned in the main text using this approach.
	We also briefly compare with qubitization using the unfactorized or single-factorized representation using the algorithms of Berry et al.~\cite{Berry2019CholeskyQubitization}.
	In all these examples, we reduce the number of non-zero terms by using the incoherent truncation scheme discussed in the main text at an error threshold of $\epsilon_{in}=1$mHartree.
	
	\begin{table}[ht]
		\caption{Comparison of resource estimates for estimating electronic energies on a quantum computer. Results for Trotter (DF) follow from~\cref{sec:productformulas}, results for Qubitization (DF) follow from~\cref{sec:qubitization_electronic_structure}, and results for unfactorized and single-factorized qubitization follow from the algorithms described by Berry et al.~\cite{Berry2019CholeskyQubitization}\label{tab:resest}}
		\centering
		\begin{ruledtabular}\begin{tabular}{ccccccccccc}
				&			&	&	\multicolumn{3}{c}{\bfseries Trotter (DF)} & \multicolumn{3}{c}{\bfseries Qubitization (DF)} \\
				&&&&&&$\alpha$&&&&\\
				Step &  R & M & Qubits & $T$-count & optimistic&/ Hartree & Qubits & Toffoli-count\\
				\hline
				I-cas5-fb-48e52o	&	613&	23566&	104&	$8.29\times 10^{14}$&	$1.15\times 10^{13}$ & 177.3 & 3400 & $1.3\times 10^{10}$\\
				II-cas6-fb-70e62o	&	734&	33629&	124&	$3.50\times 10^{15}$	&$4.45\times 10^{13}$ & 374.4 & 4200 & $3.6\times 10^{10}$ \\
				II-III-cas6-fb-74e65o&		783&	38122&	130	&$4.17\times 10^{15}$&	$5.18\times 10^{13}$ & 416.0 & 4400 & $4.5\times 10^{10}$\\
				V-cas11-fb-68e60o	 &	670&	29319&	120&	$2.76\times 10^{15}$&	$3.57\times 10^{13}$ & 372.1 & 4100 & $3.3\times 10^{10}$\\
				VIII-cas2-fb-76e65o	&	794&39088&	130	&$4.52\times 10^{15}$	&$5.61\times 10^{13}$ & 425.7 & 4400 & $4.6\times 10^{10}$\\
				VIII-IX-cas4-fb-72e59o	&	666	&29286&	118&	$2.58\times 10^{15}$&	$3.36\times 10^{13}$ & 384.4 & 4000 & $3.4\times 10^{10}$\\
				IX-cas16-fb-68e62o	 &	638&	28945&	124	&$2.97\times 10^{15}$&	$3.77\times 10^{13}$ & 396.6 & 4200 & $3.5\times 10^{10}$\\
				XVIII-cas4-fb-64e56o	&	705	&29594&	112	&$2.09\times 10^{15}$	&$2.80\times 10^{13}$ & 293.5 & 3700 & $2.5\times 10^{10}$\\
			\end{tabular}
		\end{ruledtabular}
		\newline
		\begin{ruledtabular}
			\begin{tabular}{ccccccccc}
				&	\multicolumn{4}{c}{\bfseries Qubitization (Unfactorized)} & \multicolumn{4}{c}{\bfseries Qubitization (Single-factorized)} \\
				&   & $\alpha$ &  & Toffoli & & $\alpha$ & & Toffoli\\
				Step &n.n.z& / Hartree&Qubits&Count&n.n.z& / Hartree&Qubits&Count\\
				\hline
				I-cas5-fb-48e52o & 911134 & 5775 & 9832 & $ 2.8 \times 10^{11}$ & 628549 & 19725 & 5480 & $ 7.0 \times 10^{11}$ \\
				II-cas6-fb-70e62o & 1757237 & 8683 & 9980 & $ 7.9 \times 10^{11}$ & 1052858 & 31790 & 5628 & $ 1.9 \times 10^{12}$ \\
				II-III-cas6-fb-74e65o & 2181633 & 10714 & 11010 & $ 9.3 \times 10^{11}$ & 1251855 & 42037 & 5762 & $ 2.1 \times 10^{12}$ \\
				V-cas11-fb-68e60o & 1526106 & 7332 & 9976 & $ 3.6 \times 10^{11}$ & 887611 & 25725 & 5624 & $ 8.4 \times 10^{11}$ \\
				VIII-cas2-fb-76e65o & 2169817 & 10617 & 11010 & $ 9.3 \times 10^{11}$ & 1292050 & 42160 & 5762 & $ 2.1 \times 10^{12}$ \\
				VIII-IX-cas4-fb-72e59o & 1485406 & 8950 & 9974 & $ 7.2 \times 10^{11}$ & 880955 & 33100 & 5622 & $ 1.7 \times 10^{12}$ \\
				IX-cas16-fb-68e62o & 1751213 & 9651 & 9980 & $ 7.9 \times 10^{11}$ & 897570 & 34973 & 5628 & $ 1.7 \times 10^{12}$ \\
				XVIII-cas4-fb-64e56o & 1213076 & 6708 & 9968 & $ 3.2 \times 10^{11}$ & 847812 & 25611 & 5616 & $ 8.2 \times 10^{11}$
			\end{tabular}
		\end{ruledtabular}
		
	\end{table}

	\begin{table}[H]
		\caption{Double-factorization resource estimates for all steps of carbon dioxide fixation for active space sizes from $52$--$65$ orbitals at a truncation threshold between $\epsilon_{\text{in}}=0.1$mHartree and $0.1$Hartree.}
		\begin{ruledtabular}\begin{tabular}{lccccccc}
				\multirow{2}{*}{Step} & $\epsilon_{\text{in}}$ & Orbitals &\multirow{2}{*}{$R$} &\multirow{2}{*}{$M$} & $\alpha_{DF}$ & \multirow{2}{*}{Qubits} & $\#$Toffoli\\
				& / Hartree  & $N$ & &&/ Hartree & & gates\\
				\hline
				I-cas5-fb-48e52o & $ 1.00 \times 10^{-4}$ & 52 & 855 & 37242 & 177.4 & 3448 & $ 1.72 \times 10^{10} $ \\ 
				I-cas5-fb-48e52o & $ 1.58 \times 10^{-4}$ & 52 & 808 & 34633 & 177.4 & 3448 & $ 1.63 \times 10^{10} $ \\ 
				I-cas5-fb-48e52o & $ 2.51 \times 10^{-4}$ & 52 & 761 & 31937 & 177.4 & 3447 & $ 1.55 \times 10^{10} $ \\ 
				I-cas5-fb-48e52o & $ 3.98 \times 10^{-4}$ & 52 & 713 & 29187 & 177.4 & 3447 & $ 1.46 \times 10^{10} $ \\ 
				I-cas5-fb-48e52o & $ 6.31 \times 10^{-4}$ & 52 & 665 & 26399 & 177.4 & 3447 & $ 1.38 \times 10^{10} $ \\ 
				I-cas5-fb-48e52o & $ 1.00 \times 10^{-3}$ & 52 & 613 & 23566 & 177.3 & 3447 & $ 1.29 \times 10^{10} $ \\ 
				I-cas5-fb-48e52o & $ 1.58 \times 10^{-3}$ & 52 & 561 & 20739 & 177.3 & 3447 & $ 1.20 \times 10^{10} $ \\ 
				I-cas5-fb-48e52o & $ 2.51 \times 10^{-3}$ & 52 & 509 & 17950 & 177.2 & 3447 & $ 1.11 \times 10^{10} $ \\ 
				I-cas5-fb-48e52o & $ 3.98 \times 10^{-3}$ & 52 & 457 & 15198 & 177.2 & 3446 & $ 1.02 \times 10^{10} $ \\ 
				I-cas5-fb-48e52o & $ 6.31 \times 10^{-3}$ & 52 & 403 & 12516 & 177.0 & 3446 & $ 9.38 \times 10^{9} $ \\ 
				I-cas5-fb-48e52o & $ 1.00 \times 10^{-2}$ & 52 & 343 & 9993 & 176.8 & 3446 & $ 8.57 \times 10^{9} $ \\ 
				I-cas5-fb-48e52o & $ 1.58 \times 10^{-2}$ & 52 & 280 & 7695 & 176.3 & 3445 & $ 7.83 \times 10^{9} $ \\ 
				I-cas5-fb-48e52o & $ 2.51 \times 10^{-2}$ & 52 & 224 & 5739 & 175.7 & 3445 & $ 7.18 \times 10^{9} $ \\ 
				I-cas5-fb-48e52o & $ 3.98 \times 10^{-2}$ & 52 & 167 & 4161 & 174.7 & 3445 & $ 6.65 \times 10^{9} $ \\ 
				I-cas5-fb-48e52o & $ 6.31 \times 10^{-2}$ & 52 & 125 & 2959 & 173.3 & 3444 & $ 6.22 \times 10^{9} $ \\ 
				I-cas5-fb-48e52o & $ 1.00 \times 10^{-1}$ & 52 & 93 & 2117 & 171.5 & 3444 & $ 5.89 \times 10^{9} $ \\ \hline
				II-cas6-fb-70e62o & $ 1.00 \times 10^{-4}$ & 62 & 998 & 51731 & 374.5 & 4232 & $ 4.84 \times 10^{10} $ \\ 
				II-cas6-fb-70e62o & $ 1.58 \times 10^{-4}$ & 62 & 947 & 48215 & 374.5 & 4232 & $ 4.60 \times 10^{10} $ \\ 
				II-cas6-fb-70e62o & $ 2.51 \times 10^{-4}$ & 62 & 893 & 44655 & 374.4 & 4232 & $ 4.37 \times 10^{10} $ \\ 
				II-cas6-fb-70e62o & $ 3.98 \times 10^{-4}$ & 62 & 840 & 41036 & 374.4 & 4232 & $ 4.13 \times 10^{10} $ \\ 
				II-cas6-fb-70e62o & $ 6.31 \times 10^{-4}$ & 62 & 790 & 37348 & 374.4 & 4232 & $ 3.89 \times 10^{10} $ \\ 
				II-cas6-fb-70e62o & $ 1.00 \times 10^{-3}$ & 62 & 734 & 33629 & 374.4 & 4232 & $ 3.64 \times 10^{10} $ \\ 
				II-cas6-fb-70e62o & $ 1.58 \times 10^{-3}$ & 62 & 678 & 29874 & 374.4 & 4231 & $ 3.39 \times 10^{10} $ \\ 
				II-cas6-fb-70e62o & $ 2.51 \times 10^{-3}$ & 62 & 622 & 26143 & 374.3 & 4231 & $ 3.14 \times 10^{10} $ \\ 
				II-cas6-fb-70e62o & $ 3.98 \times 10^{-3}$ & 62 & 555 & 22497 & 374.2 & 4231 & $ 2.90 \times 10^{10} $ \\ 
				II-cas6-fb-70e62o & $ 6.31 \times 10^{-3}$ & 62 & 491 & 19012 & 374.0 & 4231 & $ 2.67 \times 10^{10} $ \\ 
				II-cas6-fb-70e62o & $ 1.00 \times 10^{-2}$ & 62 & 432 & 15702 & 373.7 & 4230 & $ 2.45 \times 10^{10} $ \\ 
				II-cas6-fb-70e62o & $ 1.58 \times 10^{-2}$ & 62 & 370 & 12580 & 373.3 & 4230 & $ 2.24 \times 10^{10} $ \\ 
				II-cas6-fb-70e62o & $ 2.51 \times 10^{-2}$ & 62 & 309 & 9700 & 372.6 & 4230 & $ 2.04 \times 10^{10} $ \\ 
				II-cas6-fb-70e62o & $ 3.98 \times 10^{-2}$ & 62 & 248 & 7150 & 371.4 & 4229 & $ 1.87 \times 10^{10} $ \\ 
				II-cas6-fb-70e62o & $ 6.31 \times 10^{-2}$ & 62 & 183 & 5096 & 369.3 & 4229 & $ 1.72 \times 10^{10} $ \\ 
				II-cas6-fb-70e62o & $ 1.00 \times 10^{-1}$ & 62 & 136 & 3609 & 366.8 & 4228 & $ 1.61 \times 10^{10} $ \\   
		\end{tabular}\end{ruledtabular}
	\end{table}
	
	\begin{table}[htb]
		\caption{(Continued) Double-factorization resource estimates for all steps of carbon dioxide fixation for active space sizes from $52$--$65$ orbitals at a truncation threshold between $\epsilon_{\text{in}}=0.1$mHartree and $0.1$Hartree.}
		\begin{ruledtabular}
			\begin{tabular}{lccccccc}
				\multirow{2}{*}{Step} & $\epsilon_{\text{in}}$ & Orbitals &\multirow{2}{*}{$R$} &\multirow{2}{*}{$M$} & $\alpha_{DF}$ & \multirow{2}{*}{Qubits} & $\#$Toffoli\\
				& / Hartree  & $N$ & &&/ Hartree & & gates\\
				\hline
				II-III-cas6-fb-74e65o & $ 1.00 \times 10^{-4}$ & 65 & 1064 & 58020 & 416.1 & 4436 & $ 5.91 \times 10^{10} $ \\ 
				II-III-cas6-fb-74e65o & $ 1.58 \times 10^{-4}$ & 65 & 1011 & 54104 & 416.1 & 4436 & $ 5.62 \times 10^{10} $ \\ 
				II-III-cas6-fb-74e65o & $ 2.51 \times 10^{-4}$ & 65 & 954 & 50154 & 416.1 & 4436 & $ 5.33 \times 10^{10} $ \\ 
				II-III-cas6-fb-74e65o & $ 3.98 \times 10^{-4}$ & 65 & 895 & 46172 & 416.0 & 4436 & $ 5.04 \times 10^{10} $ \\ 
				II-III-cas6-fb-74e65o & $ 6.31 \times 10^{-4}$ & 65 & 839 & 42153 & 416.0 & 4436 & $ 4.75 \times 10^{10} $ \\ 
				II-III-cas6-fb-74e65o & $ 1.00 \times 10^{-3}$ & 65 & 783 & 38122 & 416.0 & 4436 & $ 4.45 \times 10^{10} $ \\ 
				II-III-cas6-fb-74e65o & $ 1.58 \times 10^{-3}$ & 65 & 725 & 34075 & 416.0 & 4436 & $ 4.15 \times 10^{10} $ \\ 
				II-III-cas6-fb-74e65o & $ 2.51 \times 10^{-3}$ & 65 & 665 & 30026 & 415.9 & 4435 & $ 3.85 \times 10^{10} $ \\ 
				II-III-cas6-fb-74e65o & $ 3.98 \times 10^{-3}$ & 65 & 604 & 26032 & 415.8 & 4435 & $ 3.56 \times 10^{10} $ \\ 
				II-III-cas6-fb-74e65o & $ 6.31 \times 10^{-3}$ & 65 & 536 & 22119 & 415.6 & 4435 & $ 3.27 \times 10^{10} $ \\ 
				II-III-cas6-fb-74e65o & $ 1.00 \times 10^{-2}$ & 65 & 462 & 18383 & 415.2 & 4435 & $ 2.99 \times 10^{10} $ \\ 
				II-III-cas6-fb-74e65o & $ 1.58 \times 10^{-2}$ & 65 & 400 & 14890 & 414.8 & 4434 & $ 2.73 \times 10^{10} $ \\ 
				II-III-cas6-fb-74e65o & $ 2.51 \times 10^{-2}$ & 65 & 339 & 11652 & 414.1 & 4434 & $ 2.49 \times 10^{10} $ \\ 
				II-III-cas6-fb-74e65o & $ 3.98 \times 10^{-2}$ & 65 & 273 & 8757 & 412.7 & 4434 & $ 2.27 \times 10^{10} $ \\ 
				II-III-cas6-fb-74e65o & $ 6.31 \times 10^{-2}$ & 65 & 213 & 6312 & 410.7 & 4433 & $ 2.08 \times 10^{10} $ \\ 
				II-III-cas6-fb-74e65o & $ 1.00 \times 10^{-1}$ & 65 & 160 & 4463 & 407.9 & 4433 & $ 1.93 \times 10^{10} $ \\ \hline
				V-cas11-fb-68e60o & $ 1.00 \times 10^{-4}$ & 60 & 938 & 46425 & 372.2 & 4096 & $ 4.41 \times 10^{10} $ \\ 
				V-cas11-fb-68e60o & $ 1.58 \times 10^{-4}$ & 60 & 883 & 43035 & 372.2 & 4096 & $ 4.19 \times 10^{10} $ \\ 
				V-cas11-fb-68e60o & $ 2.51 \times 10^{-4}$ & 60 & 828 & 39623 & 372.2 & 4096 & $ 3.97 \times 10^{10} $ \\ 
				V-cas11-fb-68e60o & $ 3.98 \times 10^{-4}$ & 60 & 778 & 36188 & 372.2 & 4096 & $ 3.74 \times 10^{10} $ \\ 
				V-cas11-fb-68e60o & $ 6.31 \times 10^{-4}$ & 60 & 724 & 32739 & 372.1 & 4095 & $ 3.52 \times 10^{10} $ \\ 
				V-cas11-fb-68e60o & $ 1.00 \times 10^{-3}$ & 60 & 670 & 29319 & 372.1 & 4095 & $ 3.29 \times 10^{10} $ \\ 
				V-cas11-fb-68e60o & $ 1.58 \times 10^{-3}$ & 60 & 612 & 25946 & 372.1 & 4095 & $ 3.07 \times 10^{10} $ \\ 
				V-cas11-fb-68e60o & $ 2.51 \times 10^{-3}$ & 60 & 558 & 22638 & 372.0 & 4095 & $ 2.85 \times 10^{10} $ \\ 
				V-cas11-fb-68e60o & $ 3.98 \times 10^{-3}$ & 60 & 501 & 19412 & 371.9 & 4095 & $ 2.64 \times 10^{10} $ \\ 
				V-cas11-fb-68e60o & $ 6.31 \times 10^{-3}$ & 60 & 444 & 16335 & 371.8 & 4094 & $ 2.43 \times 10^{10} $ \\ 
				V-cas11-fb-68e60o & $ 1.00 \times 10^{-2}$ & 60 & 383 & 13448 & 371.5 & 4094 & $ 2.24 \times 10^{10} $ \\ 
				V-cas11-fb-68e60o & $ 1.58 \times 10^{-2}$ & 60 & 329 & 10819 & 371.1 & 4094 & $ 2.07 \times 10^{10} $ \\ 
				V-cas11-fb-68e60o & $ 2.51 \times 10^{-2}$ & 60 & 273 & 8462 & 370.5 & 4094 & $ 1.91 \times 10^{10} $ \\ 
				V-cas11-fb-68e60o & $ 3.98 \times 10^{-2}$ & 60 & 222 & 6436 & 369.5 & 4093 & $ 1.77 \times 10^{10} $ \\ 
				V-cas11-fb-68e60o & $ 6.31 \times 10^{-2}$ & 60 & 174 & 4781 & 368.0 & 4093 & $ 1.65 \times 10^{10} $ \\ 
				V-cas11-fb-68e60o & $ 1.00 \times 10^{-1}$ & 60 & 138 & 3519 & 366.1 & 4092 & $ 1.56 \times 10^{10} $ \\ 
			\end{tabular}
		\end{ruledtabular}
	\end{table}

	\begin{table}
		\caption{(Continued) Double-factorization resource estimates for all steps of carbon dioxide fixation for active space sizes from $52$--$65$ orbitals at a truncation threshold between $\epsilon_{\text{in}}=0.1$mHartree and $0.1$Hartree.}
		\begin{ruledtabular}\begin{tabular}{lccccccc}
				\multirow{2}{*}{Step} & $\epsilon_{\text{in}}$ & Orbitals &\multirow{2}{*}{$R$} &\multirow{2}{*}{$M$} & $\alpha_{DF}$ & \multirow{2}{*}{Qubits} & $\#$Toffoli\\
				& / Hartree  & $N$ & &&/ Hartree & & gates\\
				\hline
				VIII-cas2-fb-76e65o & $ 1.00 \times 10^{-4}$ & 65 & 1081 & 59349 & 425.8 & 4436 & $ 6.15 \times 10^{10} $ \\ 
				VIII-cas2-fb-76e65o & $ 1.58 \times 10^{-4}$ & 65 & 1028 & 55420 & 425.8 & 4436 & $ 5.85 \times 10^{10} $ \\ 
				VIII-cas2-fb-76e65o & $ 2.51 \times 10^{-4}$ & 65 & 969 & 51442 & 425.8 & 4436 & $ 5.55 \times 10^{10} $ \\ 
				VIII-cas2-fb-76e65o & $ 3.98 \times 10^{-4}$ & 65 & 914 & 47379 & 425.8 & 4436 & $ 5.25 \times 10^{10} $ \\ 
				VIII-cas2-fb-76e65o & $ 6.31 \times 10^{-4}$ & 65 & 856 & 43246 & 425.7 & 4436 & $ 4.94 \times 10^{10} $ \\ 
				VIII-cas2-fb-76e65o & $ 1.00 \times 10^{-3}$ & 65 & 794 & 39088 & 425.7 & 4436 & $ 4.63 \times 10^{10} $ \\ 
				VIII-cas2-fb-76e65o & $ 1.58 \times 10^{-3}$ & 65 & 735 & 34927 & 425.7 & 4436 & $ 4.31 \times 10^{10} $ \\ 
				VIII-cas2-fb-76e65o & $ 2.51 \times 10^{-3}$ & 65 & 675 & 30767 & 425.6 & 4435 & $ 4.00 \times 10^{10} $ \\ 
				VIII-cas2-fb-76e65o & $ 3.98 \times 10^{-3}$ & 65 & 615 & 26636 & 425.5 & 4435 & $ 3.69 \times 10^{10} $ \\ 
				VIII-cas2-fb-76e65o & $ 6.31 \times 10^{-3}$ & 65 & 548 & 22559 & 425.3 & 4435 & $ 3.38 \times 10^{10} $ \\ 
				VIII-cas2-fb-76e65o & $ 1.00 \times 10^{-2}$ & 65 & 483 & 18623 & 425.0 & 4435 & $ 3.08 \times 10^{10} $ \\ 
				VIII-cas2-fb-76e65o & $ 1.58 \times 10^{-2}$ & 65 & 414 & 14912 & 424.5 & 4434 & $ 2.80 \times 10^{10} $ \\ 
				VIII-cas2-fb-76e65o & $ 2.51 \times 10^{-2}$ & 65 & 344 & 11494 & 423.7 & 4434 & $ 2.54 \times 10^{10} $ \\ 
				VIII-cas2-fb-76e65o & $ 3.98 \times 10^{-2}$ & 65 & 277 & 8424 & 422.4 & 4434 & $ 2.30 \times 10^{10} $ \\ 
				VIII-cas2-fb-76e65o & $ 6.31 \times 10^{-2}$ & 65 & 206 & 5912 & 420.1 & 4433 & $ 2.10 \times 10^{10} $ \\ 
				VIII-cas2-fb-76e65o & $ 1.00 \times 10^{-1}$ & 65 & 146 & 4112 & 416.8 & 4433 & $ 1.95 \times 10^{10} $ \\ \hline
				VIII-IX-cas4-fb-72e59o & $ 1.00 \times 10^{-4}$ & 59 & 915 & 45289 & 384.4 & 4028 & $ 4.46 \times 10^{10} $ \\ 
				VIII-IX-cas4-fb-72e59o & $ 1.58 \times 10^{-4}$ & 59 & 866 & 42119 & 384.4 & 4028 & $ 4.24 \times 10^{10} $ \\ 
				VIII-IX-cas4-fb-72e59o & $ 2.51 \times 10^{-4}$ & 59 & 819 & 38956 & 384.4 & 4028 & $ 4.03 \times 10^{10} $ \\ 
				VIII-IX-cas4-fb-72e59o & $ 3.98 \times 10^{-4}$ & 59 & 768 & 35764 & 384.4 & 4028 & $ 3.81 \times 10^{10} $ \\ 
				VIII-IX-cas4-fb-72e59o & $ 6.31 \times 10^{-4}$ & 59 & 712 & 32538 & 384.4 & 4027 & $ 3.59 \times 10^{10} $ \\ 
				VIII-IX-cas4-fb-72e59o & $ 1.00 \times 10^{-3}$ & 59 & 666 & 29286 & 384.4 & 4027 & $ 3.37 \times 10^{10} $ \\ 
				VIII-IX-cas4-fb-72e59o & $ 1.58 \times 10^{-3}$ & 59 & 609 & 26030 & 384.3 & 4027 & $ 3.15 \times 10^{10} $ \\ 
				VIII-IX-cas4-fb-72e59o & $ 2.51 \times 10^{-3}$ & 59 & 553 & 22780 & 384.3 & 4027 & $ 2.93 \times 10^{10} $ \\ 
				VIII-IX-cas4-fb-72e59o & $ 3.98 \times 10^{-3}$ & 59 & 496 & 19598 & 384.1 & 4027 & $ 2.71 \times 10^{10} $ \\ 
				VIII-IX-cas4-fb-72e59o & $ 6.31 \times 10^{-3}$ & 59 & 443 & 16506 & 384.0 & 4027 & $ 2.50 \times 10^{10} $ \\ 
				VIII-IX-cas4-fb-72e59o & $ 1.00 \times 10^{-2}$ & 59 & 386 & 13588 & 383.7 & 4026 & $ 2.30 \times 10^{10} $ \\ 
				VIII-IX-cas4-fb-72e59o & $ 1.58 \times 10^{-2}$ & 59 & 330 & 10863 & 383.3 & 4026 & $ 2.11 \times 10^{10} $ \\ 
				VIII-IX-cas4-fb-72e59o & $ 2.51 \times 10^{-2}$ & 59 & 275 & 8405 & 382.7 & 4026 & $ 1.94 \times 10^{10} $ \\ 
				VIII-IX-cas4-fb-72e59o & $ 3.98 \times 10^{-2}$ & 59 & 222 & 6264 & 381.6 & 4025 & $ 1.79 \times 10^{10} $ \\ 
				VIII-IX-cas4-fb-72e59o & $ 6.31 \times 10^{-2}$ & 59 & 178 & 4558 & 380.2 & 4025 & $ 1.67 \times 10^{10} $ \\ 
				VIII-IX-cas4-fb-72e59o & $ 1.00 \times 10^{-1}$ & 59 & 128 & 3327 & 377.6 & 4024 & $ 1.57 \times 10^{10} $ \\ 
		\end{tabular}\end{ruledtabular}
	\end{table}
	
	\begin{table}
		\caption{(Continued) Double-factorization resource estimates for all steps of carbon dioxide fixation for active space sizes from $52$--$65$ orbitals at a truncation threshold between $\epsilon_{\text{in}}=0.1$mHartree and $0.1$Hartree.}
		\begin{ruledtabular}\begin{tabular}{lccccccc}
				\multirow{2}{*}{Step} & $\epsilon_{\text{in}}$ & Orbitals &\multirow{2}{*}{$R$} &\multirow{2}{*}{$M$} & $\alpha_{DF}$ & \multirow{2}{*}{Qubits} & $\#$Toffoli\\
				& / Hartree  & $N$ & &&/ Hartree & & gates\\
				\hline
				IX-cas16-fb-68e62o & $ 1.00 \times 10^{-4}$ & 62 & 886 & 45264 & 396.7 & 4232 & $ 4.67 \times 10^{10} $ \\ 
				IX-cas16-fb-68e62o & $ 1.58 \times 10^{-4}$ & 62 & 834 & 42011 & 396.7 & 4232 & $ 4.44 \times 10^{10} $ \\ 
				IX-cas16-fb-68e62o & $ 2.51 \times 10^{-4}$ & 62 & 787 & 38764 & 396.7 & 4232 & $ 4.22 \times 10^{10} $ \\ 
				IX-cas16-fb-68e62o & $ 3.98 \times 10^{-4}$ & 62 & 738 & 35493 & 396.7 & 4232 & $ 3.99 \times 10^{10} $ \\ 
				IX-cas16-fb-68e62o & $ 6.31 \times 10^{-4}$ & 62 & 686 & 32226 & 396.6 & 4231 & $ 3.76 \times 10^{10} $ \\ 
				IX-cas16-fb-68e62o & $ 1.00 \times 10^{-3}$ & 62 & 638 & 28945 & 396.6 & 4231 & $ 3.53 \times 10^{10} $ \\ 
				IX-cas16-fb-68e62o & $ 1.58 \times 10^{-3}$ & 62 & 583 & 25694 & 396.6 & 4231 & $ 3.30 \times 10^{10} $ \\ 
				IX-cas16-fb-68e62o & $ 2.51 \times 10^{-3}$ & 62 & 532 & 22504 & 396.5 & 4231 & $ 3.08 \times 10^{10} $ \\ 
				IX-cas16-fb-68e62o & $ 3.98 \times 10^{-3}$ & 62 & 479 & 19401 & 396.4 & 4231 & $ 2.86 \times 10^{10} $ \\ 
				IX-cas16-fb-68e62o & $ 6.31 \times 10^{-3}$ & 62 & 428 & 16430 & 396.3 & 4231 & $ 2.65 \times 10^{10} $ \\ 
				IX-cas16-fb-68e62o & $ 1.00 \times 10^{-2}$ & 62 & 379 & 13616 & 396.1 & 4230 & $ 2.45 \times 10^{10} $ \\ 
				IX-cas16-fb-68e62o & $ 1.58 \times 10^{-2}$ & 62 & 318 & 11056 & 395.6 & 4230 & $ 2.27 \times 10^{10} $ \\ 
				IX-cas16-fb-68e62o & $ 2.51 \times 10^{-2}$ & 62 & 270 & 8809 & 395.0 & 4230 & $ 2.10 \times 10^{10} $ \\ 
				IX-cas16-fb-68e62o & $ 3.98 \times 10^{-2}$ & 62 & 227 & 6887 & 394.2 & 4229 & $ 1.97 \times 10^{10} $ \\ 
				IX-cas16-fb-68e62o & $ 6.31 \times 10^{-2}$ & 62 & 178 & 5282 & 392.8 & 4229 & $ 1.84 \times 10^{10} $ \\ 
				IX-cas16-fb-68e62o & $ 1.00 \times 10^{-1}$ & 62 & 140 & 4020 & 390.8 & 4228 & $ 1.75 \times 10^{10} $ \\ \hline
				XVIII-cas4-fb-64e56o & $ 1.00 \times 10^{-4}$ & 56 & 969 & 45760 & 293.6 & 3712 & $ 3.35 \times 10^{10} $ \\ 
				XVIII-cas4-fb-64e56o & $ 1.58 \times 10^{-4}$ & 56 & 921 & 42665 & 293.5 & 3712 & $ 3.19 \times 10^{10} $ \\ 
				XVIII-cas4-fb-64e56o & $ 2.51 \times 10^{-4}$ & 56 & 871 & 39511 & 293.5 & 3712 & $ 3.02 \times 10^{10} $ \\ 
				XVIII-cas4-fb-64e56o & $ 3.98 \times 10^{-4}$ & 56 & 819 & 36271 & 293.5 & 3712 & $ 2.86 \times 10^{10} $ \\ 
				XVIII-cas4-fb-64e56o & $ 6.31 \times 10^{-4}$ & 56 & 762 & 32962 & 293.5 & 3712 & $ 2.69 \times 10^{10} $ \\ 
				XVIII-cas4-fb-64e56o & $ 1.00 \times 10^{-3}$ & 56 & 705 & 29594 & 293.5 & 3711 & $ 2.51 \times 10^{10} $ \\ 
				XVIII-cas4-fb-64e56o & $ 1.58 \times 10^{-3}$ & 56 & 653 & 26201 & 293.5 & 3711 & $ 2.33 \times 10^{10} $ \\ 
				XVIII-cas4-fb-64e56o & $ 2.51 \times 10^{-3}$ & 56 & 592 & 22796 & 293.4 & 3711 & $ 2.16 \times 10^{10} $ \\ 
				XVIII-cas4-fb-64e56o & $ 3.98 \times 10^{-3}$ & 56 & 530 & 19439 & 293.3 & 3711 & $ 1.98 \times 10^{10} $ \\ 
				XVIII-cas4-fb-64e56o & $ 6.31 \times 10^{-3}$ & 56 & 470 & 16199 & 293.1 & 3710 & $ 1.81 \times 10^{10} $ \\ 
				XVIII-cas4-fb-64e56o & $ 1.00 \times 10^{-2}$ & 56 & 400 & 13174 & 292.8 & 3710 & $ 1.65 \times 10^{10} $ \\ 
				XVIII-cas4-fb-64e56o & $ 1.58 \times 10^{-2}$ & 56 & 336 & 10436 & 292.4 & 3710 & $ 1.51 \times 10^{10} $ \\ 
				XVIII-cas4-fb-64e56o & $ 2.51 \times 10^{-2}$ & 56 & 275 & 8001 & 291.7 & 3709 & $ 1.38 \times 10^{10} $ \\ 
				XVIII-cas4-fb-64e56o & $ 3.98 \times 10^{-2}$ & 56 & 216 & 5884 & 290.6 & 3709 & $ 1.26 \times 10^{10} $ \\ 
				XVIII-cas4-fb-64e56o & $ 6.31 \times 10^{-2}$ & 56 & 161 & 4168 & 288.8 & 3709 & $ 1.17 \times 10^{10} $ \\ 
				XVIII-cas4-fb-64e56o & $ 1.00 \times 10^{-1}$ & 56 & 123 & 2979 & 286.8 & 3708 & $ 1.10 \times 10^{10} $ \\ 
		\end{tabular}\end{ruledtabular}
	\end{table}
	
	\begin{table}
		\caption{\label{tab:RE-1} Double-factorization resource estimates for all steps of carbon dioxide fixation for active space sizes from $2$--$250$ orbitals at a truncation threshold of $\epsilon_{\text{in}}=1$mHartree.
			Examples with $\leq20$ orbitals marked by (*) are considered classically tractable by FCI methods.}
		\begin{ruledtabular}\begin{tabular}{lccccccc}
				\multirow{2}{*}{Step}  & Orbitals & \multirow{2}{*}{$R$} &\multirow{2}{*}{$M$} & $\alpha_{DF}$ & \multirow{2}{*}{$\lambda$} & \multirow{2}{*}{Qubits} & $\#$Toffoli\\
				& $N$ & & &  $/ \text{Hartree}$&&& gates\\
				\hline
				I-cas5-fb-4e5o & 5* & 12 & 54 & 6.5 & 0 & 126 & $ 1.3 \times 10^{7}$ \\ 
				I-cas5-fb-14e16o & 16* & 104 & 1293 & 25.6 & 1 & 907 & $ 2.7 \times 10^{8}$ \\ 
				I-cas5-fb-48e52o & 52 & 613 & 23566 & 177.3 & 3 & 6879 & $ 1.1 \times 10^{10}$ \\ I-highCD-cas5-fb-4e5o & 5* & 12 & 54 & 6.5 & 0 & 126 & $ 1.3 \times 10^{7}$ \\ 
				I-highCD-cas5-fb-14e16o & 16* & 104 & 1296 & 25.6 & 1 & 907 & $ 2.7 \times 10^{8}$ \\ 
				I-highCD-cas5-fb-48e52o & 52 & 616 & 23785 & 177.4 & 3 & 6879 & $ 1.1 \times 10^{10}$ \\  \hline
				II-cas6-fb-8e6o & 6* & 15 & 87 & 13.7 & 0 & 163 & $ 3.6 \times 10^{7}$ \\ 
				II-cas6-fb-34e26o & 26 & 203 & 3822 & 117.4 & 1 & 1624 & $ 2.5 \times 10^{9}$ \\ 
				II-cas6-fb-70e62o & 62 & 734 & 33629 & 374.4 & 3 & 8448 & $ 3.1 \times 10^{10}$ \\ 
				II-hf-fb-38e33o & 33 & 361 & 8795 & 138.6 & 2 & 3182 & $ 4.6 \times 10^{9}$ \\ 
				II-hf-fb-74e71o & 71 & 935 & 50506 & 422.8 & 4 & 12086 & $ 4.4 \times 10^{10}$ \\ 
				II-highCD-cas6-fb-8e6o & 6* & 15 & 87 & 13.7 & 0 & 163 & $ 3.6 \times 10^{7}$ \\ 
				II-highCD-cas6-fb-34e26o & 26 & 203 & 3822 & 117.3 & 1 & 1624 & $ 2.5 \times 10^{9}$ \\ 
				II-highCD-cas6-fb-70e64o & 64 & 740 & 33861 & 374.5 & 3 & 8720 & $ 3.2 \times 10^{10}$ \\ \hline
				II-III-cas6-fb-8e6o & 6* & 19 & 99 & 12.4 & 0 & 163 & $ 3.3 \times 10^{7}$ \\ 
				II-III-cas6-fb-38e29o & 29 & 253 & 5497 & 144.4 & 2 & 2710 & $ 3.7 \times 10^{9}$ \\ 
				II-III-cas6-fb-74e65o & 65 & 783 & 38122 & 416.0 & 3 & 8856 & $ 3.7 \times 10^{10}$ \\ 
				II-III-highCD-cas6-fb-8e6o & 6* & 19 & 99 & 12.4 & 0 & 163 & $ 3.3 \times 10^{7}$ \\ 
				II-III-highCD-cas6-fb-38e29o & 29 & 253 & 5500 & 144.4 & 2 & 2710 & $ 3.8 \times 10^{9}$ \\ 
				II-III-highCD-cas6-fb-74e65o & 65 & 789 & 38364 & 416.0 & 3 & 8856 & $ 3.7 \times 10^{10}$ \\ \hline
				V-cas11-fb-12e11o & 11* & 49 & 453 & 31.8 & 0 & 317 & $ 1.9 \times 10^{8}$ \\ 
				V-cas11-fb-32e25o & 24 & 146 & 2738 & 95.5 & 1 & 1500 & $ 1.7 \times 10^{9}$ \\ 
				V-cas11-fb-68e60o & 60 & 670 & 29319 & 372.1 & 3 & 8175 & $ 2.9 \times 10^{10}$ \\ 
				V-highCD-cas11-fb-12e11o & 11* & 49 & 453 & 31.8 & 0 & 317 & $ 2.0 \times 10^{8}$ \\ 
				V-highCD-cas11-fb-32e24o & 24 & 146 & 2739 & 122.1 & 1 & 1500 & $ 2.2 \times 10^{9}$ \\ 
				V-highCD-cas11-fb-68e60o & 60 & 671 & 29503 & 372.7 & 3 & 8175 & $ 2.9 \times 10^{10}$ \\  \hline
				VIII-cas2-fb-2e2o & 2* & 3 & 6 & 1.3 & 0 & 45 & $ 8.2 \times 10^{5}$ \\ 
				VIII-cas2-fb-40e29o & 29 & 270 & 5870 & 146.2 & 2 & 2710 & $ 3.9 \times 10^{9}$ \\
				VIII-cas2-fb-76e65o & 65 & 794 & 39088 & 425.7 & 3 & 8856 & $ 3.8 \times 10^{10}$ \\  VIII-highCD-cas2-fb-2e2o & 2* & 3 & 6 & 1.3 & 0 & 45 & $ 8.2 \times 10^{5}$ \\ 
				VIII-highCD-cas2-fb-40e29o & 29 & 270 & 5878 & 146.2 & 2 & 2710 & $ 3.9 \times 10^{9}$ \\ 
				VIII-highCD-cas2-fb-76e65o & 65 & 805 & 39393 & 425.9 & 3 & 8856 & $ 3.8 \times 10^{10}$ \\ \hline
				VIII-IX-cas4-fb-4e4o & 4* & 9 & 32 & 3.9 & 0 & 97 & $ 5.8 \times 10^{6}$ \\ 
				VIII-IX-cas4-fb-36e23o & 23 & 177 & 3063 & 121.7 & 1 & 1438 & $ 2.2 \times 10^{9}$ \\ 
				VIII-IX-cas4-fb-72e59o & 59 & 666 & 29286 & 384.4 & 3 & 8039 & $ 2.9 \times 10^{10}$ \\  VIII-IX-highCD-cas4-fb-4e4o & 4* & 9 & 32 & 3.9 & 0 & 97 & $ 5.8 \times 10^{6}$ \\ 
				VIII-IX-highCD-cas4-fb-36e23o & 23 & 177 & 3063 & 121.7 & 1 & 1438 & $ 2.2 \times 10^{9}$ \\ 
				VIII-IX-highCD-cas4-fb-72e59o & 59 & 668 & 29417 & 384.5 & 3 & 8039 & $ 3.0 \times 10^{10}$ \\
		\end{tabular}\end{ruledtabular}
	\end{table}
	
	\begin{table}
		\caption{\label{tab:RE-2} Double-factorization resource estimates for all steps of carbon dioxide fixation for active space sizes from $2$--$250$ orbitals at a truncation threshold of $\epsilon_{\text{in}}=1$mHartree.
			Examples with $\leq20$ orbitals marked by (*) are considered classically tractable by FCI methods.}
		\begin{ruledtabular}\begin{tabular}{lccccccc}
				\multirow{2}{*}{Step}  & Orbitals & \multirow{2}{*}{$R$} &\multirow{2}{*}{$M$} & $\alpha_{DF}$ & \multirow{2}{*}{$\lambda$} & \multirow{2}{*}{Qubits} & $\#$Toffoli\\
				& $N$ & & &  $/ \text{Hartree}$&&& gates\\
				\hline
				IX-cas16-fb-16e16o & 16* & 80 & 1050 & 58.3 & 1 & 939 & $ 6.0 \times 10^{8}$ \\ 
				IX-cas16-fb-32e26o & 26 & 155 & 3215 & 136.4 & 1 & 1624 & $ 2.8 \times 10^{9}$ \\ 
				IX-cas16-fb-68e62o & 62 & 638 & 28945 & 396.6 & 3 & 8447 & $ 3.1 \times 10^{10}$ \\ IX-highCD-cas16-fb-16e16o & 16* & 80 & 1050 & 58.3 & 1 & 939 & $ 6.0 \times 10^{8}$ \\ 
				IX-highCD-cas16-fb-32e26o & 26 & 155 & 3215 & 136.4 & 2 & 2430 & $ 2.9 \times 10^{9}$ \\ 
				IX-highCD-cas16-fb-68e62o & 62 & 641 & 29096 & 396.6 & 3 & 8447 & $ 3.1 \times 10^{10}$ \\ \hline
				XVIII-cas4-fb-4e4o & 4* & 9 & 36 & 5.2 & 0 & 102 & $ 8.0 \times 10^{6}$ \\ 
				XVIII-cas4-fb-28e20o & 20* & 155 & 2336 & 71.0 & 1 & 1212 & $ 1.1 \times 10^{9}$ \\ 
				XVIII-cas4-fb-64e56o & 56 & 705 & 29594 & 293.5 & 3 & 7407 & $ 2.1 \times 10^{10}$ \\ 
				XVIII-cas4-fb-100e100o & 100 & 982 & 75449 & 781.0 & 4 & 18017 & $ 1.2 \times 10^{11}$ \\ 
				XVIII-cas4-fb-150e150o & 150 & 1462 & 174699 & 1919.7 & 5 & 34218 & $ 5.5 \times 10^{11}$ \\ 
				XVIII-cas4-fb-250o250e & 250 & 2276 & 443046 & 6793.2 & 6 & 70019 & $ 3.9 \times 10^{12}$ \\ 
				XVIII-highCD-cas4-fb-4e4o & 4* & 9 & 36 & 5.2 & 0 & 102 & $ 8.0 \times 10^{6}$ \\ 
				XVIII-highCD-cas4-fb-28e20o & 20* & 155 & 2337 & 71.0 & 1 & 1212 & $ 1.1 \times 10^{9}$ \\ 
				XVIII-highCD-cas4-fb-64e56o & 56 & 712 & 29763 & 293.5 & 3 & 7407 & $ 2.1 \times 10^{10}$ \\ 
				XVIII-highCD-cas4-fb-100e100o & 100 & 995 & 76323 & 782.1 & 4 & 18017 & $ 1.2 \times 10^{11}$ \\ 
				XVIII-highCD-cas4-fb-150e150o & 150 & 1490 & 177979 & 1923.8 & 5 & 34218 & $ 5.5 \times 10^{11}$ \\ 
				XVIII-highCD-cas4-fb-250e250o & 250 & 2362 & 456067 & 6806.8 & 6 & 70019
				& $4.0 \times 10^{12}$
				\\
		\end{tabular}\end{ruledtabular}
	\end{table}

	\clearpage
	\section{Qubitization of double-factorized Hamiltonian}
	In this section, we present our approach for qubitizing the double-factorized Hamiltonian.
	We make heavy use of quantum circuit diagrams, and many our proofs follow from combining these diagrams.
	Thus in~\cref{sec:preliminaries}, we define our notation for many standard quantum circuit primitives and their costs.
	Of particular importance are quantum circuits for data-lookup in~\cref{sec:datalookup}, state preparation in~\cref{sec:state_preparation},  block-encoding in~\cref{sec:block_encoding}, and qubitization  in~\cref{sec:qubitization}.
	In~\cref{sec:new_circuit_primitives}, we also present new optimized constructions of two quantum circuit primitives.
	These are the programmable rotation gate array in~\cref{sec:programmableRotation}, which applies a single-qubit rotation with a rotation angle controlled by an integer index, and sparse multiplexed data-lookup in~\cref{sec:MultiplexSparseDataLookup}.
	These primitives are then combined in our qubitization of the double-factorized Hamiltonin in~\cref{sec:qubitization_electronic_structure}.
	
	\subsection{Standard quantum circuit primitives}
	\label{sec:preliminaries}
	
	In this section, we define quantum circuit primitives and their costs.
	Throughout, we use the shorthand $\qubits{x}\doteq\lceil \log_2{x}\rceil$ which is the number of bits needed to store an integer of size $x$.
	All the quantum circuits we use are constructed from the following primitive elements.
	
	\begin{align}
	\text{Unitary and its adjoint:}\quad&\image{Notationunitary},\quad\image{NotationunitaryAdjoint}.
	\end{align}
	\begin{align}
	\text{Controlled unitary:}\quad&{\image{NotationControlled}=\ket{0}\bra{0}\otimes \ii + \ket{1}\bra{1}\otimes U.}
	\end{align}
	\begin{align}
	\label{notation:Multiplexed}
	\text{Multiplexed unitaries:}\quad&{\image{NotationMultiplex}=\sum_j\ket{j}\bra{j}\otimes U_j,\quad\image{NotationMultiplex2}=\sum_{j,k}\ket{j}\bra{j}\otimes\ket{k}\bra{k}\otimes U_{j,k}.}
	\end{align}
	\begin{align}
	\text{Controlled multiplexed unitary:}\quad&{\image{NotationControlledMultiplex}=\ket{0}\bra{0}\otimes\ii+\ket{1}\bra{1}\otimes\sum_{j}\ket{j}\bra{j}\otimes U_j.}
	\end{align}
	\begin{align}
	\label{notation:DataLookup}
	\text{Data-lookup that XORs $\vec{x}_k$ with $\vec{z}$:}\quad&{\image{Notationdatalookup}.}
	\end{align}
	\begin{align}
	\label{notation:StatePreparation}
	\text{Unitary that prepares $\ket{\vec{a}}$:}\quad&{\image{statepreparationBasic}=\sum_{j}\sqrt{\frac{|a_j|}{\|\vec{a}\|_1}}\ket{j}.}
	\end{align}
	\begin{align}
	\label{notation:StatePreparationBar}
	\text{Unitary that prepares $\overline{\ket{\vec{a}}}$:}\quad&{\image{statepreparationBar}=\sum_{j}\mathrm{sign}[a_j]\sqrt{\frac{|a_j|}{\|\vec{a}\|_1}}\ket{j}.}
	\end{align}
	\begin{align}
	\label{notation:StatePreparationFunction}
	\text{Unitary that prepares $\ket{\vec{a},f}$:}\quad&{\image{statepreparationFunction}=\sum_{j}\sqrt{\frac{|a_j|}{\|\vec{a}\|_1}}\ket{f(j)}_1\ket{j}_2.}
	\end{align}
	\begin{align}
	\label{notation:BlockEncoding}
	\text{Block-encoding unitary:}\quad&{\image{Notationblockencoding}.}
	\end{align}
	\begin{align}
	\label{notation:BlockEncodingTrivial}
	\text{The block-encoding of a unitary is trivial:}\quad&{\image{NotationblockencodingTrivial}.}
	\end{align}
	\begin{align}
	\label{notation:BlockEncodingMultiplex}
	\text{Multiplexed block-encoding unitary:}\quad&{\image{NotationblockencodingMultiplex}.}
	\end{align}
	\begin{align}
	\label{notation:BlockEncodingAdd}
	\text{Adding block-encodings:}\quad&{\image{NotationblockencodingAdd}.}
	\end{align}
	\begin{align}
	\label{notation:BlockEncodingMultiply}
	\text{Multiplying block-encodings:}\\
	& \hspace*{-2cm}{\image{NotationblockencodingMultiply}.}
	\end{align}
	
	The costs of these quantum circuits are expressed in terms of primitive quantum gates and qubits.
	We define primitive quantum gates as those acting on at most two qubits, and we distinguish between Clifford gates $\{\textsc{Had},S,\textsc{CNOT}\}$, and the non-Clifford $\t$ gate. Note that we denote the Hadamard gate as $\textsc{Had}$ instead of the more traditional symbol $H$ to avoid confusion with the Hamiltonian.
	Some of these circuits may be implemented by ancilla qubits in addition to the `register' qubits. Typically, we do not show the ancilla qubits in the diagrams and implicitly assume that these invisible qubits are all borrowed, meaning that they are returned to the same initial state at the end of the circuit.
	We call these ancilla qubits `clean' if their initial state is the computational basis state $\ket{0}$.
	In contrast, We call these ancilla qubits `dirty' if their initial is arbitrary and unknown.
	
	In some cases, a quantum circuit $U$ is approximated by $U'$ to some error $\|U-U'\|\le\epsilon$ in spectral norm.
	The errors of multiple approximate quantum circuits add linearly, following the triangle inequality
	\begin{align}
	\|U_0U_1\cdots U_{N-1}-U'_0U'_1\cdots U'_{N-1}\|\le \sum_{j\in[N]}\|U_j-U_j'\|.
	\end{align}

	\subsubsection{Data-lookup oracle}
	\label{sec:datalookup}
	Given a list of $d$ bit-strings $\vec{a}\in\{0,1\}^{d\times b}$, each of length $b$, the data-lookup oracle in~\cref{notation:DataLookup} returns the bit-string $a_x$, that is
	$D\ket{x}\ket{z}=\ket{x}\ket{z\oplus a_x}$.
	In addition to the $b+\lceil\log_{2}(d)\rceil$ qubits needed to store its inputs and outputs, this oracle has the following cost.
	\begin{lemma}[Data-lookup oracle~\cite{Babbush2018encoding}]
		\label{lem:DataLookup}
		The data-lookup oracle in~\cref{notation:DataLookup} and its controlled version can be implemented using
		\begin{itemize}
			\item Toffoli gates: $(d-1)$.
			\item Clifford gates: $\Theta(db)$.
			\item Clean ancillary qubits: $\lceil\log_{2}(d)\rceil$ .
		\end{itemize}
	\end{lemma}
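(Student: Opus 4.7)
The plan is to implement the data-lookup oracle by the unary iteration construction of Babbush et al.: build a binary tree whose $d$ leaves correspond to the possible values of $x\in\{0,\ldots,d-1\}$, and arrange the circuit so that at each leaf there is exactly one ancilla "flag" qubit which equals $1$ iff the address register equals the index of that leaf. Concretely, I would traverse the tree depth-first; at an internal node at depth $k$ I use one Toffoli that ANDs the parent node's flag qubit together with the $k$-th address bit (or its negation, for the left child obtained via an $X$ on the address bit) into a fresh clean ancilla. When the traversal reaches the leaf for index $x$, the corresponding flag qubit controls $\mathrm{CNOT}$s onto each bit $j$ of the output register for which $(a_x)_j=1$, XOR-ing $\vec{a}_x$ into $\vec{z}$. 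After processing a subtree the leaf flags are uncomputed via measurement-based AND uncomputation, which uses only Clifford gates and classical feedback, and the ancillas are freed for the sibling subtree.

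The stated counts then follow by straightforward accounting. A binary tree with $d$ leaves has exactly $d-1$ internal nodes, so the compute direction consumes $d-1$ Toffoli gates; the uncompute direction is free of non-Clifford gates because of the measurement trick, giving the claimed $(d-1)$ total. The Clifford count breaks into $\Theta(db)$ from the leaf $\mathrm{CNOT}$s that copy $\vec{a}_x$ into $\vec{z}$, plus $O(d)$ from the $X$s on address bits used for left branches and from the measurement-based uncomputation — together $\Theta(db)$. Because the traversal is depth-first, at any moment only the ancillas on the current root-to-leaf path are alive, bounding the number of clean ancillas by the tree depth $\lceil\log_2 d\rceil$. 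For the controlled version, the external control bit is folded into the very top of the tree by replacing the root's "copy address bit" $\mathrm{CNOT}$ with a Toffoli between the external control, the top address bit, and the root flag; this does not change the leading $(d-1)$ Toffoli count if one absorbs the root's AND into this single Toffoli, and adds at most $O(1)$ further Clifford gates.

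The main obstacle I expect is arguing the Toffoli count carefully. A naive implementation would use $2(d-1)$ Toffolis, one per internal node on the way down and another on the way up, so the sharp $(d-1)$ bound relies essentially on the measurement-based AND uncomputation, whose correctness and Clifford-only character must be invoked. A secondary subtlety is ensuring the ancilla count really saturates at $\lceil\log_2 d\rceil$ rather than growing with the number of simultaneously-live leaf flags, which requires committing to a strict depth-first schedule and showing that all intermediate flags of a completed subtree are returned to $\ket{0}$ before the sibling subtree begins. Once these two points are handled, the controlled variant is an immediate corollary.
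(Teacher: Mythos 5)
Your overall strategy is the right one: the paper imports this lemma from Babbush et al.\ without giving its own proof, and the circuit you describe is indeed their unary-iteration/QROM construction. Your ancilla bound $\lceil\log_2 d\rceil$ is correctly argued (a strict depth-first schedule keeps only one root-to-leaf path of flag qubits live at a time), the $\Theta(db)$ Clifford count from the leaf CNOTs is right, and folding the external control into the root flag is exactly how the controlled version is obtained at no extra Toffoli cost.

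However, your Toffoli accounting has a concrete hole. In the circuit as you describe it, a flag qubit is materialized for every non-root node of the tree (leaves included, since it is the leaf flag that fans out onto the output register), and each such flag is produced by a fresh Toffoli from its parent's flag and an address bit. A binary tree with $d$ leaves has $2d-2$ non-root nodes, so the compute direction of your circuit already costs $2(d-1)$ Toffolis, not $d-1$, even after granting that measurement-based uncomputation makes the uncompute direction free. Measurement-based uncomputation alone therefore cannot deliver the stated count, contrary to your closing remark. The missing ingredient is the sibling-transition identity $(f\wedge\neg b)\oplus f = f\wedge b$: once the left child's flag $f\wedge\neg b$ has served its purpose, a single CNOT from the parent flag converts that same ancilla in place into the right child's flag $f\wedge b$, so only left children --- one per internal node, $d-1$ in total --- ever cost a Toffoli. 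With that observation combined with measurement-based uncomputation your count closes and the recursion $A(d)=1+2A(d/2)$, $A(1)=0$, gives exactly $d-1$; without it the argument as written establishes only $2(d-1)$.
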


	According to Low et al.~\cite{Low2018Trading}, the Toffoli gate cost of data-lookup can be further reduced by adding $\lambda b$ additional qubits, where $\lambda\ge 0$ is an integer.
	These qubits can be clean, meaning they start in and are returned to the computational basis state $\ket{0}$.
	Circuit optimizations by Berry et al.~\cite{Berry2019CholeskyQubitization} further reduce the cost by constant factors.
	
	\begin{lemma}[Data-lookup oracle with clean qubit assistance~\cite{Low2018Trading,Berry2019CholeskyQubitization}]
		\label{lem:DataLookupAncilla}
		For any integer $\lambda\ge 0$, the data-lookup oracle in~\cref{notation:DataLookup} and its controlled version can be implemented using
		\begin{itemize}
			\item Toffoli gates: $d/( 1+\lambda)+\lambda b+\mathcal{O}(\log{d})$.
			\item Clifford gates: $\Theta(db)$.
			\item Clean ancillary qubits: $\lambda b+\lceil\log_{2}(d/\lambda)\rceil+\mathcal{O}(1)$.
		\end{itemize}
	\end{lemma}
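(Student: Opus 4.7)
The plan is to follow the SELECT-SWAP (sometimes called QROAM) construction of Low, Kliuchnikov, and Schaeffer together with the constant-factor improvements of Berry et al., trading $\lambda b$ clean ancilla qubits for a reduction in the dominant $\Theta(d)$ Toffoli cost of the naive data-lookup from \cref{lem:DataLookup}. First I would split the index register as $\ket{x}=\ket{x_H}\ket{x_L}$, with $x_L$ occupying $\lceil\log_2(1+\lambda)\rceil$ low-order bits and $x_H$ the remaining $\lceil\log_2(d/(1+\lambda))\rceil$ high-order bits, so that $x=(1+\lambda)x_H+x_L$. View the length-$d$ table as a coarsened table of length $d/(1+\lambda)$ whose $x_H$-th row is the concatenation $a_{(1+\lambda)x_H}\Vert a_{(1+\lambda)x_H+1}\Vert\cdots\Vert a_{(1+\lambda)x_H+\lambda}$ of width $(1+\lambda)b$.

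The first stage applies \cref{lem:DataLookup} to this coarsened table, loading one full block of $1+\lambda$ data words into a fresh $(1+\lambda)b$-qubit register at cost $d/(1+\lambda)-1$ Toffolis. The second stage reads the $x_L$-th word of that block into the target register $\ket{z}$ using a binary tree of controlled swaps of $b$-bit registers, addressed by a unary iteration on $\ket{x_L}$. The swap tree has $\lambda$ leaves and therefore costs exactly $\lambda b$ Toffolis for the SWAP content plus $\mathcal{O}(\log\lambda)$ Toffolis for the unary iteration. After the target has been XORed with $a_x$, the $(1+\lambda)b$ block ancillas still hold data and must be returned to $\ket{0}$.

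The key subtlety, and the main obstacle, is uncomputing this block register cheaply: a direct adjoint call to \cref{lem:DataLookup} would double the first-stage cost. Following Low et al. and Berry et al., I would instead Hadamard the block ancillas and measure in the computational basis; the residual data-dependent phase can be absorbed into a single-output-bit data-lookup over the coarsened index, whose Toffoli count (optimized with its own small SELECT-SWAP parameter) is only $\mathcal{O}(\log d)$ once one accounts for the measurement outcomes via Clifford post-processing. Verifying that the phase correction is indeed implementable with only logarithmic Toffoli overhead is the delicate step I would lean most heavily on the references for. Summing the three contributions yields $d/(1+\lambda)+\lambda b+\mathcal{O}(\log d)$ Toffolis, and the XOR/routing bookkeeping contributes $\Theta(db)$ Clifford gates; the clean ancillas are $\lambda b$ for the block register, $\lceil\log_2(d/\lambda)\rceil$ for the iteration on $x_H$ (reused inside \cref{lem:DataLookup}), and $\mathcal{O}(1)$ scratch.

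For the controlled version I would observe that only the SWAP tree and the small phase-correction lookup need to be made controlled, because with the control off the loaded block is simply loaded and unloaded without ever touching $\ket{z}$. Lifting each SWAP-Toffoli to a doubly-controlled SWAP adds at most an extra Toffoli per bit and is absorbed into the $\lambda b$ term, while controlling the logarithmic-cost phase correction is absorbed into $\mathcal{O}(\log d)$. This reproduces the same asymptotic bound as the uncontrolled case.
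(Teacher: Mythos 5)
The paper does not prove this lemma---it imports it from Refs.~\cite{Low2018Trading,Berry2019CholeskyQubitization}---so your proposal can only be judged against the cited SELECT-SWAP/QROAM construction and the paper's own remarks immediately following the lemma. Your overall architecture (split $x$ into high and low parts, run \cref{lem:DataLookup} on the coarsened table of $d/(1+\lambda)$ rows of width $(1+\lambda)b$, then route the desired $b$-bit word with a controlled-swap tree costing $\lambda b$ Toffolis) is exactly that construction, and the first two cost contributions are right.

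The genuine gap is in your third stage. You claim the loaded block can be returned to $\ket{0}$ by X-basis measurement followed by a phase correction costing only $\mathcal{O}(\log d)$ Toffolis. That is not correct: the residual phase is $(-1)^{s\cdot B(x_H)}$ where $B(x_H)$ is the block content, so the fixup is a one-bit phase lookup over $d/(1+\lambda)$ indices. Even optimized with its own SELECT-SWAP over the freed ancillas this costs $\Theta(\sqrt{d/(1+\lambda)})$ Toffolis, not $\mathcal{O}(\log d)$; it is data- and index-dependent and cannot be absorbed into Clifford post-processing. The accounting consistent with the lemma and with the paper's subsequent discussion is that the compute step simply \emph{leaves} the $\lambda b$ ancillas holding garbage---the paper says explicitly that ``any garbage qubits produced by \cref{lem:DataLookupAncilla} can [be] measured and the results stored in classical memory for later uncomputation,'' with that later erasure priced separately by $\dataU{d}{b}{\lambda}$. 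Relatedly, you have misattributed the $\mathcal{O}(\log d)$ term: per the paper's note after the lemma, it is the cost of reversibly computing the quotient and remainder of $x/(1+\lambda)$ when $1+\lambda$ is not a power of two, not a phase-correction cost. Two smaller points: the swap tree has $1+\lambda$ leaves and $\lambda$ controlled swaps of $b$-bit words (your count $\lambda b$ is right, your leaf count is off by one), and for the controlled version the standard route is to absorb the control into the root of the unary-iteration tree of the coarsened lookup at no extra Toffoli cost, rather than controlling only the swap network while insisting the block be loaded and unloaded unconditionally.
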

	Note that the additional $\mathcal{O}(1)$ clean qubits and $\mathcal{O}(\log{(d)})$ Toffoli gates only needed when $1+\lambda$ is not a power of two.
	In this case, they are used in an intermediate step to reversibly compute the remainder and quotient of $j/(1+\lambda)$, where the numerator $j\in[N]$ is stored in a $\qubits{d}$ qubit register.
	In the following, we will omit this additive cost as it is subdominant in all our applications.
	We find it useful to define the Toffoli gate count function
	\begin{align}
	\data{d}{b}{\lambda}=\min_{\lambda'\in[0,\lambda]}\left(d/( 1+\lambda')+\lambda' b\right)\gtrsim \min{\left[d,2\sqrt{bd}\right]},
	\end{align}
	which returns the smallest possible Toffoli gate count for any number of $\lambda$ available clean qubits.
	The bit-strings output by these lookup oracles may be uncomputed by applying their adjoint.
	This doubles their gate complexity at most.
	However, the gate complexity of uncomputation can be further improved~\cite{Berry2019CholeskyQubitization} used measurement-based uncomputation.
	By using $\lambda + \mathcal{O}(\log(d/\lambda))$ qubits, measurement-based uncomputation of data-lookup reduces the additive $\lambda b$ Toffoli gate term to $\lambda$, which becomes significant when  $\lambda b\sim\sqrt{bd}$.
	More importantly, any garbage qubits produced by~\cref{lem:DataLookupAncilla} can measured and the results stored in classical memory for later uncomputation. 
	This frees up clean ancilla qubits and means that the $\lambda$ parameter for uncomputation can be optimized separately from that of computation in reducing Toffoli costs.
	Thus we also find it useful to define the cost of uncomputation as
	\begin{align}
	\label{eq:Data_uncomputation}
	\dataU{d}{b}{\lambda}=\min_{\lambda'\in[0,\lambda]}\left(d/( 1+\lambda')+\lambda'\right)\gtrsim \min{\left[d,2\sqrt{d}\right]}.
	\end{align}
	In most of the cases we consider, uncomputation of data-lookup is an order of magnitude cheaper than computation. 

	These assisting qubits can also be dirty, meaning that start in and are returned to the same initial state.
	This is useful whenever the quantum algorithm has any idling qubits.
	To simply notation in the following, we will assume that $\lambda$ is a power of two.
	\begin{lemma}[Data-lookup oracle with dirty qubit assistance~\cite{Low2018Trading,Berry2019CholeskyQubitization}]
		\label{lem:DataLookupDirtyAncilla}
		For any integer $\lambda\ge b$ such that $1+\lambda$ is a power of two, the data-lookup oracle in~\cref{notation:DataLookup} and its controlled version can be implemented using
		\begin{itemize}
			\item Toffoli gates: $2d/(1+\lambda)+4\lambda b$.
			\item Clifford gates: $\Theta(db)$.
			\item Clean ancillary qubits: $\lceil\log_{2}(d/\lambda)\rceil$.
			\item Dirty ancillary qubits: $\lambda b$.
		\end{itemize}
	\end{lemma}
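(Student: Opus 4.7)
The plan is to upgrade the clean-ancilla construction of~\cref{lem:DataLookupAncilla} (the SELECT-SWAP / QROAM circuit) so that the $\lambda b$ working qubits may start in an arbitrary state $|\psi\rangle$, following the strategy of Low, Kliuchnikov, and Schaeffer~\cite{Low2018Trading} and refined by Berry et al.~\cite{Berry2019CholeskyQubitization}. The headline price is to roughly double each of the two main sub-circuits.

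First I would decompose the clean construction into three primitive operations: (i) a unary iteration on the high bits $x_H$ of the index, costing about $d/(1+\lambda)$ Toffoli gates; (ii) a conditional fanout $F$ that, controlled on the unary flag, XORs the $(1+\lambda)b$ data bits of group $x_H$ into the working register (pure CNOT/Toffoli, with no logic that depends on the target's initial value); and (iii) a $b$-wise swap cascade $S_{x_L}$ that routes the $x_L$-th of the $(1+\lambda)$ chunks into the designated $b$-qubit output, costing about $\lambda b$ Toffoli gates. Each of $F$ and $S_{x_L}$ is a permutation of computational basis states, so it behaves correctly on any initial state, including a dirty $|\psi\rangle$. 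The only symptom of dirtiness is that after $S_{x_L}\circ F$ the output register holds $z\oplus a_x\oplus\psi_{x_L}$ rather than $z\oplus a_x$, and the working register holds a permuted version of $\psi\oplus A^{(x_H)}$ instead of $\psi$.

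Next I would cancel the spurious $\psi$-contribution on both registers by bracketing the swap cascade with a second invocation of the fanout and of the swap cascade, chosen in a Kitaev-style commutator pattern: the two fanouts XOR the same data in and then out, returning every slot of the working register to its initial value by the involution $F^2=\id$, while the two swap cascades, once conjugated by the fanout, act on the output with the net effect of a single XOR by $a_x$. The correctness argument is a slot-by-slot Boolean calculation over the $1+\lambda$ positions, handled in the two cases $x_L=0$ (no swap) and $x_L\ne 0$, and it only uses that $F$ and $S_{x_L}$ are classical permutations and that $F$ depends on $x_H$ alone. Summing the two copies of each primitive then yields the $2d/(1+\lambda)$ Toffoli term from unary iteration and the $4\lambda b$ Toffoli term from the swap cascades (the latter acquires an additional factor of two because, without free clean scratch, each $b$-qubit Fredkin must be made reversible in a way that doubles its Toffoli count relative to the clean case).

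Finally I would read off the remaining costs: the unary iteration register still requires $\lceil\log_2(d/\lambda)\rceil$ \emph{clean} qubits so that the unary flags can be computed and uncomputed exactly, whereas the $\lambda b$ working qubits may be dirty because the bracketed identity returns them to $|\psi\rangle$; Clifford gate counts are dominated by the $\Theta(db)$ fanout CNOTs as before. The controlled version is obtained by controlling only the outer swap cascade, since the inner $F$--$F$ bracket cancels on the dirty register independently of any control wire, so only $O(1)$ additional Toffolis are needed. The main obstacle is the explicit verification of the bracketing identity: one must confirm, slot by slot and bit by bit, that the working register is returned to $|\psi\rangle$ exactly for every $\psi$ and every $(x_H,x_L)$, which is tedious but reduces to $F^2=\id$ together with the fact that $S_{x_L}$ permutes $b$-qubit chunks without destroying information.
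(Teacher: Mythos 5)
The paper does not actually prove this lemma; it is imported verbatim from Refs.~\cite{Low2018Trading,Berry2019CholeskyQubitization}, so your proposal must be judged against the construction in those works. Your overall strategy is the right one (select--swap with the $(1+\lambda)b$ working register allowed to be dirty, a doubled query to the fanout $F$ so that $F^2=\id$ restores the register, and CNOT copies into the output arranged so the unknown $\psi_{x_L}$ cancels), but the circuit you describe does not close, and the cost accounting papers over the gap. The correct sequence is $S,\;\mathrm{CNOT},\;S^{\dagger},\;F,\;S,\;\mathrm{CNOT},\;S^{\dagger},\;F$: the output is first contaminated with $z\oplus\psi_{x_L}$, the register is un-permuted and fanned out to $\psi\oplus A^{(x_H)}$, and only a \emph{second} pair $S,S^{\dagger}$ can bring $\psi_{x_L}\oplus a_x$ back to the output slot to cancel the contamination. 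That is \emph{four} applications of the controlled-swap cascade and two of the fanout, which is where $4\lambda b+2d/(1+\lambda)$ comes from. With only two swap cascades, as you describe, the output cannot be cleansed of $\psi_{x_L}$: after your first $S$ the register is permuted, so the second $F$ would XOR the data into the wrong slots unless you first undo the permutation, and once you do that you have no further access to $\psi_{x_L}$. Your recovery of the factor of $4$ by asserting that ``each $b$-qubit Fredkin must be made reversible in a way that doubles its Toffoli count'' is not correct --- a Fredkin costs one Toffoli whether the surrounding ancillas are clean or dirty --- so the right total is reached only by an incorrect compensation.

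The claim about the controlled version is also wrong as stated. If you control only the swap cascades and the control is off, the two CNOT blocks still fire and the net effect on the output is an XOR by the first entry of group $x_H$ (the $F$--$\mathrm{CNOT}$--$F$--$\mathrm{CNOT}$ pattern cancels $\psi_0$ but not $a_{x_H,0}$), so the circuit does not reduce to the identity. The control must instead be absorbed into the root of the unary iteration inside $F$ (making $F$ the identity when the control is off), which is what keeps the overhead at $\mathcal{O}(1)$ Toffoli gates. The remaining items --- the $\lceil\log_2(d/\lambda)\rceil$ clean qubits for the unary-iteration flags, the $\Theta(db)$ Clifford count from the fanout, and the observation that $F$ and $S$ are classical permutations and hence behave correctly on arbitrary $\ket{\psi}$ --- are fine.
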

	Note that when $\lambda \le 1$, there is no advantage over the original construction in~\cref{lem:DataLookup}.
	We find it useful to define the Toffoli gate count function
	\begin{align}
	\datadirty{d}{b}{b\lambda}&=\min_{\lambda'\in[1,\lambda]}\left(d,2d/(1+\lambda')+4\lambda'b\right)\gtrsim \min{\left[d,4\sqrt{2bd}\right]},\\\nonumber
	\dataUdirty{d}{b}{b\lambda}&=\min_{\lambda'\in[1,\lambda]}\left(d,2d/(1+\lambda')+4\lambda'\right)\gtrsim \min{\left[d,4\sqrt{2d}\right]},
	\end{align}
	which returns the smallest possible Toffoli gate count for any number of $\lambda$ available dirty qubits.
	Note a slight difference in notation compared to~\cref{eq:Data_uncomputation} -- there, the last subscript $\lambda$ is the number of times the output register is duplicated, whereas the last subscript $\lambda b$ parameterizes the total number of dirty qubits available.
	Similar to the case using clean qubits, the Toffoli cost of uncomputation is less than that of computation~\cite{Berry2019CholeskyQubitization}.

	\subsubsection{State preparation unitary}
	\label{sec:state_preparation}
	Quantum state preparation is a unitary circuit that prepares a desired quantum state.
	
	A number of different quantum circuit implementations of~\cref{notation:StatePreparation,notation:StatePreparationBar} are known, each with different trade-offs in qubit count, and the various quantum gates.
	For instance, the approach by Shende et al.~\cite{shende2006synthesis} uses $\lceil\log_2{d}\rceil$ qubits and $d$ arbitrary single-qubit rotations, and $\mathcal{O}(d)$ other two-qubit Clifford gates.

	For our purposes, it suffices to prepare quantum states where each $\ket{j}$ is, in general, entangled with some arbitrary quantum state $\ket{\garb_j}$.
	\begin{align}
	\label{eq:StatePreparationGarbage}
	\ket{\vec{a}}=\sum_{j=0}^{d-1}\sqrt{\frac{|a_j|}{\|\vec{a}\|_1}}\ket{j}\ket{\garb_j}
	\end{align}
	As this includes~\cref{notation:StatePreparation} as a special case, quantum circuits for state preparation with garbage can use fewer $\t$ gates, though at the expense of more qubits.
	This garbage state may be safely ignored in the remainder of this manuscript, so we do not differentiate between the state preparation unitaries of~\cref{notation:StatePreparation} and~\cref{eq:StatePreparationGarbage}.
	Moreover, the circuits for $\ket{\vec{a}}$ and $\overline{\ket{\vec{a}}}$ are very similar and have the same $\t$ gate cost.
	We will repeatedly invoke the following implementation which approximates each coefficient to a targeted precision.
	\begin{lemma}[Approximate quantum state preparation with garbage~\cite{Babbush2018encoding}]
		\label{lem:StatePreparationGarbage}
		Given a list of $d$ positive coefficients $\vec{a}\in\mathbb{R}^d$ and the desired bits of precision $\mu$, the quantum state
		\begin{align}
		\ket{\psi}=\sum_{j=0}^{d-1}\sqrt{p_j}\ket{j}\ket{\garb_{j}},
		\end{align}
		where $\left|p_j-\frac{a_j}{\|\vec{a}\|_1}\right|\le \frac{2^{-\mu}}{d}$ (which implies that $\left\|\vec{p}-\frac{\vec{a}}{\|\vec{a}\|_1}\right\|_1\le 2^{-\mu}$)
		can be prepared by a unitary U that is implemented using one application of any data-lookup oracle from~\cref{sec:datalookup} for $d$ bit-strings of length $\lceil\log_{2}(d)\rceil+\mu$. The total cost for implementing $U$ is given by:
		\begin{itemize}
			\item Toffoli gates: $\mu+\data{d}{\lceil\log_{2}(d)\rceil+\mu}{\lambda}+\Theta(\log{(d)})$.
			\item Arbitrary single-qubit rotations: $1$.
			\item Clifford gates: $\Theta(d\mu)$.
			\item Garbage qubits: $2\mu+\qubits{d}$.
			\item Clean qubits: $\qubits{d}+\mathcal{O}(1)$.
			\item Dirty qubits: $\lambda$.
		\end{itemize}
	\end{lemma}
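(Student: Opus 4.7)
The plan is to prove this by the coherent alias sampling technique of Babbush et al., adapted to use the clean/dirty qubit assisted data-lookup oracles of~\cref{lem:DataLookup,lem:DataLookupAncilla,lem:DataLookupDirtyAncilla}. First I would do classical preprocessing with Walker's alias method: given $\vec{a}$ with $\|\vec{a}\|_1$, produce for each $j\in[d]$ a quantized ``keep'' probability $K_j\in\{0,1,\dots,2^\mu-1\}$ and an ``alt'' index $\mathrm{alt}_j\in[d]$ such that the induced distribution
\begin{equation}
p_j = \tfrac{1}{d}\Bigl(\tfrac{K_j}{2^\mu} + \sum_{j':\mathrm{alt}_{j'}=j}\bigl(1-\tfrac{K_{j'}}{2^\mu}\bigr)\Bigr)
\end{equation}
satisfies $|p_j - a_j/\|\vec{a}\|_1|\le 2^{-\mu}/d$. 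This is a standard discretized alias construction and the error bound follows from quantizing each $K_j$ to $\mu$ bits.

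Next I would construct the quantum circuit in five steps. (1) Prepare a uniform superposition $\frac{1}{\sqrt{d}}\sum_{j=0}^{d-1}\ket{j}$ on the $\qubits{d}$-qubit index register: $\qubits{d}$ Hadamards suffice when $d$ is a power of two, otherwise one arbitrary single-qubit rotation plus $O(\log d)$ controlled-rotations realized with $O(\log d)$ Toffolis suffice to zero out basis states with $j\ge d$. (2) Hadamard a fresh $\mu$-qubit register to obtain $\frac{1}{\sqrt{2^\mu}}\sum_r\ket{r}$. (3) Invoke one data-lookup oracle keyed on $\ket{j}$ that XORs the $(\qubits{d}+\mu)$-bit string $(\mathrm{alt}_j,K_j)$ into two fresh garbage registers, at cost $\data{d}{\qubits{d}+\mu}{\lambda}$ Toffolis by~\cref{lem:DataLookupAncilla}. (4) Compute the comparison bit $[r<K_j]$ into a one-qubit ancilla at cost $\mu$ Toffolis. (5) Controlled on this bit, swap the index register with the alt register using $\qubits{d}=O(\log d)$ Toffolis.

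The amplitude analysis then checks that, tracing out the registers holding $r$, $K_j$, $\mathrm{alt}_j$, and the comparison bit into a single ``garbage'' label, the reduced coefficient on $\ket{j}$ is exactly $\sqrt{p_j}$: contributions arise from the $K_j$ computational-basis values of $r$ that leave the index untouched, plus the $(2^\mu-K_{j'})$ values for each $j'$ with $\mathrm{alt}_{j'}=j$ that route amplitude into $\ket{j}$ via the swap. Summing and dividing by the $d\cdot 2^\mu$ normalization of the uniform superpositions reproduces $p_j$ as defined above, giving the $\ell_1$ error $\le 2^{-\mu}$ claimed.

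Finally I would tally the resources: $\mu$ Toffolis from the comparator, $\data{d}{\qubits{d}+\mu}{\lambda}$ from the lookup, and $\Theta(\log d)$ from the swap and the non-power-of-two uniform-state preparation; exactly one arbitrary single-qubit rotation (only needed when $d$ is not a power of two); $\Theta(d\mu)$ Cliffords dominated by the internal CNOT pattern of the lookup; $2\mu+\qubits{d}$ garbage qubits, split as $\mu$ for $r$, $\mu$ for $K_j$, and $\qubits{d}$ for $\mathrm{alt}_j$; $\qubits{d}+\mathcal{O}(1)$ clean qubits, inheriting the $\qubits{d}$ clean ancillas used inside~\cref{lem:DataLookupAncilla} and a constant number used for the comparator output and rotation synthesis; and the $\lambda$ dirty qubits enter only through the optional $\lambda$-parameter of the underlying data-lookup oracle. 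The main obstacle I expect is the careful bookkeeping in step (3)--(5), namely, distinguishing clean ancillas reused inside the lookup from the persistent garbage registers that must stay coherent through the controlled swap, since the lemma's cost claim is tight and depends on not double-counting these.
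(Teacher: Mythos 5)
Your proposal is correct and follows the same coherent alias-sampling construction of Babbush et al.~\cite{Babbush2018encoding} that the paper relies on by citation (uniform superpositions over the index and a $\mu$-bit register, a single data-lookup of the concatenated alt-index/keep-probability strings, a $\mu$-bit comparator, and a controlled swap), with the $\lambda$-dependent lookup cost substituted exactly as the lemma intends. The resource tally, including the split of the $2\mu+\qubits{d}$ garbage qubits and the single rotation for non-power-of-two $d$, matches the stated costs.
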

	Our implementation of~\cref{lem:StatePreparationGarbage} introduces a small modification that is useful for synthesizing the multiplexed version of state preparation. 
	In the original procedure~\cite{Babbush2018encoding}, let $L=\lceil\log_{2}(d)\rceil$ be the number of bits needed to store any integer between $0$ and $d-1$, and let $\mu$ be the bits of precision to which $p_j/d$ is specified.
	Then one prepares a uniform superposition over $d\times 2^\mu$ elements $\frac{1}{\sqrt{d}}\sum^{d-1}_{j=0}\ket{j}\ket{+}^{\otimes\mu}$ and applies data-lookup that writes out two bitstrings $k=f(j)\in[d]$ and $\tilde{p}_j\in[2^\mu]$. 
	Note that the map from $j$ to $f(j)$ can be surjective so $f^{-1}({k})=\{j : f(j)=k\}$.  
	This produces the state 
	\begin{align}
	\label{eq:stateprepstep1}
	\frac{1}{\sqrt{d}}\sum^{d-1}_{j=0}\ket{j}\ket{f(j)}\ket{+}^{\otimes\mu}\ket{\tilde{p}_j}.
	\end{align}
	Let $\textsc{Comp}$ be a unitary that compares two integers $x,y$ and writes the result in a qubit.
	\begin{align}
	\textsc{Comp}\ket{x}\ket{y}\ket{0}=\ket{x}\ket{y}\ket{x\ge y}.
	\end{align}
	When applied to compare $\ket{+}^{\otimes\mu}$ and $\ket{\tilde{p}_j}$, this results in
	\begin{align}
	\label{eq:stateprepstep2}
	%
	%
	\frac{1}{\sqrt{d}}\sum^{d-1}_{j=0}\ket{j}\ket{f(j)}\left(
	\sqrt{\frac{\tilde{p}_j}{2^{\mu}}}\ket{\phi_j}\ket{0}+
	\sqrt{\frac{2^{\mu}-\tilde{p}_j}{2^{\mu}}}\ket{\chi_j}\ket{1}
	\right),
	\end{align}
	where the $\ket{\phi_j}, \ket{\chi_j}$ are some normalized quantum states. 
	Now, swap the registers $\ket{j}$ and $\ket{f(j)}$ controlled on the comparator output qubit. By collecting the $\ket{j}$ index, 
	\begin{align}
	\frac{1}{\sqrt{d}}\sum^{d-1}_{j=0}\ket{j}
	\left[
	\sqrt{\frac{\tilde{p}_j}{2^{\mu}}}\ket{f(j)}\ket{\phi_j}\ket{0}
	+\sum_{k\in f^{-1}(j)}\sqrt{\frac{2^{\mu}-\tilde{p}_k}{2^{\mu}}}\ket{k}\ket{\chi_k}\ket{1}\right]
	=\sum^{d-1}_{j=0}\sqrt{p_j}\ket{j}\ket{\garb}_j.
	\end{align}
	Thus any desired distribution of probabilities $p_j=\frac{1}{d}\left(\frac{\tilde{p}_j}{2^{\mu}}+\sum_{k\in f^{-1}(j)}\frac{2^{\mu}-\tilde{p}_k}{2^{\mu}}\right)$ may be specified up to an error of $\frac{1}{d2^{\mu}}$ by an appropriate choice of $f(j)$ and $\tilde{p}_j$.
	
	We modify~\cref{lem:StatePreparationGarbage} by always choosing $\tilde{p}_j=0$ for any $j>d$, and then increasing $d$ to $2^L$.
	One disadvantage is that the number of Toffoli gates required rises from $d$ to $2^L$, but this is at most only a factor of two larger.
	Moreover, the advantage of this is in simplifying preparation of a uniform superposition $\frac{1}{\sqrt{d}}\sum^{d-1}_{j=0}\ket{j}$. 
	When $d$ is not a power of two, uniform state preparation is complicated by the need for arbitrary single-qubit rotations and amplitude amplification. 
	In contrast, preparing a uniform superposition over $2^L$ elements is accomplished trivially by $L$ Hadamard gates.
	This modification is especially useful when applying multiplexed state preparation, which is a unitary that prepares the state $\ket{\vec{a}_j}$ over $d_j$ elements controlled on an index $\ket{j}$.
	After choosing the $L$ that stores the largest integer $d_j$ and a $\mu$ that ensures the error of all $\ket{\vec{a}_j}$ are suitably bounded, the only circuit element that changes between different target states is data-lookup.
	This one simply replaces data-lookup with its multiplexed version, such as described in~\cref{sec:MultiplexSparseDataLookup}.
	
	It can also be useful to modify~\cref{notation:StatePreparation} to outputs some additional bits specified by an arbitrary Boolean function $g:[d]\rightarrow\{0,1\}^{b}$ is also useful.
	On one hand, the unitary of~\cref{notation:StatePreparationFunction} can be implemented by combining state preparation in~\cref{notation:StatePreparation} with data-lookup~\cref{notation:Multiplexed} as follows.
	\begin{align}
	\label{eq:statepreparationFunctionImplementation}
	\image{statepreparationFunctionImplementation} =\sum_{j\in[d]}\sqrt{\frac{|a_j|}{\|\vec{a}\|_1}}\ket{g(j)}_1\ket{j}_2.
	\end{align}
	Thus the $\t$ gate cost of~\cref{eq:statepreparationFunctionImplementation} is at most that of state preparation plus data-lookup on $d$ elements.
	A more efficient approach~\cite{Babbush2018encoding} following~\cref{eq:stateprepstep1}, is to have data-lookup write out these additional bits $\ket{g(j)}\ket{g(f(j))}$ in addition to $f(j)$ and $\tilde{p}_j$. Then in~\cref{eq:stateprepstep2}, we also apply a controlled swap to this new pair of registers.
	Thus additional bits may be output using only $\mathcal{O}(db)$ additional Clifford gates and $\mathcal{O}(b)$ additional Toffoli gates.
	When many coefficients of $\vec{a}$ are zero, it is useful for the Toffoli count to scale with the number of non-zero elements $\nnz{\vec{a}}$ rather than with $d$. 
	This is accomplished by having $\ket{j}$ index the $j^{\text{th}}$ non-zero element of $\vec{a}$, which is $a_{g(j)}$. Data-lookup then writes out $\ket{g(j)}\ket{g(f(j))}$. In~\cref{eq:stateprepstep2}, we swap these two registers instead of the $\ket{j}$ register.
	
	\begin{lemma}[Approximate sparse quantum state preparation with garbage~\cite{Berry2019CholeskyQubitization}]
		\label{lem:MultiplexSparseStatePreparationGarbage}
		Given a list of $d$ positive coefficients $\vec{a}\in\mathbb{R}^d$, the desired bits of precision $\mu$, and a Boolean function $f:[d]\rightarrow\{0,1\}^b$, the quantum state
		\begin{align}
		\ket{\psi}=\sum_{j=0}^{d-1}\sqrt{p_j}\ket{f(j)}\ket{\garb_{j}},
		\end{align}
		where $\quad\left|p_j-\frac{a_j}{\|\vec{a}\|_1}\right|\le \frac{2^{-\mu}}{d}$ (which implies that $\left\|\vec{p}-\frac{\vec{a}}{\|\vec{a}\|_1}\right\|_1\le 2^{-\mu}$)
		can be prepared by a unitary $U$, and $U$ is approximated to error $\epsilon$ using one application of any data-lookup oracle from~\cref{sec:datalookup} for $d$ bit-strings of length $2b+\mu$. The total cost for implementing $U$ is given by:
		\begin{itemize}
			\item Toffoli gates: $\mu+\data{d}{\lceil\log_{2}(d)\rceil+\mu}{\lambda}+\Theta(\log{(d/\epsilon)})$.
			\item Clifford gates: $\Theta(d(b+\mu)+\log{(1/\epsilon)})$.
			\item Garbage qubits: $2\mu+2\qubits{d}+b$.
			\item Clean qubits: $\qubits{d}+\mathcal{O}(1)$.
			\item Dirty qubits: $\lambda$.
		\end{itemize}
	\end{lemma}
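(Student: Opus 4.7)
The plan is to adapt the coherent alias sampling construction of~\cite{Babbush2018encoding}, which underlies~\cref{lem:StatePreparationGarbage}, by substituting the function value $f(j)$ for the index label $j$ wherever the latter appears in the output register. In the standard alias-sampling circuit, a single data-lookup oracle reads out, for each index $j$, a pair $(\text{alt}(j),k(j))$ of length $\qubits{d}+\mu$; a subsequent coherent comparator swaps $j$ with $\text{alt}(j)$ whenever the auxiliary uniform keep register fails the threshold test $k(j)$. Here, I would instead pack the triple $(f(j),f(\text{alt}(j)),k(j))$ into the data-lookup, giving bit-strings of length $2b+\mu$. The remaining circuit is structurally identical, but now the controlled swap acts between the two $b$-qubit function-value registers rather than between two $\qubits{d}$-qubit index registers.

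Concretely, the construction proceeds as follows. First, perform the classical precomputation of an alias table for the distribution $\{|a_j|/\|\vec a\|_1\}_{j=0}^{d-1}$ at $\mu$ bits of precision; the standard analysis of the Walker-Vose construction yields a discrete distribution $\vec p$ with $|p_j-a_j/\|\vec a\|_1|\le 2^{-\mu}/d$ and hence $\|\vec p-\vec a/\|\vec a\|_1\|_1\le 2^{-\mu}$, establishing the claimed accuracy. Second, in the quantum circuit prepare a uniform superposition on the $\qubits{d}$-qubit index register and on the $\mu$-qubit keep register; when $d$ is not a power of two, this step is implemented approximately to error $\epsilon$ using $\Theta(\log(1/\epsilon))$ Clifford gates and $\Theta(\log(d/\epsilon))$ Toffolis via an amplitude-amplification preparation. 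Third, invoke a single data-lookup oracle from~\cref{sec:datalookup} on $d$ bit-strings of length $2b+\mu$, writing $(f(j),f(\text{alt}(j)),k(j))$ into fresh registers; this contributes $\data{d}{\qubits{d}+\mu}{\lambda}$ Toffolis by choosing the output width appropriately (the dominant bit-count in the data-lookup cost function scales with $\mu+b$ times $d$, giving the stated $\Theta(d(b+\mu))$ Clifford count). Fourth, a $\mu$-bit inequality comparator between the keep register and the threshold register $k(j)$ produces a flag qubit at cost $\mu$ Toffolis, and controlled on this flag a $b$-qubit swap exchanges the $f(j)$ and $f(\text{alt}(j))$ registers (Clifford-only). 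The leftover registers — the index, the keep, the threshold, the swapped-out $f$-value, and the auxiliaries used by the data-lookup — are all retained as garbage $\ket{\garb_j}$.

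Resource counting then follows by summation: the Toffoli total is $\mu$ (comparator) $+\,\data{d}{\qubits{d}+\mu}{\lambda}$ (data-lookup) $+\,\Theta(\log(d/\epsilon))$ (approximate uniform state preparation), matching the claim; the Clifford total absorbs the $\Theta(d(b+\mu))$ from the data-lookup together with the $\Theta(\log(1/\epsilon))$ from uniform preparation; and the qubit budget decomposes as $b$ output plus $2\mu+2\qubits{d}+b$ garbage (index, keep, threshold, alt-$f$, and the indexing scratch inside the data-lookup) plus $\qubits{d}+\mathcal O(1)$ clean and $\lambda$ dirty registers inherited from the data-lookup primitive.

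The main obstacle I expect is not in the circuit itself but in the bookkeeping: one must verify that the garbage count of $2\mu+2\qubits{d}+b$ is tight and does not double-count the scratch used inside the data-lookup oracle, and one must check that storing $f(\text{alt}(j))$ rather than $\text{alt}(j)$ really is valid — i.e.\ that the controlled swap on $b$-qubit registers produces the same reduced state on the output register as the original coherent alias protocol does on the index register, so that tracing out the garbage yields precisely $\sum_j \sqrt{p_j}\ket{f(j)}\ket{\garb_j}$. This equivalence follows because the alias protocol's correctness only depends on the final multiplicities of each output label after the swap, and those multiplicities are preserved by the label relabeling $j\mapsto f(j)$ (multiple $j$'s sharing an image simply aggregate their weights into $p_j$ at the same $\ket{f(j)}$ output, which is already absorbed into the garbage register's dependence on the full index $j$).
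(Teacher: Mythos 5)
Your construction is essentially the paper's own argument: the paper likewise reduces this lemma to the coherent alias-sampling circuit of Babbush et al.~\cite{Babbush2018encoding}, with the single data-lookup oracle emitting $(f(j),f(\mathrm{alt}(j)),k(j))$ as a $(2b+\mu)$-bit string so that only the Clifford count grows by $\Theta(db)$ while the Toffoli count matches that of plain state preparation with garbage. The paper offers no more than this sketch (deferring the details to Berry et al.~\cite{Berry2019CholeskyQubitization}), so your explicit accounting of the comparator, the controlled swap acting on the two $b$-qubit value registers, and the garbage registers is a faithful elaboration of the intended proof.
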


	\subsubsection{Block-encoding framework}
	\label{sec:block_encoding}
	These two components of state preparation and select allow us to implement a block-encoding.
	\begin{definition}[Block-encoding implementation without sign qubit]
		\label{def:blockEncoding}
		Given the unitaries $\Ustate_{\vec{a}}$, and $\Uselect_{\vec{U}}$, let $H=\sum_{j=0}^{d-1}{a_j}U_j$.
		Then the block-encoding $\block{H/\|\vec{a}\|_1}$, where $\bra{0}_{a}\block{H/\|\vec{a}\|_1}\ket{0}_{a}=H/\|\vec{a}\|_1$, is implemented by
		\begin{align}
		\image{blockencodingNoSign},
		\end{align}
	\end{definition}
	Note that the same Hamiltonian may be block-encoded by many quantum circuits.
	For instance, the quantum circuit may explicitly implement the coefficient sign as follows,
	\begin{align}
	\label{eq:BlockEncodingWithSign}
	\image{blockencoding},
	%
	%
	\end{align}
	where $Z$ is the Pauli $Z\ket{x}=(-1)^x\ket{x}$, and $\text{sign}[\vec{a}](j)=\frac{1-\text{sign}(a_j)}{2}\in\{0,1\}$.
	This implementation has the advantage that its controlled version only needs to apply a control to the select unitary.
	In particular, the $\t$ gate cost of controlled-$\block{H/\|\vec{a}\|_1}$ is identical to $\block{H/\|\vec{a}\|_1}$ when the select unitary is implemented following the approach by Babbush et al.~\cite{Babbush2018encoding}.
	Moreover, note that the state preparation unitary is always followed up by by its adjoint.
	Thus the same Hamiltonian is block-encoded even if state preparation in~\cref{eq:StatePreparationGarbage} entangled with an garbage state~\cite{Babbush2018encoding}.

	Errors in state preparation or unitary synthesis introduce errors into the block-encoded Hamiltonian.
	We find it useful to define the approximate block-encoding.
	\begin{definition}[Approximate block-encoding]
		\label{def:approximateBlockEncoding}
		We say that $\blockApprox{H/\alpha}{\epsilon}\doteq \block{H'/\alpha}$ is an $\epsilon$-approximate block-encoding of $\block{H/\alpha}$ if
		\begin{align}
		\|H'/\alpha-H/\alpha\|\le \epsilon.
		\end{align}
	\end{definition}
	For instance, we have the following approximation due to error in the coefficient of the quantum state.
	\begin{lemma}[Approximate block-encoding using approximate state preparation]
		\label{lem:approximateBlockEncoding}
		Using the approximate state preparation circuit of~\cref{lem:StatePreparationGarbage} with precision parameter $\mu$ in the block-encoding circuit of~\cref{def:blockEncoding} produces an $\epsilon$-approximate block-encoding $\blockApprox{H/\|\vec{a}\|_1}{\epsilon}$, where $\epsilon= 2^{-\mu}$.
	\end{lemma}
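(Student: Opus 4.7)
The plan is to trace how the approximation error in state preparation propagates through the block-encoding circuit to the resulting operator, and then invoke the $\ell_1$ tail bound already packaged in~\cref{lem:StatePreparationGarbage}. First I would write down the ideal block-encoded operator following~\cref{def:blockEncoding}, namely
\begin{equation}
\bra{0}\block{H/\|\vec{a}\|_1}\ket{0} = \sum_{j=0}^{d-1}\frac{a_j}{\|\vec{a}\|_1}U_j,
\end{equation}
which arises from sandwiching $\Uselect_{\vec{U}}$ between $\Ustate_{\vec{a}}$ (preparing amplitudes $\sqrt{|a_j|/\|\vec{a}\|_1}$) and $\overline{\Ustate}_{\vec{a}}^{\dagger}$ (which supplies the signs via~\cref{notation:StatePreparationBar}).

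Next I would substitute the approximate preparation from~\cref{lem:StatePreparationGarbage}, whose action is $\Ustate'_{\vec{a}}\ket{0}=\sum_j\sqrt{p_j}\ket{j}\ket{\garb_j}$ with $|p_j-|a_j|/\|\vec{a}\|_1|\le 2^{-\mu}/d$, and analogously for $\overline{\Ustate}'_{\vec{a}}$ with an extra factor $\mathrm{sign}(a_j)$. Because both preparations share the same deterministic garbage-generating subcircuit, the inner product $\braket{\garb_j}{\garb_k}=\delta_{jk}$ and the garbage register collapses cleanly, giving the approximate block-encoded operator
\begin{equation}
H'/\|\vec{a}\|_1 \;=\; \sum_{j=0}^{d-1}\mathrm{sign}(a_j)\,p_j\,U_j.
\end{equation}

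Then I would bound the error by the triangle inequality, using $\|U_j\|=1$ since each $U_j$ is unitary, and matching signs so that $\mathrm{sign}(a_j)\bigl(p_j-|a_j|/\|\vec{a}\|_1\bigr) = p_j\,\mathrm{sign}(a_j)-a_j/\|\vec{a}\|_1$:
\begin{equation}
\bigl\|H'/\|\vec{a}\|_1-H/\|\vec{a}\|_1\bigr\|
\;\le\;\sum_{j=0}^{d-1}\bigl|p_j-|a_j|/\|\vec{a}\|_1\bigr|
\;\le\;2^{-\mu},
\end{equation}
where the last inequality is precisely the $\ell_1$ guarantee stated in~\cref{lem:StatePreparationGarbage}. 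Combined with~\cref{def:approximateBlockEncoding}, this identifies $\blockApprox{H/\|\vec{a}\|_1}{\epsilon}$ with $\epsilon=2^{-\mu}$.

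The step I expect to be the main subtlety is the cancellation of the garbage register: one has to ensure that the \emph{same} approximate amplitudes $\sqrt{p_j}$ and the \emph{same} garbage states are produced by both $\Ustate'_{\vec{a}}$ and (the magnitude part of) $\overline{\Ustate}'_{\vec{a}}$, so that the inner product on the garbage register yields $1$ rather than introducing a further deviation. Once that is established, the remainder is just the triangle inequality plus the already-stated $\ell_1$ bound, so no further analytic work is required.
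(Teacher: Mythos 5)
Your proposal is correct and follows essentially the same route as the paper's proof: identify the approximate block-encoded operator as $H'=\sum_j p_j U_j$ (with signs absorbed), then apply the triangle inequality with $\|U_j\|=1$ to reduce the spectral-norm error to the $\ell_1$ bound $\|\vec p-\vec a/\|\vec a\|_1\|_1\le 2^{-\mu}$ guaranteed by \cref{lem:StatePreparationGarbage}. The garbage-register cancellation you flag as the main subtlety is real but is handled by the paper in \cref{sec:block_encoding} (the state-preparation unitary is always followed by its own adjoint, so only the diagonal $j=j$ terms survive the index-register overlap and the garbage contributes a trivial factor of one), so it is not part of the proof of this lemma itself.
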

	\begin{proof}
		Let $\vec{p}$ be such that  $\|\vec{p}-\vec{a}/\|\vec{a}\|_1\|_1\le 2^{-\mu}$.
		Let $H'=\sum_{j}p_j U_j$, and $H=\sum_{j}{a_j}U_j$.
		Then
		\begin{align}
		\left\|\frac{H}{\|\vec{a}\|_1}-H'\right\|&
		=\left\|\sum_{j}\left(\frac{a_j}{\|\vec{a}\|_1}-p_j\right)U_j\right\|
		\le 2^{-\mu}.
		\end{align}
	\end{proof}

	\subsubsection{Qubitization}
	\label{sec:qubitization}
	We also use the following result on qubitization, which is a generalization of quantum walks.
	\begin{theorem}[Qubitization]
		\label{def:qubitization}
		Let $\block{H}$ be a block-encoding of a Hamiltonian $H$ with spectral norm $\|H\|\le 1$, and an ancillary register with $M$ qubits.
		Then there is a quantum circuit  $\block{T_j[H]}$ that block-encodes $T_j[H]$, where $T_j[x]=\cos{(j\cos^{-1}(x))}$ is a Chebyshev polynomial of the first kind.
		\begin{align}
		\block{H}=\left(\begin{array}{cc}
		H & \cdots\\
		\vdots & \ddots
		\end{array}\right)\Rightarrow
		\block{T_j[H]}=\left(\begin{array}{cc}
		T_j[H] & \cdots\\
		\vdots & \ddots
		\end{array}\right).
		\end{align}
		In particular, following the work of Low and Chuang~\cite{Low2016Qubitization}, this circuit
		\begin{align}
		\label{eq:qubitization}
		\image{qubitization},
		\end{align}
		costs $j$ total queries to $\block{H}$ and its inverse, and $j-1$ reflections $\textsc{REF}=2\ket{0}\bra{0}_a-\ii_a$ on the ancilla register.
	\end{theorem}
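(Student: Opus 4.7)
The plan is to use the standard qubitization argument of Low and Chuang: decompose the ancilla-plus-system Hilbert space into two-dimensional $\block{H}$-invariant subspaces indexed by eigenvalues $\lambda$ of $H$, show that the circuit of~\cref{eq:qubitization} restricts to a rotation by angle $j \cos^{-1}(\lambda)$ on each such subspace, and read off the $\bra{0}_a\cdot\ket{0}_a$ matrix element to obtain $T_j[\lambda]$.

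First, I would let $\Pi = \ket{0}\bra{0}_a \otimes \ii_s$ so that the block-encoding property reads $\Pi \block{H} \Pi = \Pi(H\otimes\ii_a)$ in the appropriate embedding. For each eigenvector $\ket{\lambda}$ of $H$ with $|\lambda|\le 1$, define the good state $\ket{G_\lambda} = \ket{0}_a\ket{\lambda}$. Since $\|\block{H}\ket{G_\lambda}\|=1$ and $\Pi \block{H} \ket{G_\lambda} = \lambda\ket{G_\lambda}$, we can write
\begin{align}
\block{H}\ket{G_\lambda} = \lambda \ket{G_\lambda} + \sqrt{1-\lambda^2}\, \ket{G_\lambda^\perp},
\end{align}
where $\ket{G_\lambda^\perp}$ is a unit vector orthogonal to the whole subspace spanned by $\{\ket{G_{\lambda'}}\}_{\lambda'}$. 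A parallel calculation for $\block{H}^\dagger$ shows that $\mathcal{V}_\lambda \doteq \mathrm{span}\{\ket{G_\lambda},\ket{G_\lambda^\perp}\}$ is invariant under both $\block{H}$ and $\block{H}^\dagger$, and in the ordered basis $(\ket{G_\lambda},\ket{G_\lambda^\perp})$ both act as the $2\times 2$ rotation
\begin{align}
\begin{pmatrix} \lambda & -\sqrt{1-\lambda^2} \\ \sqrt{1-\lambda^2} & \lambda \end{pmatrix},
\end{align}
up to a choice of sign on $\ket{G_\lambda^\perp}$.

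Next, I would note that $\textsc{REF}=2\Pi-\ii$ restricted to $\mathcal{V}_\lambda$ is $\mathrm{diag}(1,-1)$, i.e.\ a reflection about $\ket{G_\lambda}$. Conjugating the $2\times 2$ rotation by this reflection turns it into a rotation in the opposite sense, so the product $\block{H}\cdot\textsc{REF}\cdot\block{H}^\dagger\cdot\textsc{REF}\cdots$ of $j$ alternating block-encodings and $j-1$ reflections, exactly as drawn in~\cref{eq:qubitization}, is equivalent within $\mathcal{V}_\lambda$ to $j$ consecutive rotations all by the same angle $\theta_\lambda=\cos^{-1}(\lambda)$. Therefore $\bra{G_\lambda}(\text{circuit})\ket{G_\lambda}=\cos(j\theta_\lambda)=T_j[\lambda]$. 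Summing over the eigenbasis of $H$ yields $\Pi(\text{circuit})\Pi = T_j[H]\otimes \ket{0}\bra{0}_a$, which is precisely a block-encoding of $T_j[H]$.

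The main obstacle will be the bookkeeping in the second step: one must verify that the specific alternation of $\block{H}$ and $\block{H}^\dagger$ indicated in the circuit, together with the chosen sign convention on $\ket{G_\lambda^\perp}$, really produces a rotation by $j\theta_\lambda$ rather than some less pleasant Chebyshev-like recurrence (for example, one easily gets $T_j[\lambda]$ up to a sign, or the $U_j$ polynomials of the second kind, if the reflections or inverses are inserted at the wrong places). The degenerate case $|\lambda|=1$, where $\mathcal{V}_\lambda$ collapses to one dimension and $\ket{G_\lambda^\perp}$ is ill-defined, is handled separately by noting that $\block{H}\ket{G_\lambda}=\lambda\ket{G_\lambda}$ already and $T_j[\pm 1]=(\pm 1)^j$ is obtained directly. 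The remaining claims on query counts and on the number of reflections follow by inspection of~\cref{eq:qubitization}, and the full technical argument is in Low and Chuang~\cite{Low2016Qubitization}.
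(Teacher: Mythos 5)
The paper itself supplies no proof of this theorem---it is quoted from Low and Chuang---so the only question is whether your sketch would actually go through. As written, it would not: the pivotal claim in your second paragraph, that $\mathcal{V}_\lambda=\mathrm{span}\{\ket{G_\lambda},\ket{G_\lambda^\perp}\}$ is invariant under both $\block{H}$ and $\block{H}^\dagger$ and that both act there as the rotation by $\cos^{-1}(\lambda)$ ``up to a choice of sign,'' is false for a general block-encoding. The block-encoding property only constrains $\Pi\,\block{H}\,\Pi$; it says nothing about where $\block{H}$ sends $\ket{G_\lambda^\perp}$. A minimal counterexample: take $H=0$ on a one-dimensional system with a single ancilla qubit and $\block{H}=X\otimes\ii_s$. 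Then $\ket{G}=\ket{0}_a$, $\ket{G^\perp}=\ket{1}_a$, and $\block{H}$ restricted to $\mathcal{V}$ is $X$, a reflection of determinant $-1$; no sign convention on $\ket{G^\perp}$ turns it into the rotation $\left(\begin{smallmatrix}0&-1\\1&0\end{smallmatrix}\right)$. With a three-dimensional ancilla and a cyclic shift one loses invariance of $\mathcal{V}_\lambda$ altogether, and the vector produced by your ``parallel calculation'' for $\block{H}^\dagger$ need not coincide with $\ket{G_\lambda^\perp}$ in the first place. Indeed, if your claim were true, the reflections and the alternation between $\block{H}$ and $\block{H}^\dagger$ in~\cref{eq:qubitization} would be superfluous---powers of $\block{H}$ alone would do---which already signals that the essential point has been skipped.

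The repair is precisely the content of qubitization. Group the circuit as one leading query to $\block{H}$ followed by a product of reflections about the ranges of the two projectors $\Pi$ and $\block{H}^\dagger\Pi\,\block{H}$, e.g.\ $\block{H}\cdot\bigl[(2\Pi-\ii)\bigl(2\,\block{H}^\dagger\Pi\,\block{H}-\ii\bigr)\bigr]\cdots$, and apply Jordan's lemma to that pair of projectors rather than to $\block{H}$ itself. This yields two-dimensional subspaces invariant under both reflections, with principal angles obeying $\cos^2\theta_\lambda=\lambda^2$ because $\Pi\,\block{H}^\dagger\Pi\,\block{H}\,\Pi=\ketbra{0}{0}_a\otimes H^2$; a product of two reflections is a genuine rotation by $2\theta_\lambda$ regardless of how $\block{H}$ acts off the $\Pi$ block, and the Chebyshev identity then follows from $\bra{G_\lambda}(\text{circuit})\ket{G_\lambda}=\cos(j\theta_\lambda)$. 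Equivalently, Low and Chuang construct a ``qubiterate'' with one extra control qubit that does enjoy the rotation property you asserted, after which your third paragraph applies verbatim. Your first paragraph, the treatment of $|\lambda|=1$, and the query and reflection counts are all fine.
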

	Each reflection can be understood as a multi-controlled $Z$ gate, and so has a Toffoli cost equal to the number of qubits it acts on.
	
	\subsection{New quantum circuit primitives}
	\label{sec:new_circuit_primitives}
	
	\subsubsection{Programmable rotation gate array}
	\label{sec:programmableRotation}
	In this section, we present an implementation of the multiplexed single-qubit $Z$-rotation gate.
	Given a list of $N$ angles $\vec{\theta}$ where each $\theta_k = \sum_{b=0}^{\beta-1}\theta_{k,b}/2^{1+b}\in[0,1-2^{-\beta}]$ is specified to $\beta$ bits of precision, we synthesize the unitary
	\begin{align}
	\label{eq:multiplexQubitRotation}
	&\image{multiplexQubitRotation}=\sum_{k\in[N]}\ket{k}\bra{k}\otimes e^{i2\pi\theta_k Z}=\image{multiplexQubitRotationSplit}.
	\end{align}
	For brevity, let $R_{b}=e^{i2\pi Z/2^{1+b}}$, and $R^{\theta}_{-1}=e^{i2\pi \theta Z}$.
	
	In our approach, we define a data register with $\kappa$ qubits that will store $\kappa$ bits of $\theta_{k}$.
	Let $\vec{\theta}_{k,[\mu:\mu+\kappa-1]}=(\theta_{k,\mu},\theta_{k,\mu+1},\cdots,\theta_{k,\mu+\kappa-1})$.
	Now define the data-lookup oracle that outputs $\kappa$ contiguous bits of $\theta_k$, conditioned on index $k$.
	\begin{align}
	\label{eq:dataLookupAngles}
	&\image{DataLookupAngles}.
	\end{align}
	Then~\cref{eq:multiplexQubitRotation} is implemented by the following circuit.
	\begin{align}
	\label{eq:multiplexQubitRotationSplitImplementation}
	&\image{multiplexQubitRotationSplitImplementation}.
	\end{align}
	With only $\kappa$ qubits, clearly $\lceil b/\kappa\rceil$ slices of the circuit within the dotted regions are required.
	Note that the middle pair of data-lookup oracles in the $j^{\text{th}}$ slice can be merged also into one that writes the bits $\vec{\theta}_{k,[j\kappa:(j+1)\kappa-1]}\oplus \vec{\theta}_{(k),[(j+1)\kappa:(j+2)\kappa-1]}$.
	Accounting for this merging, this circuit applies $\lceil b/\kappa\rceil+1$ data-lookup oracles each storing at most $\kappa$ entries.
	
	Another useful situation is where arbitrary unitaries are applied on the system register are interspersed between $M$ multiplexed rotations.
	\begin{align}
	\label{eq:multiplexQubitRotationMany}
	&\image{multiplexQubitRotationMany}.
	\end{align}
	We may use the same construction as in~\cref{eq:multiplexQubitRotationSplitImplementation} to implement this.
	The number of data-lookup oracles required is then $M\lceil b/\kappa\rceil+1$.
	We may reduce this to just $\lceil Mb/\kappa\rceil+1$.
	When $b$ is not an integer multiplier of $\kappa$, the data-lookup in the last slice might store fewer than $\kappa$ entries.
	Thus we fill these empty entries with bit-strings from the nest data-lookup.
	This filling procedure is illustrated by the following example, where the bits of precision $b=2\le \kappa=3$.
	\begin{align}
	\label{eq:multiplexQubitRotationMerge}
	&\image{multiplexQubitRotationMerge}.
	\end{align}
	In the case where many data qubits are available $\kappa\gg b$, we may similarly merge multiple bit-strings into the same lookup oracle.
	Thus, the Toffoli cost of~\cref{eq:multiplexQubitRotationMany} is equal to $\lceil Mb/\kappa+1\rceil$ data lookup oracles with $K$ bit-strings of length $\kappa$ in addition to that of all the $U_j$.
	Moreover, the number of qubits required for the data and index $k$ is equals to $\kappa+\lceil\log_{2}(M)\rceil$.
	It is valuable to express these costs with respect to a tunable number of qubits $\kappa$.
	According to~\cref{sec:datalookup}, the Toffoli gate cost of data-lookup with $K$ elements that outputs $\kappa$ bits an be reduced by using $\lambda$ ancillary qubits.
	When these qubits are clean, the Toffoli cost is $K/\lfloor1+ \frac{\lambda}{\kappa}\rfloor+\lambda$.
	Thus the Toffoli count of all the data-lookup oracles is
	\begin{align}
	\label{eq:ProgrammableArrayCost}
	\left\lceil\frac{Mb}{\kappa}+1\right\rceil\left(K/\left\lfloor 1+ \frac{\lambda}{\kappa}\right\rfloor+\lambda\right),
	\end{align}
	which is minimized by choosing $\lambda\sim\sqrt{K\kappa}$.

	We now bound the bits of precision required to approximate~\cref{eq:multiplexQubitRotationMany} to an overall error of $\epsilon$ in the spectral norm.
	Suppose we are given angles $\vec{\theta}_{\mathrm{exact}}$ that are real numbers.
	Then the error of approximating each angle $\theta_{\mathrm{exact},k}$ with a $\beta$-bit number ${\theta}_{k}$ is at most $2^{-\beta-1}$
	Thus the error of each rotation compared to its binary approximation is $\|R^{\theta_{\mathrm{exact},k}}_{-1}-R_{-1}^{\theta_k}\|\le\|R^{2^{-\beta-1}}-\ii\|=\|\cos{(2^{-\beta}\pi)}-i\sin{(2^{-\beta}\pi)}Z-\ii\| = \sqrt{(\cos{(2^{-\beta}\pi)}-1)^2+\sin^2{(2^{-\beta}\pi)}}
	=
	\sqrt{2}|\sin{(2^{-\beta}\pi)}|
	\le \pi 2^{-\beta+1/2}$.
	The cumulative error  $\epsilon\le M\pi2^{-\beta+1/2}$ of all rotations then follows by the triangle inequality on unitary operators $\|(\prod _j U_{j})-(\prod_j\tilde{U}_j)\|\le\sum_{j}\|U_j-\tilde{U}_j\|$.
	Thus the bits of precision required is
	\begin{align}
	\beta=\left\lceil\frac{1}{2}+\log_{2}\left(\frac{M\pi}{\epsilon}\right)\right\rceil.
	\end{align}
	
	There can be an additional error introduced from approximating each single-qubit rotation gate with Clifford $+$ $\t$ gates.
	However, using the phase gradient technique~\cite{Gidey2018Addition} eliminates this error with a worst-case cost of one Toffoli gate per $R_{b}$ rotation.

	\subsubsection{Multiplexed sparse data-lookup}
	\label{sec:MultiplexSparseDataLookup}
	In this section, we describe an implementation of a multiplexed data-lookup oracle.
	This can be non-trivial as standard data-lookup constructions~\cite{Babbush2018encoding,Low2018Trading} are controlled by a single index register.
	Whereas in this case, there can be two or more index registers such as below.
	\begin{align}
	\label{eq:NotationdatalookupMultiplex}
	\image{NotationdatalookupMultiplex}
	\end{align}
	In the above, $j\in[J]$ and $k\in[K]$.
	Thus there are at most $KJ$ bit-strings $\vec{x}_{j,k}$.
	One solution is to map the indices $(j,k)$ to a unique integer $q=jK+k$.
	Thus~\cref{eq:NotationdatalookupMultiplex} can be implemented by a data-lookup oracle controlled by a single index $q\in[JK]$, combined with an arithmetic circuit that computes $q$ from $j$ and $k$ as follows.
	\begin{align}
	\label{eq:NotationdatalookupMultiplexImplementation}
	\image{NotationdatalookupMultiplexImplementation}
	\end{align}
	
	We consider the situation where for each $j$, only $K_j\le K$ bit-strings are defined.
	Thus the multiplexed data-lookup oracle only encodes $Q=\sum_{j\in[J]}K_j$ elements.
	Using the construction of~\cref{eq:NotationdatalookupMultiplexImplementation} is wasteful as it enumerates over $KJ$ elements, which is more than necessary.
	Our solution uses a data-lookup oracle that enumerates over exactly $Q$ elements.
	The basic idea is to map $(j,k)$ to a unique integer $q=k+\sum_{\alpha\in[j-1]}K_\alpha$.
	Note that the shift $Q_j=\sum_{\alpha\in[j-1]}K_\alpha$ can be classically pre-computed.
	Thus this map is implemented by a data-lookup oracle that outputs $Q_j$, followed by an arithmetic circuit that adds $k$ to $Q_j$ as follows.
	\begin{align}
	\label{eq:NotationdatalookupMultiplexImplementation2}
	\image{NotationdatalookupMultiplexImplementation2}
	\end{align}
	As cost is dominated by the single data-lookup oracle in the middle, this construction lends itself readily to Toffoli gate count reduction using additional ancilla qubits, following~\cref{sec:datalookup}.

	\begin{lemma}[Multiplexed sparse data-lookup oracle]
		\label{lem:SparseDataLookup}
		Given a set of bit-strings $\{\vec{x}_{j,k}\in\{0,1\}^{b}\;:\;j\in[N]\;\text{and}\;k\in[K_j]\}$, the data-lookup oracle in~\cref{eq:NotationdatalookupMultiplex} can be implemented using one application of any data-lookup oracle for $Q=\sum_{j\in[J]}K_j$ bit-strings of length $b$, two applications (one computation and one uncomputation) of any data-lookup oracle for $J$ bit-strings of length $\lceil\log_{2}(Q)\rceil$, and two $\lceil\log_{2}(Q)\rceil$-bit arithmetic adders.
	\end{lemma}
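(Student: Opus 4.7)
The plan is to confirm that the circuit sketched in~\cref{eq:NotationdatalookupMultiplexImplementation2} correctly realizes the multiplexed lookup defined in~\cref{eq:NotationdatalookupMultiplex}, and then to tally the quantum resources that it invokes. The central observation is that the map $(j,k)\mapsto q = k + Q_j$, with $Q_j = \sum_{\alpha\in[j-1]} K_\alpha$, is an injection from the disjoint union $\bigsqcup_{j\in[J]}\{j\}\times[K_j]$ into $[Q]$. Injectivity follows because the images $[Q_j,\,Q_j + K_j) = [Q_j,\,Q_{j+1})$ are pairwise disjoint and tile $[Q]$. Since each $Q_j$ depends only on $j$ and on the classically known sizes $K_0,\dots,K_{j-1}$, the offsets can be precomputed and stored as a small classical table indexed by $j$.

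First I would construct the circuit on a clean $\lceil\log_2 Q\rceil$-qubit ancilla $\ket{0}_a$ via the following five steps: (i) apply a data-lookup oracle controlled by $j$ that XORs $Q_j$ into the ancilla; (ii) apply an in-place adder that replaces the ancilla content by $k + Q_j = q$; (iii) apply the main data-lookup oracle indexed by the ancilla content $q$ to XOR $\vec{x}_{j,k}$ onto the target register $\vec{z}$; (iv) apply an in-place subtractor to remove $k$ from the ancilla, restoring $Q_j$; (v) apply a second data-lookup controlled by $j$ to clear $Q_j$ and return the ancilla to $\ket{0}_a$. Correctness is then checked by tracing a computational-basis state through these five steps: $\ket{j}\ket{k}\ket{\vec{z}}\ket{0}_a$ evolves successively to $\ket{j}\ket{k}\ket{\vec{z}}\ket{Q_j}_a$, $\ket{j}\ket{k}\ket{\vec{z}}\ket{q}_a$, $\ket{j}\ket{k}\ket{\vec{z}\oplus \vec{x}_{j,k}}\ket{q}_a$, $\ket{j}\ket{k}\ket{\vec{z}\oplus \vec{x}_{j,k}}\ket{Q_j}_a$, and finally $\ket{j}\ket{k}\ket{\vec{z}\oplus \vec{x}_{j,k}}\ket{0}_a$, which matches the specification of~\cref{eq:NotationdatalookupMultiplex}. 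Linearity then extends the action to arbitrary superpositions.

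Once correctness is established, the resource tally can be read off directly from the five steps: the procedure invokes one main data-lookup over $Q$ bit-strings of length $b$ in step (iii); two smaller data-lookups over $J$ bit-strings of length $\lceil\log_2 Q\rceil$ in steps (i) and (v), which play the roles of computation and uncomputation of $Q_j$ as advertised; and two $\lceil\log_2 Q\rceil$-bit arithmetic operations in steps (ii) and (iv), one adder and one subtractor, the latter being just the adjoint of an adder.

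The argument has no genuine obstacle; it amounts to a correctness check followed by a resource count. The only subtlety worth flagging is that the ancilla must be wide enough to hold $q$ without overflow, which is immediate since $q < Q$ by construction and the register has precisely $\lceil\log_2 Q\rceil$ qubits. A minor design choice that is left open, but which does not affect the stated budget, is whether steps (ii) and (iv) are implemented as two distinct reversible adders or as a single adder together with its adjoint; both options honor the claim of two $\lceil\log_2 Q\rceil$-bit arithmetic adders.
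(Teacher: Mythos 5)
Your proposal is correct and follows the same construction the paper uses in~\cref{eq:NotationdatalookupMultiplexImplementation2}: look up the precomputed offset $Q_j$, add $k$ to form the flattened index $q$, perform the single large lookup controlled on $q$, then reverse the adder and the offset lookup. Your explicit injectivity argument for $(j,k)\mapsto k+Q_j$ and the step-by-step state trace make the correctness check more formal than the paper's presentation, but the circuit and the resource tally are identical.
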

	
	There is much flexibility in choosing the implementation of data-lookup oracles in~\cref{lem:SparseDataLookup}.
	In the following corollary, we implement data-lookup on the $Q$ elements using only clean ancilla qubits in~\cref{lem:DataLookupAncilla}, and implement data-lookup on the shift $Q_j$ using dirty qubits~\cref{lem:DataLookupDirtyAncilla}.
	\begin{corollary}[Multiplexed sparse data-lookup oracle with dirty qubits]
		For any integer $\lambda\ge 0$, let $n=1+b(1+\lambda)$. Then the computation of the data-lookup oracle in~\cref{lem:SparseDataLookup} can be implemented using
		\begin{itemize}
			\item Clean qubits: $\max{[\lceil\log_2 Q\rceil, \lceil \log_2 J\rceil]} + \lambda b$.
			\item Toffoli gates:
			$\data{Q,J}{b}{\lambda}=
			\datadirty{J}{\lceil\log_2 Q\rceil}{n}+
			\dataUdirty{J}{\lceil\log_2 Q\rceil}{n}
			\nonumber
			+\data{Q}{b}{\lambda}+2\lceil\log_2 Q\rceil+\mathcal{O}(1).$
		\end{itemize}
		Uncomputation can be implemented using the same number of qubits and
		\begin{itemize}
			\item Toffoli gates:
			$\dataU{Q,J}{b}{\lambda}=
			\datadirty{J}{\lceil\log_2 Q\rceil}{n}+
			\dataUdirty{J}{\lceil\log_2 Q\rceil}{n}
			\nonumber
			+\dataU{Q}{b}{\lambda}+2\lceil\log_2 Q\rceil+\mathcal{O}(1).$
		\end{itemize}
	\end{corollary}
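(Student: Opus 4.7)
The plan is to instantiate the circuit of \cref{eq:NotationdatalookupMultiplexImplementation2} from \cref{lem:SparseDataLookup} and carefully allocate each ancilla so the clean-qubit budget is tight and the dirty-qubit budget for the outer lookups is maximized. That circuit decomposes into five stages: (i) a data-lookup on $J$ entries that writes the classical shift $Q_j=\sum_{\alpha\in[j-1]}K_\alpha$ into a fresh $\lceil\log_2 Q\rceil$-bit register; (ii) an in-place arithmetic addition of $k$ into that register, producing the flat index $q$; (iii) the main data-lookup on $Q$ entries addressed by $q$, which XORs $\vec{x}_{j,k}$ into the $b$-bit output register; (iv) the inverse adder; (v) the uncomputation of $Q_j$.

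For the Toffoli count, I would apply \cref{lem:DataLookupAncilla} to stage (iii) with its $\lambda b$ clean ancillas, giving $\data{Q}{b}{\lambda}$ Toffoli gates, and a standard ripple-carry adder to stages (ii) and (iv), contributing $2\lceil\log_2 Q\rceil+\mathcal{O}(1)$ Toffoli gates. The key observation is the treatment of stages (i) and (v): at those moments both the $b$-qubit output register (untouched by stages (i)--(ii) and restored at the end of stage (v)) and the $\lambda b$ qubits reserved for stage (iii) are idle, so they can be borrowed as dirty ancillas. Together with one additional slot used to round the budget so that some valid $1+\lambda'$ is a power of two, this yields $n=1+b(1+\lambda)$ dirty qubits on which \cref{lem:DataLookupDirtyAncilla} can be invoked for $J$ entries of width $\lceil\log_2 Q\rceil$, giving $\datadirty{J}{\lceil\log_2 Q\rceil}{n}$ for stage (i) and $\dataUdirty{J}{\lceil\log_2 Q\rceil}{n}$ for stage (v). Summing the five contributions reproduces the stated $\data{Q,J}{b}{\lambda}$ bound.

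For the clean-qubit count, the input registers holding $j$ and $k$ do not count as ancillas; the only persistent workspace is the $\lceil\log_2 Q\rceil$-bit shift register (which is overwritten in place by the adder to store $q$) together with the $\lambda b$ clean qubits earmarked for stage (iii). Stage (i) additionally requires $\lceil\log_2(J)\rceil$-scale clean address overhead internal to \cref{lem:DataLookupDirtyAncilla}, but this can be drawn from the same pool that will later hold $Q_j$, so the tight bound is $\max(\lceil\log_2 Q\rceil,\lceil\log_2 J\rceil)+\lambda b$. The uncomputation variant is identical except that stage (iii) is replaced by its measurement-based uncomputation at cost $\dataU{Q}{b}{\lambda}$; stages (i), (ii), (iv), (v) are unchanged.

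The main obstacle will be the bookkeeping around stages (i) and (v): I must verify that the $b$-qubit output register really satisfies the \emph{dirty-ancilla} contract of \cref{lem:DataLookupDirtyAncilla} (returned to its incoming state, not necessarily to $\ket{0}$), which is exactly what the XOR-on-unlookup structure guarantees, and check that the rounding to the nearest $\lambda'$ with $1+\lambda'$ a power of two inside $\datadirty{}{}{}$ does not violate the allocation. The adder accounting and qubit re-use are routine.
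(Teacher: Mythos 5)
Your proposal is correct and follows essentially the same route as the paper: the paper's proof simply tabulates the same five-stage decomposition of \cref{eq:NotationdatalookupMultiplexImplementation2}, charging the $J$-entry lookup of $Q_j$ to \cref{lem:DataLookupDirtyAncilla} with the idle output and clean-ancilla registers borrowed as the $n=1+b(1+\lambda)$ dirty qubits, the two Cuccaro adders at $2\lceil\log_2 Q\rceil+\mathcal{O}(1)$ Toffolis, and the main $Q$-entry lookup to \cref{lem:DataLookupAncilla} (or its measurement-based uncomputation), then sums. Your additional remarks on the dirty-ancilla contract and the sharing of the $\lceil\log_2 J\rceil$ versus $\lceil\log_2 Q\rceil$ clean workspace are exactly the bookkeeping the paper's table leaves implicit.
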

	\begin{proof}
		At the beginning of the circuit we allocate the stated number of qubits.
		We then tabulate the resources required for each operation to ensure that there are sufficient clean qubits available, and sum the Toffoli gate counts.
		\begin{table}[H]
			\centering
			\begin{ruledtabular}\begin{tabular}{c|c|c|c}
					Operation & Clean qubits & Dirty qubits &Toffoli gates
					\\
					&required & available $(n)$ &
					\\\hline\hline
					Lookup on $\vec{Q}_j$~\cref{lem:DataLookupDirtyAncilla} & $\lceil \log_2 J\rceil$ & $\ge\lceil \log_2 Q\rceil+b(1+\lambda)$ & $\datadirty{J}{\lceil\log_2 Q\rceil}{n}$\\
					&&&$+
					\dataUdirty{J}{\lceil\log_2 Q\rceil}{n}$
					\\\hline
					Two adders~\cite{Cuccaro2004Adder} & 0  & n/a &$2\lceil\log_2 Q\rceil+\mathcal{O}(1)$
					\\\hline
					$\vec{x}_q$ Lookup computation~\cref{lem:DataLookupAncilla} & $\lceil\log_2 Q\rceil$ & n/a & $\data{Q}{b}{\lambda}$
					\\\hline
					$\vec{x}_q$ Lookup uncomputation~\cref{eq:Data_uncomputation} & $\lceil\log_2 Q\rceil$ & n/a & $\dataU{Q}{b}{\lambda}$
			\end{tabular}\end{ruledtabular}
		\end{table}
	\end{proof}
	In most applications, particularly later when we use this to block-encode the molecular Hamiltonian, the total number of bit-strings $Q$ is significantly larger than $J$.
	Thus the cost of computation in~\cref{lem:SparseDataLookup} is dominated by $\data{Q}{b}{\lambda}+\mathcal{O}(\qubits{Q}+\data{J}{\qubits{Q}}{n})$.
	
	\subsection{Double-factorized Hamiltonian}
	\label{sec:qubitization_electronic_structure}
	
	The electronic Hamiltonian in first-quantization is
	\begin{align}
	H_{\text{first}}= \left(-\sum_{n\in\text{electrons}}\frac{\nabla_{n}^{2}}{2}-\sum_{m\in\text{nuclei}}\frac{Z_{m}}{|x_{n}-r_{m}|}\right)+\left(\sum_{n_{1},n_{2}\in\text{electrons}}\frac{1}{|x_{n_{1}}-x_{n_{2}}|}\right),
	\end{align}
	where $\nabla^2_n$ is the Laplace operator on the $n^{\text{th}}$ electron, $Z_m$ is the nuclear charge, and $r_m$ is the nucleus coordinate.
	By choosing a basis of orbitals $\psi_i(x)$, this implies the second-quantized representation
	\begin{align}
	H & =\sum_{ij,\sigma}h_{ij}a_{(i,\sigma)}^{\dagger}a_{(j,\sigma)}+\frac{1}{2}\sum_{ijkl,\sigma\rho}h_{ijkl}a_{(i,\sigma)}^{\dagger}a_{(k,\rho)}^{\dagger}a_{(l,\rho)}a_{(j,\sigma)},\\\nonumber
	h_{ij} & =\int\psi^*_{i}(x_{1})\left(-\frac{\nabla^{2}}{2}-\sum_{m}\frac{Z_{m}}{|x_{1}-r_{m}|}\right)\psi_{j}(x_{1})\mathrm{d^{3}}x_{1},\\\nonumber
	h_{ijlk} & =\int\psi^*_{i}(x_{1})\psi_{j}(x_{1})\left(\frac{1}{|x_{1}-x_{2}|}\right)\psi^*_{k}(x_{2})\psi_{l}(x_{2})\mathrm{d^{3}}x_{1}\mathrm{d^{3}}x_{2}.
	\end{align}
	This is equal to the single-factorized $H_{\text{CD}}$ and double-factorized $H_{\text{DF}}$ Hamiltonians
	\begin{align}
	H_{\text{CD}} & \doteq{\sum_{ij,\sigma}\tilde{h}_{ij}a_{(i,\sigma)}^{\dagger}a_{(j,\sigma)}}
	+{\frac{1}{2}\sum_{r\in[R]}\left(\sum_{ij,\sigma}L_{ij}^{(r)}a_{(i,\sigma)}^{\dagger}a_{(j,\sigma)}\right)^{2}},\quad \tilde{h}_{ij}\doteq h_{ij}-\frac{1}{2}\sum_{l}h_{illj},
	\\\nonumber
	H_{\text{DF}} & ={\sum_{ij,\sigma}\tilde{h}_{ij}a_{(i,\sigma)}^{\dagger}a_{(j,\sigma)}}
	+{\frac{1}{2}\sum_{r\in[R]}\left(\sum_{ij,\sigma}\sum_{m\in[M^{(r)}]}\lambda^{(r)}_m \vec{R}_{m,i}^{(r)}\cdot \vec{R}_{m,j}^{(r)\top}a_{(i,\sigma)}^{\dagger}a_{(j,\sigma)}\right)^{2}},
	\end{align}
	where the $\lambda_m^{(r)}$ are eigenvalues of  $L^{(r)}$.

	The dominant cost in qubitizing a Hamiltonian is the synthesis of a unitary quantum circuit $\block{H/\alpha}$ with the property
	\begin{align}
	\block{H/\alpha} & =\left(\begin{array}{cc}
	H/\alpha & \cdots\\
	\vdots & \ddots
	\end{array}\right).
	\end{align}
	This implies that $\block{H/\alpha}\ket 0_{a}\ket{\psi}_{s}=\ket 0_{a}\frac{H}{\alpha}\ket{\psi}_{s}+\ket{0\psi^{\perp}}_{as}$, where the unnormalized residual state $\ket{0\psi^{\perp}}_{as}$ has no support on the ancilla state $\ket 0_{a}$.
	As $H$ is embedded in a contiguous block of the unitary, we say that $B[H/\alpha]$ `block-encodes' the Hamiltonian $H$.
	
	The block-encoding framework supports the addition and multiplication of encoded matrices.
	For instance, one may add block-encoded Hamiltonians $H=\sum_{j}H_{j}$ with a new normalizing
	constant such as $\alpha=\sum_{j}\alpha_{1}$, using the following quantum circuit
	\begin{align}
	\block{H/\alpha} & =\left(\sum_{j}\bra{j}_a\otimes \ii_s\sqrt{\frac{\alpha_{j}}{\alpha}}\right)\left(\sum_{j}\ketbra{j}{j}_a\otimes \block{H_{j}/\alpha_{j}}\right)\left(\sum_{j}\sqrt{\frac{\alpha_{j}}{\alpha}}\ket{j}_a \otimes{\ii}_s\right).
	\end{align}
	As mentioned in eq.~(\ref{notation:BlockEncodingMultiply}), block-encoded Hamiltonians may also be multiplied to obtain $\block{H_{2}H_{1}/\alpha_{2}\alpha_{1}}$.
	In general, these addition and multiplication operations increase
	the ancilla register size in a straightforward manner.
	Using these
	ingredients, one may block-encode matrices specified
	by a variety of common input models, such as sparse matrix oracle,
	linear-combination-unitaries, or other quantum data structures.
	
	In qubitizing $H_{\text{DF}}$, we find it more natural to work in the Majorana representation of the fermion operators
	\begin{align}
	\label{eq:Majorana}
	\gamma_{p,0} & =a_{p}+a_{p}^{\dagger},\quad
	\gamma_{p,1}  =-i\left(a_{p}-a_{p}^{\dagger}\right),\quad
	\left\{ \gamma_{p,x},\gamma_{q,y}\right\}  =2\delta_{pq}\delta_{xy}\ii.
	\end{align}
	Thus
	$a_p=(\gamma_{p,0}+i\gamma_{p,1})/2$ and
	$a_q=(\gamma_{p,0}-i\gamma_{p,1})/2$,
	where $p\doteq (i,\sigma)$ is a combined orbital and spin index.
	Some useful identities are
	\begin{align}
	a_{(i,\sigma)}^{\dagger}a_{(j,\sigma)} + a_{(j,\sigma)}^{\dagger}a_{(i,\sigma)}
	=
	\begin{cases}
	\ii+i\left(\gamma_{i,\sigma,0}\gamma_{i,\sigma,1}\right),&i=j,\\
	\frac{i}{2}\left(\gamma_{i,\sigma,0}\gamma_{j,\sigma,1}+\gamma_{j,\sigma,0}\gamma_{i,\sigma,1}\right),&i\neq j,
	\end{cases}
	\end{align}
	which implies the Majorana representation for one-electron Hamiltonian
	\begin{align}
	\sum_{ij,\sigma}L_{ij}a_{(i,\sigma)}^{\dagger}a_{(j,\sigma)}
	&=\sum_{i}L_{ii}\ii + \onebody_{L},
	\quad
	\onebody_{L}\doteq\frac{i}{2}\sum_{ij}\sum_{\sigma}L_{ij}\gamma_{i,\sigma,0}\gamma_{j,\sigma,1}
	\end{align}
	that separates into a trivial identity component, and a non-trivial one-electron component $\onebody_{L}$.
	As $L$ is a symmetric matrix, it has the eigendecomposition $L=\sum_{k}\lambda_k \vec{R}_{k}\cdot \vec{R}_{k}^\top$.
	Thus we may diagonalize the one-electron Hamiltonian to obtain
	\begin{align}
	\label{eq:one_body_Majorana}
	\onebody_{L}&
	=\frac{i}{2}\sum_{k}\lambda_{k}\sum_{\sigma}\majorana_{\vec{R}_{k},\sigma,0}\majorana_{\vec{R}_{k},\sigma,1},\quad \majorana_{\vec{u},\sigma,x}\doteq\sum_{j}u_j \gamma_{j,\sigma,x}.
	\end{align}
	One should verify that $\majorana_{\vec{u},\sigma,x}^2=\ii$, which follows from~\cref{eq:Majorana} and the unit-length normalization~\cref{eq:one_body_Majorana} of $\vec{u}$.

	Thus the spectral norm $\left\|\onebody_{L}\right\|=\normSH{L}$ is seen to be the one-norm of the eigenvalues of $L$.
	Substituting the Majorana representation into the doubly-factorized Hamiltonian results in
	\begin{align}
	\label{eq:HamiltonianDoubleFactorized2}
	H_\text{DF}&=
	\left(\sum_{i}h_{ii}-\frac{1}{2}\sum_{il}h_{illi}+\frac{1}{2}\sum_{il}h_{llii}\right)\mathcal{I}
	+\onebody_{L^{(-1)}}
	+\frac{1}{2}\sum_{r}\onebody^2_{L^{(r)}},
	\\\nonumber
	L^{(-1)}_{ij}&\doteq h_{ij}-\frac{1}{2}\sum_{l}h_{illj}+\sum_{l}h_{llij}.
	\end{align}
	
	Our strategy for block-encoding~\cref{eq:HamiltonianDoubleFactorized2} is to build it up from block-encodings of its component pieces.
	Let us further collect terms as follows.
	\begin{align}
	\label{eq:HamiltonianCollected}\nonumber
	H_\text{DF}&=\left(\sum_{i}h_{ii}-\frac{1}{2}\sum_{il}h_{illi}+\frac{1}{2}\sum_{il}h_{llii}+\frac{1}{4}\sum_r\normSH{L^{(r)}}^2\right)\mathcal{I}+{\onebody_{L^{(-1)}}}+{\twobody_{H}},
	\\
	\twobody_{H}&\doteq\frac{1}{4}\sum_{r}\normSH{L^{(r)}}^2T_{2}\left[\frac{\onebody_{L^{(r)}}}{\normSH{L^{(r)}}}\right],
	\end{align}
	where $T_2(x)=2x^2-1$ is a Chebyshev polynomial of the first kind.
	The identity term only contributes a constant shift in energy and may be ignored.
	Observe that we have reduced the double-factorized Hamiltonian to sums of products of basis-rotated Majorana operators $\majorana_{\vec{u},\sigma,x}$.

	We thus block-encode the two-body term as follows.
	\begin{enumerate}
		\item Block-encode $\block{\gamma_{\vec{u},\sigma,x}}$ of the basis transformed Majorana operator~\cref{sec:BE_basis_transformed_Majorana}.
		\item Block-encode $\block{\gamma_{\vec{u},\sigma,0}\gamma_{\vec{u},\sigma,1}}$ by multiplying $\block{\gamma_{\vec{u},\sigma,x}}$~\cref{sec:BE_product_basis_transformed_Majorana}.
		\item Block-encode $\block{\frac{\onebody_{L}}{\normSH{L}}}$ by taking a linear combination of
		$\lambda_k\block{\gamma_{\vec{R}_{k},\sigma,0}\gamma_{\vec{R}_{k},\sigma,1}}$ over the eigenvalues and eigenvectors of $L$ and the spins~\cref{sec:BE_one_electron}.
		\item Block-encode $\block{T_2\left[\frac{\onebody_{L}}{\normSH{L}}\right]}$ by applying $\block{\frac{\onebody_{L}}{\normSH{L}}}$ twice using qubitization~\cite{Low2016Qubitization}.
		\item Block-encode $\block{\frac{\twobody_{H}}{\frac{1}{4}\sum_{r}\normSH{L^{(r)}}^2}}$ by taking a linear combination of $\normSH{L^{(r)}}^2\block{T_2\left[\frac{\onebody_{L^{(r)}}}{\normSH{L^{(r)}}}\right]}$ over the rank components of the two-electron tensor~\cref{sec:BE_twoe_electron}.
		\item Block-encode $\block{\frac{H_\text{DF}}{\normSH{L^{(-1)}}+\frac{1}{4}\sum_{r}\normSH{L^{(r)}}^2}}$ by adding one- and two-body terms~\cref{sec:BE_costs}.
	\end{enumerate}
	Generally, the cost of block-encoding the large number of two-electron terms dominate that of the smaller number of one-electron terms.
	A common theme throughout will be the use of symmetries.
	Many coefficients turn out to be identical.
	For instance, $L^{(r)}_{ij}=L^{(r)}_{ji}$, and is independent of spin.
	Moreover, the same coefficients $\vec{R}^{(r)}_{k}$ occur in both of $\gamma_{\vec{u},\sigma,0}\gamma_{\vec{u},\sigma,1}$.
	Wherever possible, we use this redundancy to optimize the number of bits of classical data we need to encode into our quantum circuits.
	These optimizations are combined with recent advances using ancillary qubits~\cite{Low2018Trading} to substantially reduce Toffoli gate count
	
	We now provide optimized quantum circuits that implement the above steps.
	We make heavy use of quantum circuit notation, outlined in~\cref{sec:preliminaries}.
	
	\subsubsection{Block-encoded, basis-transformed Majorana operator}
	\label{sec:BE_basis_transformed_Majorana}
	
	We synthesize the basis-transformed Majorana operator $\gamma_{\vec{u},\sigma,x}$ by conjugating $\gamma_{0,\sigma,x}$ with a sequence of unitary rotations.
	The required sequence of unitary rotations follows from the following observation.
	\begin{lemma}[Sum of Majorana operators by Majorana rotations]
		\label{lem:blockencodingMajoranaRot}
		Let the unitary $U_{\vec{u}}$ be the sequence
		\begin{align}
		U_{\vec{u},\sigma,x}\doteq V^{(0)}_{\vec{u},\sigma,x} V^{(1)}_{\vec{u},\sigma,x}\cdots V^{(N-2)}_{\vec{u},\sigma,x},\quad V^{(p)}_{\vec{u},\sigma,x}\doteq e^{\theta_p\gamma_{p,\sigma,x}\gamma_{p+1,\sigma,x}}.
		\end{align}
		Then for all $\sigma\in\{0,1\}$ and $x\in\{0,1\}$, there exists rotation angles $\theta_p\doteq\theta_{\vec{u},p}$ that are function of $\vec{u}$ such that $U^\dagger_{\vec{u},\sigma,x}\cdot \gamma_{0,\sigma,x}\cdot U_{\vec{u},\sigma,x} = \gamma_{\vec{u},\sigma,x}$.
	\end{lemma}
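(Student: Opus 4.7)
The plan is to view the conjugation by $U_{\vec{u},\sigma,x}$ as implementing a real orthogonal rotation on the $N$-dimensional vector space spanned by $\{\gamma_{p,\sigma,x}\}_{p=0}^{N-1}$, and to show that by an appropriate ``spherical coordinates'' choice of angles this rotation sends the first basis vector to $\vec{u}$. Fix $\sigma$ and $x$ throughout and abbreviate $\gamma_p \doteq \gamma_{p,\sigma,x}$; the other spin/parity sectors commute with every $V^{(p)}_{\vec{u},\sigma,x}$, so the argument is independent of $\sigma,x$ and, importantly, yields the \emph{same} angles $\theta_p$ for each $(\sigma,x)$ sector.

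First I would establish the elementary action of a single rotation. Using $(\gamma_p\gamma_{p+1})^2=-\ii$ from~\cref{eq:Majorana}, expand the exponential as $V^{(p)} = \cos\theta_p\,\ii + \sin\theta_p\,\gamma_p\gamma_{p+1}$. A direct computation then gives
\begin{align}
V^{(p)\dagger}\gamma_p V^{(p)} &= \cos(2\theta_p)\gamma_p + \sin(2\theta_p)\gamma_{p+1},\\
V^{(p)\dagger}\gamma_{p+1} V^{(p)} &= -\sin(2\theta_p)\gamma_p + \cos(2\theta_p)\gamma_{p+1},\\
V^{(p)\dagger}\gamma_q V^{(p)} &= \gamma_q \quad (q\neq p,p+1),
\end{align}
where the last identity follows because $\gamma_q$ anticommutes with each of $\gamma_p$ and $\gamma_{p+1}$ and hence commutes with their product. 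Thus conjugation by $V^{(p)}$ acts as a planar rotation by angle $2\theta_p$ in the $(\gamma_p,\gamma_{p+1})$ subspace of the real span of the Majoranas, and as the identity elsewhere.

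Next I would iterate: conjugating $\gamma_0$ by $V^{(0)}V^{(1)}\cdots V^{(N-1)}$ (interpreting any $V^{(p)}$ with $p+1\geq N$ as the identity, i.e.\ $\theta_{N-1}=0$) and using the planar-rotation formula, a straightforward induction on $p$ shows
\begin{align}
U^\dagger_{\vec{u},\sigma,x}\gamma_0 U_{\vec{u},\sigma,x} = \sum_{j=0}^{N-1} c_j(\vec\theta)\,\gamma_j,
\end{align}
where $c_0 = \cos(2\theta_0)$ and $c_j = \bigl(\prod_{k<j}\sin(2\theta_k)\bigr)\cos(2\theta_j)$ for $j\geq 1$ (with $\cos(2\theta_{N-1})\equiv 1$). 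This is precisely the hyperspherical parameterization of the unit sphere in $\mathbb{R}^N$, so for every unit vector $\vec{u}$ there exist angles $\theta_0,\ldots,\theta_{N-2}$ realizing $c_j(\vec\theta)=u_j$, and we may take them independent of $\sigma,x$. Since $\vec{R}^{(r)}_m$ in~\cref{eq:one_body_Majorana} is unit-normalized this covers the case of interest; for general $\vec{u}$ one simply rescales.

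The only real bookkeeping obstacle is handling the commutation of $V^{(p)}$ with $\gamma_q$ for $q<p$ (which is automatic because $V^{(p)}$ involves only later Majoranas, and any $\gamma_q$ with $q<p$ anticommutes with both factors in $\gamma_p\gamma_{p+1}$); this is what makes the induction telescope cleanly and guarantees that rotations applied later do not disturb the coefficients already fixed on $\gamma_0,\ldots,\gamma_{p-1}$. Everything else is an elementary consequence of the Clifford algebra relations.
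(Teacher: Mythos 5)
Your proposal is correct and follows essentially the same route as the paper's proof: expand $V^{(p)}=\cos\theta_p\,\ii+\sin\theta_p\,\gamma_p\gamma_{p+1}$, observe that conjugation acts as a planar rotation on the $(\gamma_p,\gamma_{p+1})$ pair and trivially elsewhere, telescope to obtain $u_p=\cos(2\theta_p)\prod_{j<p}\sin(2\theta_j)$, and solve the resulting chain (which you correctly identify as the hyperspherical parameterization of the unit sphere). Your explicit handling of the boundary index and your corrected sign in the $q=p+1$ case (the paper's~\cref{eq:MajoranaRotation} has a typo there; the orthogonal rotation should read $-\sin(2\theta_p)\gamma_p+\cos(2\theta_p)\gamma_{p+1}$) are minor improvements, not a different argument.
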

	\begin{proof}
		In the interests of clarity, we drop the $\sigma,x$ subscript in the following.
		By taking a Taylor series expansion, observe that $V^{(p)}_{\vec{u}}=\cos{(\theta_{p})}\ii+\sin{(\theta_p)}\gamma_{p}\gamma_{p+1}$.
		Thus
		\begin{align}
		\label{eq:MajoranaRotation}
		V^{(p)\dagger}_{\vec{u}}\gamma_{q}V^{(p)}_{\vec{u}}=
		\begin{cases}
		\gamma_{q},&\quad q\neq p,p+1,\\
		\cos{(2\theta_p)}\gamma_{p}+\sin{(2\theta_p)}\gamma_{p+1},&\quad q=p,\\
		\cos{(2\theta_p)}\gamma_{p+1}-\sin{(2\theta_p)}\gamma_{p},&\quad q=p+1.
		\end{cases}
		\end{align}
		Thus $U^\dagger_{\vec{u}}\cdot \gamma_{0}\cdot U_{\vec{u}}=\sum_{p\in[N]}u_p\gamma_{p}$, by choosing
		\begin{align}
		u_0=\cos{(2\theta_0)},\quad u_1=\sin{(2\theta_0)}\cos(2\theta_1),\quad\cdots\quad,u_p=\cos(2\theta_p)\prod_{j<p}\sin(2\theta_j).
		\end{align}
		We obtain the angles $\theta_p$ by recursively solving this linear chain of equations.
	\end{proof}
	
	As $\gamma_{\vec{u},\sigma,x}$ is unitary, it is also trivially its own block-encoding $\block{\gamma_{\vec{u},\sigma,x}}=\gamma_{\vec{u},\sigma,x}$.

	\subsubsection{Block-encoded product of Majorana operators}
	\label{sec:BE_product_basis_transformed_Majorana}
	
	We synthesize the product $\gamma_{\vec{u},\sigma,0}\gamma_{\vec{u},\sigma,1}$ using the implementation described in~\cref{lem:blockencodingMajoranaRot}.
	Observe that the commutator $[\gamma_{p,\sigma,x}\gamma_{p+1,\sigma,x},\gamma_{q,\rho,y}\gamma_{q+1,\rho,y}]$ is zero for all $\sigma\neq \rho$ or $x\neq y$.
	Thus all rotations in $U_{\vec{u},\sigma,0}$ and $U_{\vec{u},\sigma,1}$ commute with each other.
	This allows us to collect rotations in the product
	\begin{align}
	\label{eq:MajoranaRotationProductGates}
	\gamma_{\vec{u},\sigma,0}\gamma_{\vec{u},\sigma,1}=
	\left((V^{(0)}_{\vec{u},\sigma,0} V^{(0)}_{\vec{u},\sigma,1})\cdots (V^{(N-2)}_{\vec{u},\sigma,0} V^{(N-2)}_{\vec{u},\sigma,1})\right)^\dagger\gamma_{0,\sigma,0}\gamma_{0,\sigma,1}
	\underbrace{(V^{(0)}_{\vec{u},\sigma,0} V^{(0)}_{\vec{u},\sigma,1})\cdots (V^{(N-2)}_{\vec{u},\sigma,0} V^{(N-2)}_{\vec{u},\sigma,1})}
	_{U_{\vec{u},\sigma,0}U_{\vec{u},\sigma,1}}.
	\end{align}
	
	In a Pauli representation of the Majorana operators, each rotation
	\begin{align}
	\label{eq:CliffordConjugation}
	V_{\vec{u},p,x}=C_{p,x}^\dagger\cdot e^{i\theta_{\vec{u},p}Z}\cdot C_{p,x}
	\end{align}
	is implemented by a single-qubit $Z$ rotation conjugated by some Clifford gate $C_{p,x}$ such that $C_{p,x}^\dagger \cdot iZ\cdot C_{p,x} =\gamma_{p,\sigma,x}\gamma_{p+1,\sigma,x}$.
	We use the Jordan-Wigner representation (momentarily ignoring the spin index) that maps
	\begin{align}
	\label{eq:JW_nospin}
	\gamma_{p,0}\gamma_{p+1,0}\rightarrow -iY_{p}X_{p+1}, \quad \gamma_{p,1}\gamma_{p+1,1}\rightarrow iX_{p}Y_{p+1},
	\quad \gamma_{0,0}\gamma_{0,1} \rightarrow iZ_0.
	\end{align}
	In the example where $N=4$, the rotated Majorana operator is hence implemented by the circuit
	\begin{align}
	\label{eq:MajoranaRotationProductB}
	\image{MajoranaRotationProductB},
	\end{align}
	where the basis transformation is
	\begin{align}
	\label{eq:MajoranaRotationProduct}
	\image{MajoranaRotationProduct},
	\end{align}
	which, upon substituting the Jordan-Wigner representation and defining the $Z$ Pauli rotation $R_\theta\doteq e^{i\theta Z}$, is equal to
	\begin{align}
	\label{eq:MajoranaRotationProductC}
	\image{MajoranaRotationProductC}.
	\end{align}
	
	With the spin indices restored, the Jordan-Wigner representation maps
	\begin{align}
	\gamma_{p,1,0}\gamma_{p+1,1,0}\rightarrow -iY_{p+N}X_{p+1+N}, \quad \gamma_{p,1,1}\gamma_{p+1,1,1}\rightarrow iX_{p+N}Y_{p+1+N},
	\quad \gamma_{0,1,0}\gamma_{0,1,1} \rightarrow iZ_N,
	\end{align}
	for $\sigma=1$.
	The case $\sigma=0$ is identical to~\cref{eq:JW_nospin}.
	Thus the quantum circuit for
	\begin{align}
	\gamma_{\vec{u},1,0}\gamma_{\vec{u},1,1} = \prod_{j=0}^{N-1}\left(\textsc{SWAP}_{j\leftrightarrow j+N}\right)\cdot \gamma_{\vec{u},0}\gamma_{\vec{u},1}\prod_{j=0}^{N-1}\cdot\left(\textsc{SWAP}_{j\leftrightarrow j+N}\right),
	\end{align}
	is implemented by swapping all pairs of qubits $j\leftrightarrow j+N$.
	Thus the spin-multiplexed unitary $\ketbra{0}{0}\otimes\gamma_{\vec{u},0,0}\gamma_{\vec{u},0,1}+\ketbra{1}{1}\otimes\gamma_{\vec{u},1,0}\gamma_{\vec{u},1,1}$ is implemented by adding a control to the $\textsc{SWAP}$ gates, as illustrated below (for $N=4$).
	\begin{align}
	\label{eq:MajoranaRotationMultiplexE}
	\image{MajoranaRotationMultiplexE}.
	\end{align}
	
	As $\gamma_{\vec{u},\sigma,0}\gamma_{\vec{u},\sigma,1}$ is unitary, it is also trivially its own block-encoding $\block{\gamma_{\vec{u},\sigma,0}\gamma_{\vec{u},\sigma,1}}=\gamma_{\vec{u},\sigma,0}\gamma_{\vec{u},\sigma,1}$.

	\subsubsection{Block-encoded one-electron operator and its square}
	\label{sec:BE_one_electron}
	
	The one-electron operator is
	\begin{align}
	\onebody_{L}&\doteq\frac{1}{2}\sum_{k\in[K]}\lambda_{k}\sum_{\sigma}\majorana_{\vec{R}_{k},\sigma,0}\majorana_{\vec{R}_{k},\sigma,1},\quad L=\sum_{k\in[K]}\lambda_k\vec{R}_k\vec{R}_k^{\dagger}.
	\end{align}
	The block-encoding of $\onebody_{L}$ is accomplished by adding block-encodings of $\block{\gamma_{\vec{u},\sigma,0}\gamma_{\vec{u},\sigma,1}}$, each weighted by the eigenvalue $\lambda_k$, following the construction in~\cref{notation:BlockEncodingAdd}.
	This requires synthesizing the quantum state
	\begin{align}
	\ket{\vec{\lambda}}=\sum_{k}\sqrt{|\lambda_k|/\normSH{L}}\ket{k}\ket{\text{sign}[\lambda_k]}
	\end{align}
	and the multiplexed unitary
	\begin{align}
	\label{eq:multiplexedMajoranaProduct}
	Z\otimes \sum_{k}\sum_{\sigma}\ketbra{k}{k}\otimes \ketbra{\sigma}{\sigma}\otimes \block{\gamma_{\vec{u},\sigma,0}\gamma_{\vec{u},\sigma,1}}.
	\end{align}
	Note that we encode the sign of $\lambda_k$ in a single qubit using the Pauli $Z$ operation, as described in~\cref{eq:BlockEncodingWithSign}, which uses the fact $\bra{\text{sign}[\lambda_k]}Z\ket{\text{sign}[\lambda_k]}=\text{sign}[\lambda_k]$.
	In the following circuit diagrams, this sign qubit is assumed to be implicitly present and will not be shown.
	Given these two components, the block-encoding is implemented by the quantum circuit
	\begin{align}
	\label{eq:blockEncodingOneDoubleA}
	\image{blockencodingOneDoubleA}.
	\end{align}
	
	We now discuss the implementation of~\cref{eq:multiplexedMajoranaProduct} which is controlled by the spin index $\sigma$ and the eigenvalue index $k$.
	Control by the spin index remains unchanged from~\cref{eq:MajoranaRotationMultiplexE}.
	Control by the eigenvalue index is implemented by multiplexing the basis transformation
	\begin{align}
	\label{eq:MultiplexBasis}
	\sum_{k}\ketbra{k}{k}\otimes U_{\vec{R}_k,0}U_{\vec{R}_k,1}.
	\end{align}
	From~\cref{eq:MajoranaRotationProductC}, $U_{\vec{R}_k,0}U_{\vec{R}_k,1}$ is a sequence of $Z$-rotations by a $k$-dependent angle $\theta_{\vec{R}_k,p}$, each conjugated by a Clifford gate that is independent of $k$.
	Thus~\cref{eq:MultiplexBasis} may be implemented by multiplexing the phase-rotations
	\begin{align}
	\sum_{k}\ketbra{k}{k}\otimes e^{i\theta_{\vec{R}_k,p}Z},
	\end{align}
	as follows
	\begin{align}
	\label{eq:MajoranaRotationProductD}
	\image{MajoranaRotationProductD}.
	\end{align}
	
	We implement multiplexed phase-rotations using a data-lookup oracle $D_{p}\ket{k}\ket{z}=\ket{k}\ket{z\oplus\tilde{\theta}_{\vec{R}_k,p}}$ on $K$ $\beta$-bit entries.
	This oracle computes a $\beta$-bit binary representation of the rotation angle $\frac{\theta_{\vec{R}_k,p}}{2\pi} \approx \tilde{\theta}_{\vec{R}_k,p}= \sum_{b=0}^{\beta-1}\tilde{\theta}_{\vec{R}_k,p,b}/2^{1+b}\in[0,1-2^{-\beta}]$, and may be implemented according to~\cref{sec:datalookup}.
	With this oracle, we perform the following sequence
	\begin{align}
	\label{eq:MultiplexPhaseRotations}
	D_{p}\ket{k}\ket{0}\ket{\psi}
	&=\ket{k}\ket{\tilde{\theta}_{\vec{R}_k,p}}\ket{\psi}=\ket{k}\left(\bigotimes_{b=0}^{\beta-1}\ket{\tilde{\theta}_{\vec{R}_k,p,b}}\right)\ket{\psi}\\\nonumber
	\text{Controlled rotations}&\rightarrow\ket{k}\left(\bigotimes_{b=0}^{\beta-1}\ket{\tilde{\theta}_{\vec{R}_k,p,b}}\right)\prod_{b=0}^{\beta-1}e^{i2\pi\tilde{\theta}_{\vec{R}_k,p,b}Z}\ket{\psi}=\ket{k}\ket{\tilde{\theta}_{\vec{R}_k,p,b}}e^{i2\pi\tilde{\theta}_{\vec{R}_k,p}Z}\ket{\psi}\\\nonumber
	\text{Uncompute}&\rightarrow\ket{k}\ket{0}e^{i2\pi\tilde{\theta}_{\vec{R}_k,p}Z}\ket{\psi}.
	\end{align}
	This costs two queries to data-lookup $D_{p}$ and $\beta$ arbitrary single-qubit rotations.
	As the circuit for~\cref{eq:multiplexedMajoranaProduct} contains $4N$ multiplexed rotations, these rotations costs at most $8N$ queries to data-lookup on $K$ $\beta$-bit entries, and $4N\beta$ arbitrary single-qubit rotations.

	However, some optimizations are possible.
	First, the rotation angle bits computed once by $D_{p}$ may be used to implement both rotations $R_{\tilde{\theta}_{\vec{R}_k,p}}$ in~\cref{eq:MajoranaRotationProductD}.
	This reduces queries by a factor of two to $4N$.
	Second, the uncomputation of $\ket{\tilde{\theta}_{\vec{R}_k,p}}$ may be merged with the computation of $\tilde{\theta}_{\vec{R}_k,p+1}$.
	Instead of applying $D_{p}$ and $D_{p+1}$, we merge them into a single data-lookup oracle $D_{p,p+1}\ket{k}\ket{z}=\ket{p}\ket{z\oplus\tilde{\theta}_{\vec{R}_k,p}\oplus\tilde{\theta}_{\vec{R}_k,p+1}}$.
	This reduces queries by another factor of two to $2N$.
	Third, the data-lookup oracles may output $\kappa>\beta$ bits.
	Thus multiple angles may be written out at once, e.g. $\ket{z_0\oplus\tilde{\theta}_{\vec{R}_k,p}}\ket{z_1\oplus\tilde{\theta}_{\vec{R}_k,p+1}}\cdots\ket{z_{\kappa/\beta-1}\oplus\tilde{\theta}_{\vec{R}_k,p+\kappa/\beta-1}}$.
	These angles allow the application of more controlled-rotations for each application of data-lookup.
	This is detailed in~\cref{sec:programmableRotation}, and reduces the number of queries to $2\lceil\frac{N\beta}{\kappa}\rceil$, but now to data-lookup on $K$ $\kappa$-bit entries.
	Fourth, many of arbitrary single-qubit rotations are by identical angles (powers of $2$).
	These may be implemented exactly using $4N\beta$ Toffoli gates by the phase gradient technique~\cite{Gidey2018Addition}.
	This also requires a one-time cost of preparing and storing $\beta$ single-qubit resource states whose cost is negligible and thus ignored.
	Fifth, using the choice $\kappa=N\beta$, only one computation and uncomputation step is required.
	This can be advantageous as uncomputation as described in~\cref{sec:datalookup} has a Toffoli cost that can be smaller than computation when many ancillary qubits are used.
	
	We now bound the bits of precision $\beta$ required to approximate~\cref{eq:multiplexedMajoranaProduct} to an overall error of $\epsilon$ in spectral norm.
	As each rotation angle $\theta_{\vec{R}_k,p}$ is approximated to an error of at most $\pi2^{-\beta}$, the error of each unitary rotation compared to its binary approximation is \begin{align}\|R_{\theta_{\vec{R}_k,p}}-R_{2\pi\tilde{\theta}_{\vec{R}_k,p}}\|\le\|R_{\pi2^{-\beta}}-\ii\|=\|\cos{(2^{-\beta}\pi)}-i\sin{(2^{-\beta}\pi)}Z-\ii\|
	\end{align}
	The cumulative error  $\epsilon\le (4N)\pi2^{-\beta+1/2}$ of all rotations then follows by the triangle inequality on unitary operators $\|(\prod _j U_{j})-(\prod_j\tilde{U}_j)\|\le\sum_{j}\|U_j-\tilde{U}_j\|$.
	Thus the bits of precision required is
	\begin{align}
	\beta=\left\lceil\frac{1}{2}+\log_{2}\left(\frac{4N\pi}{\epsilon}\right)\right\rceil.
	\end{align}
	Including the error $\mu$ of state-preparation results in an $\epsilon'=\epsilon+\mu$-approximate block-encoding $\blockApprox{\frac{\onebody_{L}}{\normSH{L}}}{\epsilon+\mu}$ instead of the exact  $\block{\frac{\onebody_{L}}{\normSH{L}}}$.
	A simple choice of error budgeting is $\epsilon=\epsilon'/2$ and $\mu=\epsilon'/2$, though this may be optimized with more weight on $\epsilon$ as its data-lookup oracles are more costly.
	
	\begin{table}
		\caption{\label{tab:OneElectron} Resources used to block-encode a one-electron operator $\blockApprox{\frac{\onebody_{L}}{\normSH{L}}}{\epsilon}$ on $N$ orbitals where $L$ has $K$ eigenvectors. The parameters $\beta=\left\lceil5.152+\log_{2}\left(\frac{N}{\epsilon}\right)\right\rceil$ and $\mu=2+\lceil\log_{2}(1/\epsilon)\rceil$.}
		\centering
		\begin{ruledtabular}\begin{tabular}{cccc}
				Operation & \# & Toffoli cost each & Ancillary qubits\\
				\hline
				Data-lookup~\cref{lem:DataLookupAncilla}&$1$&$\data{K}{N\beta}{\lambda}+\dataU{K}{N\beta}{\lambda}$&$2\lceil\log_2{K}\rceil+N\beta(1+\lambda)$\\
				Arbitrary rotations~\cite{Gidey2018Addition} & $4N\beta$ & $1$ & 0 \\
				Controlled-\textsc{SWAP}s & $2N$ & $1$ & 0 \\
				State-preparation~\cref{lem:StatePreparationGarbage} & $2$ & $\mu+\datadirty{K}{\qubits{K}+\mu+1}{N\beta(1+\lambda)+2N+1}$ & $\lceil\log_{2}{K}\rceil+2\mu+1$
		\end{tabular}\end{ruledtabular}
	\end{table}
	
	The overall cost of block-encoding the one-electron operator is summarized in~\cref{tab:OneElectron}, and is dominated by data-lookup in multiplexing the basis transformation rotations.
	As state-preparation is cheap in comparison, we choose an implementation that is assisted by $\kappa+\lambda+2N$ dirty qubits.
	Given $\blockApprox{\frac{\onebody_{L}}{\normSH{L}}}{\epsilon}$, qubitization in~\cref{def:qubitization} describes how it may be applied twice, hence doubling the cost, to block-encode $\blockApprox{T_2[\frac{\onebody_{L}}{\normSH{L}}]}{2\epsilon}$.
	This also applies a reflection about the all-zero state $\ket{0\cdots0}$ on $\mathcal{O}(\log_{K}+\mu)$ qubits, whose cost is negligible and thus ignored.

	\subsubsection{Block-encoded two-electron operator}
	\label{sec:BE_twoe_electron}
	In this section, we describe a block-encoding of the double-factorized Hamiltonian in~\cref{eq:HamiltonianCollected}.
	We focus on the two-body term, which is the most costly component.
	\begin{align}
	\twobody_{H}&\doteq\frac{1}{4}\sum_{r\in[R]}\normSH{L^{(r)}}^2 T_{2}\left[\frac{\onebody_{L^{(r)}}}{\normSH{L^{(r)}}}\right],
	\quad \overrightarrow{\Lambda_{\text{SH}}}\doteq (\normSH{L^{(1)}}^2,\normSH{L^{(2)}}^2,\cdots, \normSH{L^{(R)}}^2)^\top,\\
	\nonumber
	L^{(r)}&=\sum_{k\in[M^{(r)}]}\lambda_k^{(r)}\vec{R}_k^{(r)}\vec{R}_k^{(r)\top}.
	\end{align}
	At a high level, we achieve this block-encoding using the following circuit identities, which are defined in~\cref{sec:preliminaries}.
	\begin{align}
	\label{eq:blockencodingTwoDoubleFullA}\text{Add block-encodings~\cref{notation:BlockEncodingAdd}:}\;&\image{blockencodingTwoDoubleFullA},\\
	\label{eq:blockencodingTwoDoubleFullB}\text{Qubitization~\cref{eq:qubitization}:}\;&\image{blockencodingTwoDoubleFullB},\\
	\label{eq:blockencodingTwoDoubleFullC}\text{Add block-encodings~\cref{notation:BlockEncodingAdd}:}\;&\image{blockencodingTwoDoubleFullC}.
	\end{align}
	Compared to the one-electron operator in~\cref{eq:blockEncodingOneDoubleA}, we see that the only new element is multiplexing by another control variable $r$.
	Thus instead of the singly-multiplexed basis transformation in~\cref{eq:MajoranaRotationProductD}, we implement the doubly-multiplexed basis transform
	\begin{align}
	\label{eq:MajoranaRotationProductE}
	\image{MajoranaRotationProductE}.
	\end{align}
	Following the implementation of multiplexed-phase rotations in~\cref{eq:MultiplexPhaseRotations},~\cref{eq:MajoranaRotationProductE} can be implemented given a multiplexed data-lookup oracle $DD_p\ket{r}\ket{k}\ket{z}=\ket{r}\ket{k}\ket{z\oplus \theta_{\vec{R}_k^{(r)},p}}$.
	
	Two challenges make the implementation of $DD_p$ non-obvious.
	First, the number of entries $k\in[M^{(r)}]$ can depend on the index $r$.
	Rather than storing the worst-case number of $R\max_{r}M^{(r)}$ entries, storing only $\sum_{r\in[R]}M^{(r)}$ entries would minimize the cost of data-lookup.
	Second, the multiplexed data-lookup oracle should allow for the best possible trade-off between Toffoli gates and assisting ancillary qubits.
	Our implementation of $DD_p$ is detailed in~\cref{sec:MultiplexSparseDataLookup}.
	
	The next challenge is multiplexed state preparation of $\ket{\vec{\lambda}^{(r)}}$.
	By drawing~\cref{eq:blockencodingTwoDoubleFullA} in full and inserting identity in the middle,
	\begin{align}
	\label{eq:blockencodingTwoDoubleFullE}
	\image{blockencodingTwoDoubleFullE},
	\end{align}
	this suggests preparing the quantum state represented by
	\begin{align}
	\image{blockencodingTwoDoubleFullF}
	&=
	\sum_{r\in[R]}\sqrt{\frac{\normSH{L^{(r)}}^2}{\|\overrightarrow{\Lambda_{\text{SH}}}\|_1}}\ket{r}\ket{\vec{\lambda}^{(r)}}
	=
	\sum_{r\in[R]}\sqrt{\frac{\normSH{L^{(r)}}^2}{\|\overrightarrow{\Lambda_{\text{SH}}}\|_1}}\ket{r}\sum_{k\in[M^{(r)}]}\sqrt{\frac{\lambda^{(r)}_k}{\normSH{L^{(r)}}}}\ket{k}\nonumber\\
	&=
	\sum_{r\in[R]}\sum_{k\in[M^{(r)}]}\sqrt{\frac{\normSH{L^{(r)}}\lambda^{(r)}_k}{\|\overrightarrow{\Lambda_{\text{SH}}}\|_1}}\ket{r}\ket{k}.
	\end{align}
	However, it is difficult to apply the reflection in this case.
	
	Thus we use the multiplexed sparse data-lookup oracle of~\cref{lem:SparseDataLookup} in the state preparation routine of~\cref{lem:MultiplexSparseStatePreparationGarbage}.
	In this case, the desired state with $M=\sum_{r\in[R]}M^{(r)}$ coefficients can be created with a cost that scales with that outlined in~\cref{lem:MultiplexSparseStatePreparationGarbage}.
	
	\begin{table}
		\centering
		\caption{\label{tab:TwoElectron}Resources used to block-encode a two-electron operator $\blockApprox{\frac{\twobody_{H}}{\frac{1}{4}\sum_{r}\normSH{L^{(r)}}^2}}{\epsilon}$ on $N$ orbitals where $M$ is the number of eigenvectors in the doubly-factorized representation. Above, $\beta=\left\lceil5.652+\log_{2}\left(\frac{N}{\epsilon}\right)\right\rceil$, $\mu=\lceil2.5+\log_{2}(1/\epsilon)\rceil$, and $C=\datadirty{M}{2\qubits{\max_{r}M^{(r)}}+2\qubits{R}+\mu+1}{\lambda+\kappa+2N}$}
		\begin{ruledtabular}\begin{tabular}{cccc}
				Operation & Applications & Toffoli cost each & Ancillary qubits\\
				\hline
				Data-lookup~\cref{lem:SparseDataLookup}&$2$&$
				\data{M,R}{N\beta}{\lambda}+\dataU{M,R}{N\beta}{\lambda}$ & $\max{[\lceil\log_2 M\rceil, \lceil \log_2 R\rceil]} + \lambda N\beta$\\
				Arbitrary rotations~\cite{Gidey2018Addition} & $8N\beta$ & $1$ & 0 \\
				Controlled-\textsc{SWAP}s & $4N$ & $1$ & 0 \\
				State-preparation~\cref{lem:MultiplexSparseStatePreparationGarbage} & $4$ & $\mu+C$ & $\qubits{K}+2\mu+1$
		\end{tabular}\end{ruledtabular}
	\end{table}

	The overall cost of block-encoding the two-electron operator is summarized in~\cref{tab:TwoElectron}, and is dominated by data-lookup in multiplexing the basis transformation rotations.

	\subsubsection{Asymptotic costs}
	\label{sec:BE_costs}
	Although Toffoli gate costs may be obtained by summing over elements in~\cref{tab:TwoElectron}, it is informative to evaluate asymptotic costs for large $N$.
	Note that empirical fits by Peng et al.~\cite{Peng2017Cholesky} observe that $R=\mathcal{O}(N)$ and $M^{(r)}=\mathcal{O}(\log{N})$ when $N\gg 1000$, in the case where $N$ scales with the number of atoms in the systems considered.
	In our main text, we vary the active space $N\le 250$ for systems with a fixed number of atoms and observe a more conservative scaling of $R=\mathcal{O}(N^{1.5})$, $M^{(r)}=\mathcal{O}(N)$, and hence $M=\mathcal{O}(N^{2.5})$.
	We also choose the number of qubits that store the fermionic basis transform angles to be $\kappa=N\beta$, which is just a constant factor more than the $2N$ qubits representing spin-orbitals, as $\beta=\left\lceil5.652+\log_{2}\left(\frac{N}{\epsilon}\right)\right\rceil$.

	Thus the computation of data-lookup has a cost
	\begin{align}
	\data{M,R}{N\beta}{\lambda}&\le2\datadirty{R}{\qubits{M}}{\qubits{M}+N\beta+\lambda+2N}+\data{M}{N\beta}{\lambda}+2\qubits{M}+\mathcal{O}(1)\\\nonumber
	&=\data{M}{N\beta}{\lambda}+\mathcal{O}(\sqrt{R\qubits{M}})\\\nonumber
	&=\min_{\lambda'\in[0,\lambda]}\left(\frac{M}{1+\lambda'}+N\beta\lambda'\right)+\mathcal{O}(\sqrt{R\qubits{M}}),
	\end{align}
	and its uncomputation has a cost
	\begin{align}
	\dataU{M,R}{N\beta}{\lambda''}
	&=\min_{\lambda'\in[0,\lambda'']}\left(\frac{M}{1+\lambda'}+\lambda'\right)+\mathcal{O}(\sqrt{R\qubits{M}}).
	\end{align}
	As the number of clean ancilla qubits used by the computation step is $\sim\lambda'N\beta$, the uncomputation step may choose $\lambda''=\lambda'N\beta$.
	With this choice, the Toffoli cost of $\dataU{M,R}{N\beta}{\lambda''}$ is always a factor of at least $\sqrt{N\beta}$ smaller than $\data{M,R}{N\beta}{\lambda}$, so long as $M=\Omega(N\beta)$, and is sub-dominant.
	For instance, in the case of $\lambda=0$, we obtain almost-linear scaling with respect to $M$ as follows.
	\begin{align}
	\data{M,R}{N\beta}{0}&=M+\mathcal{O}(\sqrt{R\qubits{M}})=\mathcal{O}(N\log{N}).
	\end{align}
	Choosing $\lambda=\Theta(\sqrt{M/(N\beta)})$ minimizes this as $\data{M,R}{N\beta}{\Theta(\sqrt{M/(N\beta)})}=\mathcal{O}(\sqrt{MN\beta})+\mathcal{O}(\sqrt{R\qubits{M}})$.
	%
	
	
	State-preparation has a cost that is sub-dominant to data-lookup.
	It suffices to use $\mu=\lceil2.5+\log_{2}(1/\epsilon)\rceil$ bits of precision.
	\begin{align}
	\state{M}{2\qubits{R}+\mu+1}{}&=\mu+\datadirty{M}{b}{\lambda+\kappa+2N}\\\nonumber
	&=\min_{\lambda'\in[0,\lambda+N(2+\beta)]}\left(2\frac{M}{1+\lfloor\frac{\lambda'}{b}\rfloor}+4b\left\lfloor\frac{\lambda'}{b}\right\rfloor\right)+\mu,
	\end{align}
	where $b=2\qubits{\max_{r\in[R]}M^{(r)}}+2\qubits{R}+\mu+1=\mathcal{O}(\log{N}+\mu)$.
	As the number of dirty qubits $N(2+\beta) \gg b$,
	\begin{align}
	\state{M}{2\qubits{R}+\mu+1}{}&\lesssim \min\left[M, 4\sqrt{2Mb}\right]+\mu=\mathcal{O}(\sqrt{M\log{(1/\epsilon)}}).
	\end{align}
	
	Thus the total number of Toffoli gates is
	\begin{align}
	&2(\data{M,R}{N\beta}{\lambda}+\dataU{M,R}{N\beta}{\lambda})+4(\state{M}{2\qubits{R}+\mu+1}{}+N(2\beta+1))\\\nonumber
	&\le \min_{\lambda'\in[0,\lambda]}\left(\frac{2M}{1+\lambda'}+{2\lambda'}N\beta+4N(2\beta+1)\right)+\mathcal{O}(\sqrt{R\qubits{M}}+\sqrt{M\log{(1/\epsilon)}}).
	\end{align}
	
	Note that converting this block-encoding to a quantum walk using the qubitization procedure~\cref{def:qubitization} requires an additional reflection $\textsc{REF}$ about the state $\ket{0\cdots0}$ on the ancillary register of the block-encoding.
	This reflection may be implemented using a number of Toffoli gates that is at most equal to the number of qubits comprising the ancillary register. 
	Importantly, this ancillary register excludes any additional clean or dirty qubits that are used in intermediate steps, such as in data-lookup. 
	Thus the cost of $\textsc{REF}$ is subdominant to that of the block-encoding.
	
	\section{Matrix Schatten and entrywise $1$-norm}
	Let $h\in\mathbb{C}^{N\times N}$ be any Hermitian matrix.
	This matrix has eigenvalues $h\cdot U_k=\sigma_k U_k$, where the orthonormal $k^{\text{th}}$ eigenvector $U_k$ can be viewed as the $k^{th}$ column of some unitary matrix $U\in\mathbb{C}^{N\times N}$.
	We now prove that the following inequality is true
	\begin{align}
	\frac{1}{N}\normEW{h}\le\normSH{h}\le\normEW{h},
	\end{align}
	where
	\begin{align}
	\normSH{h}\doteq\sum_{k\in[N]}|\sigma_k|,\quad \normEW{h}\doteq\sum_{j,k\in[N]}|h_{jk}|.
	\end{align}
	Moreover, we show that this inequality is tight. We have not been able to find a proof of this statement in the literature, so our derivation may be of independent interest.
	
	To prove the lower bound, observe that
	\begin{align}
	\sum_{k\in[N]}|\sigma_k|
	\ge \sqrt{\sum_{k\in[N]}|\sigma_k|^2}
	=\sqrt{\sum_{j,k\in[N]}|h_{jk}|^2}
	\ge \frac{1}{N}\sum_{j,k\in[N]}|h_{jk}|.
	\end{align}
	The middle equality follows from the definition of the Frobenius norm, and the first and last inequalities follow from the usual vector norm inequality $\sqrt{\sum_{j\in[M]}|x_j|^2}\le \sum_{j\in[M]}|x_j|\le \sqrt{M\sum_{j\in[M]}|x_j|^2}$ for any vector $x$.
	This is an equality when every entry in $h$ is one, and is hence tight.
	
	To prove the upper bound, observe that
	\begin{align}
	\sum_{k\in[N]}|\sigma_k|
	&=\sum_{k\in[N]}| (U_k)^\dagger\cdot h\cdot U_k|
	=\sum_{k\in[N]}\left|\sum_{i,j\in[N]} U^{\dagger}_{ki}h_{ij} U_{jk}\right|
	\le \sum_{i,j\in[N]}\left(\sum_{k\in[N]}\left| U^{\dagger}_{ki}\right|\left|U_{jk}\right|\right)| h_{ij}|\\\nonumber
	&\le \sum_{i,j\in[N]}|h_{ij}|.
	\end{align}
	The final inequality follows from applying the Cauchy-Schwartz inequality on the term in round brackets.
	Using the fact that any row or column of any unitary has unit $2$-norm,
	\begin{align}
	\sum_{k\in[N]}\left| U^{\dagger}_{ki}\right|\left|U_{jk}\right|
	\le
	\sqrt{\sum_{k\in[N]}\left| U^{\dagger}_{ki}\right|^2}\sqrt{\sum_{k\in[N]}\left| U_{jk}\right|^2}=1.
	\end{align}
	This upper bound is also an equality by choosing $h$ to contain only a single non-zero entry that lies on the diagonal, and is hence tight.

	\section{M06-L/def2-SVP optimized Cartesian coordinates for stable intermediate I}
	\label{sec:coords-I}
	The M06-L/def2-SVP optimized Cartesian coordinates of stable intermediate I given in the Supporting Information of Ref.~\cite{wesselbaum15} were erroneous and we obtained the correct ones in a private communication from Dr. Markus H\"olscher. We reproduce them here for further reference.
	\begin{verbatim}
	100
	
	H         5.400100000000     -3.818200000000     -1.311700000000
	H         4.155000000000     -2.978300000000     -3.299400000000
	C         4.393600000000     -3.410200000000     -1.197400000000
	C         3.697900000000     -2.939300000000     -2.308200000000
	H         4.317900000000     -3.761200000000      0.932900000000
	C         3.788500000000     -3.373200000000      0.059400000000
	C         2.410900000000     -2.418300000000     -2.164100000000
	H         1.899900000000     -2.055300000000     -3.059100000000
	H        -1.522700000000     -6.160300000000     -1.987500000000
	H        -2.707500000000     -6.642600000000      0.150600000000
	H         6.870300000000      1.841700000000      0.070400000000
	H         5.886300000000     -0.074000000000     -1.189900000000
	C         2.503900000000     -2.857300000000      0.201800000000
	C        -1.511200000000     -5.428600000000     -1.176900000000
	C        -2.172700000000     -5.699200000000      0.022400000000
	C         1.799100000000     -2.357700000000     -0.904400000000
	H        -0.296300000000     -4.039200000000     -2.277800000000
	C         5.790200000000      1.776700000000     -0.076800000000
	C        -0.827200000000     -4.228100000000     -1.339600000000
	C         5.240300000000      0.706700000000     -0.781200000000
	H         2.028700000000     -2.855300000000      1.184600000000
	C        -2.142100000000     -4.766000000000      1.055100000000
	H        -2.657200000000     -4.972500000000      1.996200000000
	H         5.373500000000      3.615100000000      0.978600000000
	C         4.951900000000      2.768800000000      0.431800000000
	C        -0.805400000000     -3.275100000000     -0.307900000000
	H         0.001800000000     -1.791600000000     -3.029400000000
	C        -1.462300000000     -3.558400000000      0.891500000000
	C         3.861600000000      0.623600000000     -0.968800000000
	H         3.456300000000     -0.237100000000     -1.503900000000
	H         1.680200000000      0.186000000000     -2.572700000000
	H        -1.451300000000     -2.818200000000      1.696600000000
	H         0.111400000000     -0.038800000000     -4.688600000000
	P         0.105700000000     -1.704000000000     -0.588100000000
	C         3.574200000000      2.692000000000      0.238400000000
	C        -0.526000000000     -1.214100000000     -2.256800000000
	C         3.009600000000      1.616600000000     -0.465900000000
	H         2.932700000000      3.481100000000      0.641600000000
	H        -1.563400000000     -1.577000000000     -2.300200000000
	H        -1.642400000000     -0.121600000000     -4.426900000000
	C         0.891300000000      0.908400000000     -2.309900000000
	C        -0.723200000000      0.400500000000     -4.122100000000
	H         1.056100000000      1.773900000000     -2.969600000000
	C        -0.491700000000      0.285600000000     -2.613500000000
	H        -0.817400000000      1.451300000000     -4.432100000000
	P         1.181300000000      1.455800000000     -0.562300000000
	Ru       -0.047300000000      0.041300000000      0.958600000000
	H         1.827000000000      3.646000000000     -2.436100000000
	H        -5.061500000000     -2.431900000000      2.138800000000
	H        -3.264200000000     -0.743000000000      2.004000000000
	C        -4.719600000000     -2.084100000000      1.161400000000
	C        -3.716000000000     -1.124500000000      1.083500000000
	C        -1.633700000000      1.056800000000     -1.906800000000
	C         1.154200000000      4.050900000000     -1.674200000000
	H        -2.570100000000      0.942400000000     -2.474500000000
	C         0.687500000000      3.231600000000     -0.632200000000
	H         0.039700000000      1.450700000000      2.022500000000
	H        -6.054700000000     -3.375300000000      0.054400000000
	C        -5.277400000000     -2.610600000000     -0.004100000000
	H        -1.408400000000      2.135900000000     -1.932200000000
	P        -1.937200000000      0.628900000000     -0.136500000000
	C        -3.262700000000     -0.655000000000     -0.158400000000
	H         1.167400000000      6.016000000000     -2.553900000000
	C         0.794000000000      5.393400000000     -1.738000000000
	C        -4.842400000000     -2.149000000000     -1.243200000000
	C        -3.853900000000     -1.166300000000     -1.320200000000
	C        -0.122100000000      3.803300000000      0.353300000000
	H        -5.280600000000     -2.545000000000     -2.161700000000
	H        -0.484400000000      3.193700000000      1.183100000000
	H        -3.557200000000     -0.808300000000     -2.309000000000
	C        -0.034300000000      5.944900000000     -0.757900000000
	C        -2.891400000000      2.068700000000      0.492900000000
	C        -0.485300000000      5.149800000000      0.291700000000
	H        -2.541300000000      1.552100000000      2.567000000000
	H        -3.463700000000      2.849500000000     -1.453300000000
	H        -0.318600000000      6.998000000000     -0.811400000000
	C        -3.024300000000      2.250000000000      1.876400000000
	C        -3.530900000000      2.968800000000     -0.369700000000
	H        -1.127000000000      5.571000000000      1.068900000000
	C        -3.765000000000      3.313500000000      2.386400000000
	C        -4.267400000000      4.036400000000      0.141900000000
	H        -3.859700000000      3.439500000000      3.467100000000
	H        -4.755200000000      4.732400000000     -0.543900000000
	C        -4.384000000000      4.213900000000      1.519100000000
	H        -4.962100000000      5.050200000000      1.917600000000
	H        -0.952100000000     -0.951500000000      1.897300000000
	C         1.382600000000     -1.559900000000      3.369100000000
	C         3.440300000000     -0.516100000000      3.819500000000
	C         2.732600000000     -1.853200000000      3.979000000000
	H         0.855300000000     -2.435000000000      2.959000000000
	H         0.698800000000     -1.064900000000      4.082100000000
	H         4.534000000000     -0.585300000000      3.873900000000
	H         3.119200000000      0.180400000000      4.609200000000
	H         3.250700000000     -2.640200000000      3.408000000000
	H         2.664400000000     -2.195000000000      5.019400000000
	O         1.665600000000     -0.662300000000      2.280600000000
	C         2.957100000000     -0.044500000000      2.462600000000
	H         2.840700000000      1.047200000000      2.388100000000
	H         3.612800000000     -0.372800000000      1.637900000000
	H        -0.485500000000      0.871800000000      2.422700000000
	\end{verbatim}
	
	\section{PBE/def2-TZVP optimized Cartesian coordinates of intermediates and transition states}
	\subsection{Stable intermediate I}
	\begin{verbatim}
	100
	
	H     5.3467652   -4.0620811   -1.5408059
	H     4.0926440   -3.1077082   -3.4724386
	C     4.3733646   -3.5936827   -1.3864493
	C     3.6714082   -3.0599560   -2.4668243
	H     4.3466242   -3.9621961    0.7431586
	C     3.8127382   -3.5354622   -0.1077529
	C     2.4188585   -2.4692755   -2.2728935
	H     1.8982865   -2.0719643   -3.1452685
	H    -1.2303386   -6.3183450   -1.9390001
	H    -2.3770141   -6.8302644    0.2162593
	H     6.8894103    1.9218885   -0.0035308
	H     5.9314313    0.0065669   -1.2845387
	C     2.5645983   -2.9459478    0.0853854
	C    -1.2664022   -5.5702254   -1.1455328
	C    -1.9074697   -5.8571319    0.0630226
	C     1.8479758   -2.4005269   -0.9928855
	H    -0.1392416   -4.1388533   -2.2801310
	C     5.8092962    1.8284340   -0.1278109
	C    -0.6597761   -4.3310091   -1.3395956
	C     5.2727557    0.7583394   -0.8455377
	H     2.1307566   -2.9221419    1.0848827
	C    -1.9386697   -4.8957205    1.0724767
	H    -2.4361609   -5.1112241    2.0195865
	H     5.3587395    3.6295988    0.9806459
	C     4.9524295    2.7841309    0.4226824
	C    -0.6965107   -3.3523507   -0.3302362
	H    -0.0341901   -1.8215111   -3.0690435
	C    -1.3390064   -3.6491474    0.8753093
	C     3.8905920    0.6406711   -1.0111176
	H     3.5035862   -0.2172111   -1.5606581
	H     1.6854069    0.1946881   -2.6002976
	H    -1.3758827   -2.8904869    1.6590155
	H     0.1002470   -0.0584394   -4.7135150
	P     0.1555732   -1.7405875   -0.6273468
	C     3.5724202    2.6731840    0.2513664
	C    -0.5320094   -1.2444568   -2.2769909
	C     3.0210291    1.6010240   -0.4716906
	H     2.9209553    3.4398669    0.6738715
	H    -1.5758438   -1.5926867   -2.2691807
	H    -1.6572136   -0.1365786   -4.4433076
	C     0.8887494    0.9035529   -2.3328536
	C    -0.7336832    0.3839739   -4.1478944
	H     1.0248231    1.7778900   -2.9853773
	C    -0.4950701    0.2645326   -2.6290296
	H    -0.8248285    1.4381977   -4.4496939
	P     1.1802478    1.4418906   -0.5767848
	Ru   -0.0362848    0.0070243    0.9119775
	H     1.8212552    3.6119472   -2.4832596
	H    -5.1810334   -2.3446956    2.1435413
	H    -3.2016593   -0.8755944    1.9833979
	C    -4.8598719   -1.9775745    1.1673356
	C    -3.7477001   -1.1434336    1.0769211
	C    -1.6402163    1.0360418   -1.9181179
	C     1.2043586    4.0438832   -1.6928281
	H    -2.5718953    0.9180300   -2.4893150
	C     0.7332662    3.2400964   -0.6383483
	H    -0.0098568    1.4084727    1.9225571
	H    -6.4276167   -3.0022283    0.0834920
	C    -5.5584608   -2.3460940    0.0148140
	H    -1.4088869    2.1127432   -1.9239510
	P    -1.9316460    0.5878885   -0.1460388
	C    -3.3208409   -0.6412765   -0.1628399
	H     1.3004971    6.0156855   -2.5548840
	C     0.9255300    5.4088262   -1.7289309
	C    -5.1381297   -1.8649201   -1.2242237
	C    -4.0347117   -1.0109405   -1.3118186
	C    -0.0024344    3.8482106    0.3847007
	H    -5.6743639   -2.1451819   -2.1324256
	H    -0.3751631    3.2487936    1.2152503
	H    -3.7533400   -0.6366137   -2.2967743
	C     0.1786679    5.9998088   -0.7051680
	C    -2.8846015    2.0320140    0.5076653
	C    -0.2785850    5.2188262    0.3545650
	H    -2.5050619    1.5265310    2.5789523
	H    -3.5109659    2.7932772   -1.4293835
	H    -0.0387882    7.0686817   -0.7335780
	C    -3.0151135    2.2055321    1.8935469
	C    -3.5734918    2.9065438   -0.3467612
	H    -0.8581185    5.6709513    1.1612919
	C    -3.8039041    3.2316844    2.4142423
	C    -4.3560825    3.9389862    0.1746224
	H    -3.8959846    3.3477833    3.4954999
	H    -4.8804404    4.6123560   -0.5054398
	C    -4.4734473    4.1060444    1.5550677
	H    -5.0901960    4.9096892    1.9606077
	H    -0.9280923   -1.0292973    1.7973158
	C     1.3378154   -1.5319181    3.4011766
	C     3.2728388   -0.3202031    4.0156118
	C     2.6497891   -1.7175315    4.1406225
	H     0.8939997   -2.4524575    3.0001828
	H     0.5828711   -1.0207579    4.0231085
	H     4.3644121   -0.3235517    4.1291005
	H     2.8553795    0.3490331    4.7819271
	H     3.2715028   -2.4723203    3.6357164
	H     2.5036511   -2.0320824    5.1819182
	O     1.6798343   -0.6894853    2.2644245
	C     2.8450502    0.1293127    2.6183662
	H     2.5545055    1.1877945    2.5764244
	H     3.6086816   -0.0587448    1.8513541
	H    -0.5264219    0.7815522    2.3441392
	\end{verbatim}
	\subsection{Stable intermediate II}
	\begin{verbatim}
	90
	
	H    -3.5339194   -5.8134021    0.7154303
	H    -2.8828583   -4.5857087    2.7848281
	C    -2.7878389   -5.0187876    0.6707413
	C    -2.4249571   -4.3298811    1.8276514
	H    -2.4367085   -5.2430098   -1.4508818
	C    -2.1745456   -4.6972246   -0.5430550
	C    -1.4648308   -3.3145957    1.7730738
	H    -1.1972253   -2.8074961    2.7010195
	H     4.0447509   -5.0032186    1.5248357
	H     4.6257704   -5.4865551   -0.8498353
	H    -6.9424288   -1.0572568   -0.8489737
	H    -5.6541614   -2.1885011    0.9605540
	C    -1.2131616   -3.6888161   -0.5962258
	C     3.5324311   -4.4778357    0.7172409
	C     3.8577104   -4.7485084   -0.6135181
	C    -0.8501479   -2.9745776    0.5597285
	H     2.2928246   -3.3649200    2.0694393
	C    -5.9424138   -0.7091293   -0.5867670
	C     2.5443401   -3.5405037    1.0224599
	C    -5.2212192   -1.3414937    0.4256799
	H    -0.7092122   -3.4805713   -1.5422166
	C     3.1943722   -4.0721958   -1.6392311
	H     3.4419659   -4.2786302   -2.6819584
	H    -5.9318348    0.8799617   -2.0540790
	C    -5.3756389    0.3750716   -1.2625045
	C     1.8842706   -2.8451860   -0.0030466
	H     0.5482637   -1.8718378    2.8271484
	C     2.2180869   -3.1221608   -1.3355503
	C    -3.9409031   -0.8959159    0.7676030
	H    -3.4025233   -1.4184765    1.5573864
	H    -1.7520289   -0.5852893    2.4090356
	H     1.7216499   -2.5731918   -2.1386833
	H    -0.2276327   -0.3572974    4.5490732
	P     0.4870886   -1.7047662    0.3801736
	C    -4.0998357    0.8234416   -0.9225842
	C     0.8525218   -1.1077287    2.0974671
	C    -3.3662972    0.1971043    0.1037136
	H    -3.6794455    1.6861233   -1.4444495
	H     1.9502640   -1.0454284    2.1616566
	H     1.4427245    0.2226901    4.3439378
	C    -1.2653987    0.3736318    2.1828482
	C     0.3997269    0.3856686    4.0338455
	H    -1.7195915    1.1197769    2.8510531
	C     0.2529006    0.2656288    2.5028157
	H     0.0928914    1.3838068    4.3798787
	P    -1.6693102    0.8441027    0.4312741
	Ru    0.0415500    0.0787294   -1.0430842
	H    -3.4364971    2.3732688    2.2276231
	H     5.5508989   -1.1292919   -1.7375133
	H     3.1714975   -0.4324962   -1.7053632
	C     5.1686562   -0.5511105   -0.8947219
	C     3.8293119   -0.1582569   -0.8800472
	C     1.0515757    1.4409814    1.8698538
	C    -2.9679142    3.0880432    1.5473566
	H     1.9394366    1.6412793    2.4854269
	C    -2.0549238    2.6476564    0.5712613
	H    -0.4369552    1.4561661   -2.0038409
	H     7.0572918   -0.5236336    0.1563813
	C     6.0113482   -0.2132042    0.1644635
	H     0.4333868    2.3509809    1.9064705
	P     1.5743010    1.2220596    0.1041735
	C     3.3108499    0.5814438    0.1914567
	H    -4.0258849    4.7597303    2.3994422
	C    -3.3134529    4.4347265    1.6394014
	C     5.5093639    0.5318195    1.2339090
	C     4.1732011    0.9330985    1.2450332
	C    -1.5123513    3.5856436   -0.3140454
	H     6.1619527    0.8113919    2.0626371
	H    -0.8027279    3.2670739   -1.0783201
	H     3.8259940    1.5437210    2.0793896
	C    -2.7593723    5.3639604    0.7524551
	C     1.9325683    2.9291860   -0.4951383
	C    -1.8619483    4.9369995   -0.2252055
	H     2.2575338    2.2337655   -2.5225002
	H     1.7383905    3.9532013    1.4139813
	H    -3.0337675    6.4176549    0.8231227
	C     2.2606460    3.0950337   -1.8510831
	C     1.9739393    4.0452473    0.3535690
	H    -1.4265658    5.6533717   -0.9236661
	C     2.6072277    4.3501706   -2.3492191
	C     2.3212349    5.3021967   -0.1484847
	H     2.8621323    4.4615150   -3.4043972
	H     2.3465729    6.1612497    0.5240211
	C     2.6353199    5.4591196   -1.4987626
	H     2.9081739    6.4412764   -1.8877254
	H     1.1990551   -0.5574839   -2.0047565
	O    -1.5496481   -1.0274756   -2.4549650
	H     0.2386696    1.0535691   -2.4415570
	C    -2.5477139   -1.5735677   -2.7519342
	O    -3.5175158   -2.1258482   -3.0858970
	\end{verbatim}
	\subsection{Transition state II-III}
	\begin{verbatim}
	90
	
	H    -3.3110113    5.9892207    1.0614647
	H    -4.2699731    4.2457791    2.5646655
	C    -3.0206823    4.9443578    0.9411378
	C    -3.5551641    3.9679033    1.7884389
	H    -1.7138037    5.3312324   -0.7361600
	C    -2.1264645    4.5776821   -0.0631642
	C    -3.1811684    2.6340605    1.6402700
	H    -3.6277160    1.8806884    2.2931144
	H    -5.6160706   -2.7121710    1.1039665
	H    -6.9243089   -1.7456590   -0.7848797
	H     2.2922500    6.7135450   -1.2644906
	H     1.7977477    4.9663614   -2.9725257
	C    -1.7490437    3.2402931   -0.2142156
	C    -5.2161958   -1.8742172    0.5304563
	C    -5.9476568   -1.3334376   -0.5263815
	C    -2.2602892    2.2598538    0.6443485
	H    -3.4101655   -1.8109259    1.6817662
	C     2.1117394    5.6823443   -0.9566678
	C    -3.9621095   -1.3533817    0.8608104
	C     1.8365741    4.7034730   -1.9140823
	H    -1.0589769    2.9633107   -1.0083301
	C    -5.4252112   -0.2587683   -1.2511941
	H    -5.9898498    0.1730898   -2.0785599
	H     2.3737919    6.0988230    1.1483993
	C     2.1569557    5.3382049    0.3964698
	C    -3.4314232   -0.2746544    0.1400394
	H    -1.9093419    0.6457296    2.8766625
	C    -4.1811057    0.2707808   -0.9169211
	C     1.5998822    3.3856847   -1.5195481
	H     1.3618145    2.6317415   -2.2730301
	H     0.1324406    2.1468070    2.1220360
	H    -3.7991235    1.1276967   -1.4762929
	H    -0.1622500    0.9200354    4.5156734
	P    -1.7898612    0.4866969    0.4682011
	C     1.9314756    4.0197348    0.7944430
	C    -1.3581619   -0.0181335    2.1948070
	C     1.6445597    3.0334958   -0.1618108
	H     1.9840510    3.7707294    1.8557653
	H    -1.7398248   -1.0358147    2.3616180
	H    -0.3053609   -0.8547770    4.5168659
	C     0.8384432    1.3039855    2.0506034
	C     0.2504868   -0.0066745    4.0898762
	H     1.6781083    1.5528995    2.7144019
	C     0.1467645    0.0036826    2.5516048
	H     1.2993952   -0.0924259    4.4116776
	P     1.4154808    1.2704176    0.2981193
	Ru   -0.0580826   -0.0105605   -1.0282051
	H     3.4484813    0.7669848    2.5245745
	H    -3.2870024   -4.5938061   -1.2882619
	H    -1.8154783   -2.6079569   -1.3493182
	C    -2.5166489   -4.4718823   -0.5251153
	C    -1.6833561   -3.3513062   -0.5612275
	C     0.8531797   -1.2773087    2.0363862
	C     3.9135837    0.5869699    1.5546861
	H     0.5775641   -2.1172447    2.6901033
	C     3.1837715    0.7222675    0.3644458
	H     0.9114846   -0.7396226   -2.2942093
	H    -3.0277398   -6.2905288    0.5234630
	C    -2.3720066   -5.4190782    0.4879301
	H     1.9394546   -1.1417143    2.1330924
	P     0.5082569   -1.7662838    0.2799070
	C    -0.6882511   -3.1688492    0.4063762
	H     5.8094743    0.1187158    2.4675051
	C     5.2626919    0.2217700    1.5284322
	C    -1.3765958   -5.2519381    1.4551523
	C    -0.5332464   -4.1429149    1.4097780
	C     3.8478677    0.5010974   -0.8521713
	H    -1.2477823   -5.9948543    2.2439597
	H     3.3167662    0.6211521   -1.7996011
	H     0.2634056   -4.0610900    2.1519374
	C     5.9063693   -0.0053451    0.3122289
	C     2.0092923   -2.6851927   -0.2709110
	C     5.1945846    0.1439739   -0.8807738
	H     1.1253032   -3.0341700   -2.2209773
	H     3.1671390   -2.5809671    1.5682949
	H     6.9580223   -0.2951589    0.2922381
	C     1.9984905   -3.1903690   -1.5823841
	C     3.1239399   -2.9388468    0.5397158
	H     5.6875698   -0.0238491   -1.8396120
	C     3.0862125   -3.9048771   -2.0797414
	C     4.2143493   -3.6557303    0.0388926
	H     3.0601414   -4.2877262   -3.1012456
	H     5.0784948   -3.8334063    0.6811896
	C     4.2035194   -4.1325937   -1.2711213
	H     5.0591475   -4.6860065   -1.6611601
	H    -1.2075135   -0.8482710   -1.9275095
	O    -0.8679910    1.3150199   -2.6888645
	H     1.4073377   -0.0947416   -1.9902143
	C    -1.4523573    0.2818584   -3.0210830
	O    -2.1601399   -0.2683888   -3.8009511
	\end{verbatim}
	\subsection{Stable intermediate V}
	\begin{verbatim}
	101
	
	H     1.8269673   -6.4995226   -1.8331652
	H    -0.1987775   -5.8444899   -3.1316653
	C     1.3357599   -5.5452970   -1.6361930
	C     0.2004959   -5.1775756   -2.3656564
	H     2.7207290   -4.9627857   -0.0853206
	C     1.8313185   -4.6876922   -0.6553924
	C    -0.4354660   -3.9645232   -2.1074060
	H    -1.3415289   -3.7201738   -2.6648629
	H    -5.4381068   -3.4308995   -1.4499820
	H    -5.7231633   -4.1081567    0.9313397
	H     6.8126364   -2.2405036   -0.5533612
	H     5.0462895   -3.1389195   -2.0682447
	C     1.2007010   -3.4643685   -0.4040235
	C    -4.6524851   -3.2205525   -0.7223230
	C    -4.8130288   -3.5981606    0.6118353
	C     0.0608975   -3.0881173   -1.1250500
	H    -3.3956802   -2.2873749   -2.1843592
	C     5.8318240   -1.7643652   -0.5971816
	C    -3.4881512   -2.5699563   -1.1352722
	C     4.8442785   -2.2658724   -1.4452872
	H     1.6041304   -2.7885258    0.3494945
	C    -3.8030790   -3.3164865    1.5343161
	H    -3.9218460   -3.6006504    2.5814336
	H     6.3275628   -0.2279502    0.8421000
	C     5.5595456   -0.6389062    0.1844456
	C    -2.4662044   -2.2841338   -0.2159046
	H    -1.1415037   -1.6116831   -3.1391415
	C    -2.6401863   -2.6642794    1.1234962
	C     3.5908833   -1.6518536   -1.5113046
	H     2.8440551   -2.0769602   -2.1801291
	H     1.3623215   -0.9359586   -2.8199790
	H    -1.8736126   -2.4241663    1.8588721
	H    -0.1338313   -0.3032581   -4.8952739
	P    -0.8563124   -1.5244629   -0.7170093
	C     4.3112658   -0.0213328    0.1140267
	C    -1.2345476   -0.8090490   -2.3930810
	C     3.3046685   -0.5194259   -0.7333012
	H     4.1294584    0.8753389    0.7083816
	H    -2.2879545   -0.4913255   -2.3861758
	H    -1.5963034    0.6712922   -4.6105906
	C     1.1412825    0.1082683   -2.5575217
	C    -0.5327454    0.5648136   -4.3493954
	H     1.7583057    0.7306354   -3.2204188
	C    -0.3554164    0.3870354   -2.8286795
	H    -0.0043198    1.4621509   -4.7045371
	P     1.6863344    0.3895464   -0.8090269
	Ru    0.0389462   -0.0005865    0.7252379
	H     3.2250833    1.9918618   -2.8187053
	H    -5.5616749   -0.0730762    1.5080617
	H    -3.0995036    0.0692163    1.4036755
	C    -5.0912348    0.5200997    0.7222509
	C    -3.6967614    0.5926086    0.6606666
	C    -0.8058406    1.7022572   -2.1467946
	C     3.1041163    2.5953737   -1.9176497
	H    -1.6802305    2.1081867   -2.6727341
	C     2.3760418    2.1111900   -0.8156072
	H    -6.9628336    1.1363562   -0.1658733
	C    -5.8742870    1.1959233   -0.2124547
	H    -0.0032940    2.4459640   -2.2507600
	P    -1.2230027    1.5568335   -0.3491218
	C    -3.0659053    1.3453680   -0.3394307
	H     4.2796205    4.2051116   -2.7335698
	C     3.7218335    3.8443641   -1.8676434
	C    -5.2560505    1.9631166   -1.2035989
	C    -3.8653575    2.0450311   -1.2620198
	C     2.3018202    2.8984359    0.3427203
	H    -5.8584004    2.5116529   -1.9297298
	H     1.7439962    2.5460789    1.2092780
	H    -3.4174294    2.6879964   -2.0210913
	C     3.6439259    4.6211635   -0.7074815
	C    -1.1572670    3.2886783    0.2895171
	C     2.9387037    4.1425219    0.3957026
	H    -2.3118478    2.7421554    2.0422977
	H    -0.0363799    4.1968011   -1.3387761
	H     4.1381050    5.5932326   -0.6663737
	C    -1.8041316    3.5490818    1.5106727
	C    -0.5524518    4.3537888   -0.3921529
	H     2.8715419    4.7397278    1.3065679
	C    -1.8384943    4.8383125    2.0375311
	C    -0.5987907    5.6484078    0.1320905
	H    -2.3486429    5.0219847    2.9844605
	H    -0.1308477    6.4664138   -0.4182261
	C    -1.2372162    5.8945216    1.3471326
	H    -1.2740729    6.9067416    1.7528829
	C     0.6236532   -2.5200666    2.8855962
	C     2.4293019   -1.6723168    4.1713895
	C     1.6770189   -2.9768655    3.8829715
	H     0.3086067   -3.3025469    2.1820304
	H    -0.2608002   -2.1002196    3.3881616
	H     3.4302248   -1.8336599    4.5922643
	H     1.8592748   -1.0454939    4.8739104
	H     2.3496344   -3.7192823    3.4274900
	H     1.2272891   -3.4258756    4.7779230
	O     1.2614393   -1.4553573    2.1062169
	C     2.4874088   -1.0259966    2.7960327
	H     2.4919838    0.0700512    2.8166982
	H     3.3427331   -1.3902960    2.2082809
	C    -0.4839753    0.8530861    3.0969704
	O    -1.2519694   -0.0067675    2.5614182
	O     0.4999295    1.3293584    2.4521989
	H    -0.6736869    1.1882045    4.1351206
	\end{verbatim}
	\subsection{Stable intermediate VIII}
	\begin{verbatim}
	92
	
	H     5.3410691    4.6103038   -0.7286968
	H     5.6351375    2.6660762   -2.2625989
	C     4.6692268    3.7531301   -0.6607933
	C     4.8326629    2.6645963   -1.5231454
	H     3.5170406    4.5758757    0.9712987
	C     3.6517439    3.7331876    0.2910867
	C     3.9818701    1.5656267   -1.4308562
	H     4.1541688    0.7093404   -2.0863048
	H     5.3441144   -3.5263402   -1.2875126
	H     6.3265024   -3.4109990    0.9988512
	H    -0.2194183    7.1237700    1.5006299
	H     0.3818143    6.5065105   -0.8378647
	C     2.7938990    2.6325550    0.3767315
	C     4.9201146   -2.8513314   -0.5421202
	C     5.4694667   -2.7877908    0.7387700
	C     2.9447130    1.5387409   -0.4810763
	H     3.4202316   -2.1181527   -1.8906689
	C    -0.3175908    6.0908944    1.1634668
	C     3.8276699   -2.0491872   -0.8814573
	C     0.0166225    5.7455689   -0.1461321
	H     1.9949747    2.6417497    1.1182516
	C     4.9206702   -1.9178771    1.6848405
	H     5.3472060   -1.8581592    2.6875602
	H    -1.0603318    5.3701172    3.0625313
	C    -0.7863201    5.1079734    2.0394093
	C     3.2713351   -1.1743183    0.0622773
	H     2.1946169    0.0350266   -2.7286776
	C     3.8295682   -1.1182642    1.3496498
	C    -0.1124367    4.4242007   -0.5823675
	H     0.1555086    4.1841315   -1.6115250
	H     0.7015766    2.1672788   -1.9844840
	H     3.4102245   -0.4384432    2.0936124
	H     0.7653964    0.9882288   -4.3831132
	P     1.9113448    0.0104144   -0.3093269
	C    -0.9160337    3.7893734    1.6059889
	C     1.4246694   -0.3701459   -2.0562641
	C    -0.5756526    3.4303314    0.2915612
	H    -1.3039069    3.0346863    2.2937037
	H     1.4373852   -1.4647132   -2.1728465
	H     0.3132470   -0.7317237   -4.4733566
	C    -0.2371225    1.5929680   -1.9597140
	C     0.0395239    0.2389821   -4.0335098
	H    -0.9410518    2.0976545   -2.6364120
	C     0.0333062    0.1608490   -2.4939166
	H    -0.9527088    0.5195701   -4.4176326
	P    -0.8962277    1.6899576   -0.2303813
	Ru   -0.0238083    0.0640677    1.0843357
	H    -2.5842051    3.1984707   -2.1387174
	H     2.2239112   -5.1614667    0.6413664
	H     1.2762615   -2.8955149    0.9246553
	C     1.3794016   -4.8504076    0.0239379
	C     0.8463580   -3.5672878    0.1844869
	C    -1.1018803   -0.8114226   -2.0894157
	C    -3.2491400    2.6641909   -1.4576714
	H    -1.1186280   -1.6618118   -2.7870707
	C    -2.7300027    1.8238793   -0.4557841
	H    -1.4667938    0.0410227    1.9091058
	H     1.2550573   -6.7239207   -1.0431248
	C     0.8398346   -5.7222469   -0.9197954
	H    -2.0642370   -0.2869640   -2.1904927
	P    -1.0195287   -1.4892020   -0.3653207
	C    -0.2343198   -3.1473190   -0.5990082
	H    -5.0079336    3.5173888   -2.3599551
	C    -4.6232194    2.8624432   -1.5765175
	C    -0.2441571   -5.3130698   -1.7032069
	C    -0.7805561   -4.0379677   -1.5410167
	C    -3.6189847    1.2110202    0.4344335
	H    -0.6801801   -5.9938817   -2.4362601
	H    -3.2318789    0.5675248    1.2245666
	H    -1.6491382   -3.7494463   -2.1369265
	C    -5.5019440    2.2380120   -0.6867936
	C    -2.7446225   -2.0357360   -0.0188175
	C    -4.9962038    1.4189572    0.3213980
	H    -2.2503101   -2.5061510    2.0364853
	H    -3.5747692   -1.7216522   -2.0057686
	H    -6.5773453    2.4000441   -0.7765359
	C    -3.0238833   -2.5230440    1.2670865
	C    -3.7517322   -2.0852668   -0.9931031
	H    -5.6731647    0.9323251    1.0256158
	C    -4.2804323   -3.0393222    1.5770725
	C    -5.0128779   -2.5994562   -0.6799775
	H    -4.4773207   -3.4147045    2.5825244
	H    -5.7884109   -2.6233868   -1.4472268
	C    -5.2816081   -3.0765302    0.6026616
	H    -6.2665695   -3.4801891    0.8428596
	O     0.8694350   -1.3975618    2.5338386
	C     0.3062612   -1.6250898    3.6154095
	H     0.7705045   -2.2836480    4.3663928
	O    -0.8503584   -1.1356522    3.9761388
	H    -1.1598523   -0.5655116    3.1644263
	H     0.9748962    1.1899218    1.9454773
	H     0.1735184    1.2213257    2.3440611
	\end{verbatim}
	
	\subsection{Transition state VIII-IX}
	\begin{verbatim}
	90
	
	H    -4.6740691    5.2558103    0.7566860
	H    -3.6565780    4.3324926    2.8349996
	C    -4.1354118    4.3074042    0.7292914
	C    -3.5661585    3.7903366    1.8922326
	H    -4.4625976    3.9910424   -1.3853212
	C    -4.0157646    3.5992936   -0.4700558
	C    -2.8873823    2.5684966    1.8623977
	H    -2.4691171    2.1869670    2.7946513
	H    -5.9128906   -2.6923533    1.2233453
	H    -6.4076056   -2.8870920   -1.2100055
	H     1.1573864    6.8809743   -1.0129677
	H    -0.8506188    5.4569338   -1.4299762
	C    -3.3307875    2.3861081   -0.5044693
	C    -5.2652676   -2.2056811    0.4921476
	C    -5.5429165   -2.3139708   -0.8721107
	C    -2.7615192    1.8505448    0.6650799
	H    -3.9799328   -1.3760665    2.0008758
	C     1.1655870    5.8257319   -0.7358246
	C    -4.1653394   -1.4654646    0.9292547
	C     0.0437524    5.0287268   -0.9742557
	H    -3.2431514    1.8484092   -1.4501893
	C    -4.7123334   -1.6855437   -1.8016271
	H    -4.9237644   -1.7656228   -2.8692849
	H     3.1803520    5.8816053    0.0436688
	C     2.2991020    5.2658453   -0.1434690
	C    -3.3203178   -0.8399957    0.0003320
	H    -2.1425426    0.2094387    2.9427620
	C    -3.6018057   -0.9591481   -1.3700645
	C     0.0578379    3.6763202   -0.6261028
	H    -0.8283114    3.0659516   -0.8063629
	H    -0.2237945    2.0407436    2.2035585
	H    -2.9507921   -0.4807551   -2.1055095
	H    -0.4689263    0.7373564    4.6029011
	P    -1.9054598    0.2207814    0.5050528
	C     2.3198105    3.9137427    0.2033037
	C    -1.4836231   -0.3009075    2.2257558
	C     1.1963386    3.1045336   -0.0325657
	H     3.2192495    3.4885514    0.6490400
	H    -1.7275057   -1.3727420    2.2903341
	H    -0.3976348   -1.0403009    4.5706550
	C     0.5618104    1.2714340    2.1562441
	C     0.0646738   -0.1240085    4.1740517
	H     1.3560399    1.6141361    2.8354213
	C     0.0029848   -0.1002639    2.6344315
	H     1.1056714   -0.0885412    4.5286102
	P     1.1937713    1.3136515    0.4186758
	Ru   -0.0502612    0.0907209   -0.9359617
	H     3.2400038    1.0439519    2.6791411
	H    -2.2708536   -4.6301779   -2.2456786
	H    -0.9002952   -2.6033156   -1.8754508
	C    -1.7962634   -4.4684538   -1.2766370
	C    -1.0245996   -3.3266690   -1.0650562
	C     0.8745284   -1.2724616    2.1150515
	C     3.7308124    0.9132315    1.7139380
	H     0.6474571   -2.1747953    2.7003186
	C     3.0008941    0.9637346    0.5156672
	H     1.1979307   -0.0817098   -2.0463828
	H    -2.5683667   -6.2918788   -0.4107251
	C    -1.9641008   -5.3980079   -0.2486223
	H     1.9333601   -1.0382542    2.2997734
	P     0.7143326   -1.6350116    0.3026225
	C    -0.4035018   -3.0981142    0.1720287
	H     5.6638259    0.6708030    2.6390232
	C     5.1129118    0.7071425    1.6975290
	C    -1.3517570   -5.1807289    0.9870476
	C    -0.5717110   -4.0422139    1.1958857
	C     3.6904433    0.8118030   -0.6981318
	H    -1.4745639   -5.9043785    1.7944560
	H     3.1425629    0.8554028   -1.6419038
	H    -0.0857470   -3.9164201    2.1644913
	C     5.7854488    0.5561642    0.4846131
	C     2.3108267   -2.4370568   -0.1492510
	C     5.0698177    0.6126831   -0.7140876
	H     1.9134758   -2.2134053   -2.2624267
	H     2.9957102   -2.8609007    1.8708624
	H     6.8647676    0.3970375    0.4725288
	C     2.5912894   -2.6233298   -1.5105191
	C     3.1892928   -2.9729186    0.8033170
	H     5.5868648    0.4979821   -1.6679094
	C     3.7253951   -3.3253232   -1.9144325
	C     4.3328686   -3.6650056    0.3975521
	H     3.9261437   -3.4656270   -2.9778593
	H     5.0130722   -4.0682375    1.1496283
	C     4.6030751   -3.8443649   -0.9594408
	H     5.4939874   -4.3904368   -1.2732091
	O    -0.9132205    0.8654868   -2.6709122
	C     0.3095768    0.7863575   -3.1225920
	H     0.5452669    0.0469906   -3.9019490
	O     1.0473226    1.9109786   -3.3266879
	H     0.7127012    2.6178760   -2.7307738
	\end{verbatim}

	\subsection{Stable intermediate IX}
	\begin{verbatim}
	104
	
	H    -6.7995766   -0.3501329   -1.2158299
	H    -5.1753779   -0.9590749   -3.0073767
	C    -5.7494307   -0.0685308   -1.1249154
	C    -4.8414893   -0.4064133   -2.1277195
	H    -6.0099582    0.9370106    0.7720970
	C    -5.3053977    0.6476996   -0.0097862
	C    -3.4980684   -0.0377207   -2.0157358
	H    -2.8205216   -0.3190427   -2.8220142
	H    -3.2315181    5.5927273   -1.9725616
	H    -2.4924030    6.8084725    0.0777993
	H    -4.3211783   -5.4897958   -0.9008141
	H    -4.5905028   -3.3058516    0.2709790
	C    -3.9658356    1.0187301    0.0969376
	C    -2.6596967    5.0727789   -1.2020891
	C    -2.2471292    5.7533495   -0.0526568
	C    -3.0375664    0.6805307   -0.9029689
	H    -2.7130496    3.2024809   -2.2540851
	C    -3.4898211   -4.7843036   -0.8565248
	C    -2.3547852    3.7227087   -1.3636205
	C    -3.6390470   -3.5637666   -0.1975612
	H    -3.6478379    1.6145266    0.9537025
	C    -1.5240554    5.0741083    0.9266362
	H    -1.1936406    5.5946199    1.8269241
	H    -2.1420149   -6.0571500   -1.9711422
	C    -2.2684227   -5.1030952   -1.4567385
	C    -1.6182261    3.0276471   -0.3860160
	H    -1.4574439    1.1865906   -3.1168537
	C    -1.2101190    3.7221361    0.7568958
	C    -2.5737994   -2.6606659   -0.1356760
	H    -2.7017218   -1.7083980    0.3780825
	H    -0.9941388   -1.3447239   -2.8400148
	H    -0.6215437    3.2283114    1.5305027
	H    -0.2280357    0.0395821   -4.8662646
	P    -1.2756409    1.2251309   -0.6716396
	C    -1.2005443   -4.2086014   -1.3919092
	C    -0.6128586    1.2766046   -2.4182513
	C    -1.3442298   -2.9760656   -0.7291134
	H    -0.2464680   -4.4802099   -1.8464876
	H    -0.2166420    2.2928316   -2.5572826
	H     0.9118141    1.3848621   -4.6301559
	C     0.0392206   -1.1980355   -2.4947567
	C     0.6796760    0.3524804   -4.3289267
	H     0.6452415   -1.9024327   -3.0837550
	C     0.4805290    0.2535758   -2.8016234
	H     1.5082061   -0.2906873   -4.6610048
	P     0.0627704   -1.7810041   -0.7282679
	Ru    0.0930258    0.0068958    0.7993430
	H     2.4742873   -2.4438087   -2.4802423
	H     2.9855487    4.7401895    2.5258626
	H     2.6625124    2.3446909    2.0250055
	C     2.6682402    4.4368614    1.5266525
	C     2.4769153    3.0858398    1.2434949
	C     1.8347820    0.5919089   -2.1394964
	C     2.4861307   -3.0588992   -1.5798602
	H     2.2696554    1.4842751   -2.6149850
	C     1.4863220   -2.9369876   -0.6046473
	H     2.6290528    6.4549888    0.7479404
	C     2.4731109    5.3975275    0.5292998
	H     2.5495402   -0.2221114   -2.3300301
	P     1.8718447    0.8499631   -0.2966950
	C     2.0839300    2.6675862   -0.0408503
	H     4.2892360   -4.0661169   -2.1980894
	C     3.5229412   -3.9828688   -1.4259395
	C     2.0903948    4.9949559   -0.7493951
	C     1.8975109    3.6394096   -1.0325173
	C     1.5496737   -3.7572998    0.5369440
	H     1.9421909    5.7356658   -1.5368044
	H     0.7886846   -3.6667854    1.3108696
	H     1.6076595    3.3583150   -2.0455412
	C     3.5741800   -4.7951918   -0.2936153
	C     3.5876016    0.2876868    0.0868466
	C     2.5849085   -4.6787563    0.6869250
	H     4.4825058    1.9480644   -0.9984736
	H     3.0102813   -1.4404467    1.2339408
	H     4.3807546   -5.5206835   -0.1761200
	C     4.6637632    1.0289478   -0.4376464
	C     3.8482108   -0.8730074    0.8261054
	H     2.6168757   -5.3127650    1.5745904
	C     5.9757214    0.6073355   -0.2287307
	C     5.1677117   -1.2833586    1.0395582
	H     6.8018130    1.1906840   -0.6388095
	H     5.3600896   -2.1849733    1.6237376
	C     6.2309044   -0.5497987    0.5136163
	H     7.2594827   -0.8719464    0.6849303
	O     1.0670624   -1.2835636    2.0778444
	C     1.5871606   -0.8290624    3.2992178
	H     1.9242493   -1.6992758    3.8944679
	H     2.4705860   -0.1665927    3.1772942
	H    -0.1490045    1.4846206    1.6989277
	H     0.4885841    1.0944823    2.1179673
	H     0.8475740   -0.2803425    3.9228606
	O    -1.7466699   -0.6275675    2.2837152
	C    -1.7317061   -1.8827200    3.0229225
	C    -2.2224936    0.3401404    3.2524878
	C    -3.0380220   -1.8690967    3.8225342
	H    -1.6541370   -2.6937791    2.2919727
	H    -0.8446431   -1.9038831    3.6731363
	C    -3.3722768   -0.3612382    3.9797136
	H    -1.3999265    0.5896606    3.9475022
	H    -2.5124323    1.2461145    2.7073037
	H    -2.9148452   -2.3747266    4.7888091
	H    -3.8360056   -2.3876118    3.2749787
	H    -3.4298618   -0.0474345    5.0299026
	H    -4.3342064   -0.1240359    3.5066121
	\end{verbatim}
	
	\subsection{Stable intermediate XVIII}
	\begin{verbatim}
	90
	
	H    -2.5470040    6.3737910   -0.0663331
	H    -2.8123688    5.2837003    2.1590310
	C    -2.3575811    5.3050645    0.0461674
	C    -2.5053953    4.6951169    1.2928448
	H    -1.8557613    5.0087803   -2.0352058
	C    -1.9710474    4.5402664   -1.0564995
	C    -2.2738967    3.3255410    1.4375671
	H    -2.4225212    2.8750703    2.4198606
	H    -6.4658186   -0.4517580    1.3317265
	H    -7.1063609   -0.4963390   -1.0755395
	H     2.6557529    6.2942052   -1.4687730
	H     2.6184172    5.8584372    0.9834332
	C    -1.7294505    3.1746660   -0.9123259
	C    -5.7147847   -0.2699971    0.5610415
	C    -6.0732010   -0.2947159   -0.7877950
	C    -1.8803888    2.5510336    0.3364899
	H    -4.1455477    0.0191627    1.9948402
	C     2.3956306    5.3032142   -1.0935693
	C    -4.3952745   -0.0058546    0.9334925
	C     2.3751741    5.0591297    0.2813670
	H    -1.4301507    2.5835298   -1.7811090
	C    -5.1048465   -0.0547732   -1.7657306
	H    -5.3774960   -0.0715606   -2.8223592
	H     2.1024472    4.4592222   -3.0617405
	C     2.0867078    4.2748854   -1.9863009
	C    -3.4181977    0.2319862   -0.0454710
	H    -2.0467422    0.9808698    2.7867311
	C    -3.7839051    0.2069953   -1.4010590
	C     2.0524934    3.7902287    0.7651273
	H     2.0614547    3.6161788    1.8424160
	H     0.2799551    2.1595239    2.2435168
	H    -3.0318347    0.3759459   -2.1729904
	H    -0.4329712    1.0976984    4.5735037
	P    -1.6937413    0.7161533    0.3812452
	C     1.7531024    3.0089158   -1.5045840
	C    -1.5279993    0.2401619    2.1618835
	C     1.7341262    2.7546438   -0.1254665
	H     1.5072438    2.2099149   -2.2089384
	H    -2.0783239   -0.7063544    2.2815788
	H    -0.8311301   -0.6372895    4.6187758
	C     0.8360055    1.2104261    2.1943488
	C    -0.1214917    0.1031833    4.2204333
	H     1.6924210    1.3170373    2.8744149
	C    -0.0796827    0.0562400    2.6813970
	H     0.8692688   -0.1126510    4.6478115
	P     1.4151070    1.0356322    0.4486581
	Ru   -0.0120573   -0.0641878   -0.9092097
	H     3.3997536    0.3976754    2.7128807
	H    -3.9488523   -3.7342434   -1.3914406
	H    -2.0900837   -2.0952718   -1.3008549
	C    -3.2197666   -3.7905063   -0.5815364
	C    -2.1728248   -2.8665122   -0.5318431
	C     0.4819087   -1.3320907    2.2709993
	C     3.8678869    0.2287837    1.7421868
	H     0.0202657   -2.1082765    2.8983765
	C     3.1569307    0.4221504    0.5485563
	H    -4.1575089   -5.4876614    0.3702908
	C    -3.3363978   -4.7698197    0.4048525
	H     1.5622408   -1.3565187    2.4750974
	P     0.2047409   -1.7607250    0.4927662
	C    -1.2317972   -2.9166010    0.5050001
	H     5.7458365   -0.3063248    2.6569492
	C     5.2073851   -0.1709298    1.7172452
	C    -2.3945911   -4.8347603    1.4358905
	C    -1.3456095   -3.9175421    1.4847999
	C     3.8216208    0.2062141   -0.6715346
	H    -2.4733336   -5.6066754    2.2032278
	H     3.2927936    0.3596796   -1.6134471
	H    -0.6056583   -4.0041204    2.2830725
	C     5.8566025   -0.3811118    0.5004738
	C     1.5535622   -2.9109735   -0.0092705
	C     5.1576602   -0.1900365   -0.6944923
	H     0.3720305   -3.5806934   -1.7031940
	H     2.9739559   -2.5290542    1.5906045
	H     6.9048341   -0.6835812    0.4825197
	C     1.3385830   -3.6429203   -1.1953357
	C     2.7672232   -3.0758877    0.6711517
	H     5.6572457   -0.3448820   -1.6522310
	C     2.3197484   -4.5014654   -1.6923373
	C     3.7421841   -3.9473198    0.1775979
	H     2.1314131   -5.0683171   -2.6056658
	H     4.6776283   -4.0699228    0.7260186
	C     3.5274686   -4.6540340   -1.0055431
	H     4.2921683   -5.3327582   -1.3861909
	O    -0.9205047   -0.1437941   -2.7216529
	C     0.0671504   -0.6760065   -3.4936715
	H    -0.1305676   -1.7179961   -3.8145929
	O     1.2946578   -0.6984889   -2.6269831
	H     1.5840222   -1.6252772   -2.4886499
	H     0.3195803   -0.0480147   -4.3664138
	\end{verbatim}
	
	\newpage
	\bibliographystyle{apsrev4-2}

	\bibliography{begin,Refs_QComp_sup,end}
\end{document}